\documentclass[12pt]{article}
\usepackage[english]{babel}
\usepackage{csquotes}
\usepackage[nodisplayskipstretch]{setspace}
\usepackage{booktabs}
\usepackage{multirow}
\usepackage{multicol}
\usepackage{tabu}
\usepackage[T1]{fontenc}
\usepackage{amssymb}
\usepackage{enumerate}
\usepackage{tikz}
\usetikzlibrary{arrows}
\usepackage{bbm}
\usepackage{subcaption}
\usepackage{tabularx}
\usepackage{siunitx}
\usepackage{adjustbox} 
\usepackage[labelfont=bf,format=plain,justification=raggedright,singlelinecheck=false]{caption}
\usepackage{comment}
\usepackage{natbib}
\usepackage{pdflscape}
\usepackage{afterpage}
\usepackage{makecell}

\usepackage[a4paper,top=2.54cm,bottom=2.54cm,left=2.58cm,right=2.58cm,marginparwidth=1.75cm]{geometry}

\usepackage{amsmath}
\usepackage{amsthm}
\usepackage{thmtools, thm-restate}
\usepackage{graphicx}
\usepackage[colorlinks=true, allcolors=blue]{hyperref}
\usepackage{floatrow}
\usepackage{longtable}

\DeclareMathOperator{\Tr}{Tr}

\makeatletter
\renewenvironment{abstract}{%
    \if@twocolumn
      \section*{\abstractname}%
    \else 
      \begin{center}%
        {\bfseries \Large\abstractname\vspace{\z@}}
      \end{center}%
      \quotation
    \fi}
    {\if@twocolumn\else\endquotation\fi}
\makeatother

\theoremstyle{plain}

\newtheorem*{theoremnamed}{Theorem}

\newtheorem{lemma}{Lemma}
\newtheorem{prop}{Proposition}
\theoremstyle{definition}
\newtheorem{definition}{Definition}[section]
\theoremstyle{remark}

\newtheorem{assumption}{Assumption}[section]
\newtheorem{proposition}{Proposition}[section]
\newtheorem*{namedassumption}{Assumption}

\newenvironment{customproof}[1]{%
  \par\noindent\textit{Proof of #1:} \rmfamily\ignorespaces
}{\hfill$\square$\par}

\begin{document}
\title{Empirical Challenges with Peers-of-Peers Instruments in the Linear-In-Means Model}

\author{Nathan Canen\thanks{Corresponding Author. \\ Department of Economics, University of Warwick, and CEPR. Coventry, CV4 7AL, United Kingdom.\\E-mail: \url{Nathan.Canen@warwick.ac.uk}} \and Shantanu Chadha\thanks{Department of Economics, University of Warwick. Coventry, CV4 7AL, United Kingdom. \\ E-mail: \url{Shantanu.Chadha@warwick.ac.uk}.}}

\maketitle

\begin{abstract}
In the linear-in-means model, endogeneity arises naturally due to the reflection problem. A common solution is to use Instrumental Variables (IVs) based on higher-order network links, such as using friends-of-friends' characteristics. In this paper, we show that such instruments are unlikely to work well in many applied settings due to a specific sparse/dense-network mechanism: in extremely sparse networks, friends-of-friends instruments may become degenerate, while in denser 
networks they may still provide too little first-stage information. This implies that the IVs may be weak or that the first-stage estimand is undefined. We use random graph theory to characterize the rates at which these issues arise for a benchmark class of random graphs. 
This allows us to link network topology to first-stage information accumulation and to identify when such instruments are likely to perform well. We show how existing weak-IV robust inference can be adapted to this environment, and how scaling the network provides an alternative specification that can mitigate some of these challenges. 
We provide extensive Monte Carlo simulations and revisit empirical applications, showing the prevalence of such issues in empirical practice, and how our results apply. 

\end{abstract} 

\textbf{Keywords}: Social Networks, Weak Instruments, Peer Effects, Identification

\newpage
\doublespacing

\section{Introduction}\label{S1}

Humans are inherently social beings, frequently interacting in groups and affecting the behavior of their friends and neighbors. Thus, it comes as no surprise that the study of peer effects has become extremely popular in empirical research in Economics and Social Sciences more generally\footnote{Examples include peer effects in education, e.g., \cite{Sacerdote2001, calvo2009peer}, worker productivity and labor markets (e.g., \cite{MasMoretti2009, caria2024village}), Finance (e.g., \cite{LoughranSchultz2004} on the impact of IPOs on competitors); development and public goods (e.g., \cite{acemoglu2015}), among many others. See \cite{bramoulle2020peer} for a recent survey.}, especially with the emergence of high quality data on social interactions and advances in network statistics. 

The most popular model of peer effects is arguably the linear-in-means model illustrated in equation \eqref{struc_model_intro}. In this model, one's outcome ($Y_i$) depends linearly on the mean outcome across $i$'s group, denoted $\bar{Y}_i$. The outcome may also depend on the exogenous characteristics of one's self $X_i$, the group itself ($P(i)$), the average characteristics of the group, and $\varepsilon_{i}$ capturing the unobserved error\footnote{For example, in \cite{calvo2009peer}, the outcome (grades of student $i$) can depend on the average of its peers' grades ($\beta$), their own personal traits/parental education ($X_i$), the average group characteristics through $\delta$ and an unobserved error term ($\varepsilon_{i}$).}:
\begin{align}\label{struc_model_intro}
    Y_{i} = \beta \bar{Y}_i +  X_{i}'\boldsymbol{\gamma} + \bar{X}_i'\boldsymbol{\delta}   + \varepsilon_{i},
\end{align}
where $\bar{Y}_i = \frac{\sum_{j\in P_i}Y_{j}}{n_i}$ and $\bar{X}_i = \frac{\sum_{j\in P_i}X_{j}}{n_i}$, and $n_i$ is the number of $i$'s peers. Even if the error term, $\varepsilon_{i}$, is exogenous to $X_i$ and to the peer groups, endogeneity still arises in this model due to the simultaneous determination of behavior within the groups: the reflection problem (\cite{manski1993identification}). After all, an increase in $\varepsilon_i$ affects one's $Y_i$, which then affects others' $Y_j$, leading to correlation between the average group outcome and the error.

Empirical papers have typically solved this challenge by exploiting additional information such as external variables for instruments or randomization.\footnote{For example, \cite{Sacerdote2001} exploits randomization of individuals to groups, \cite{brock2001interactions} exploits a specific block structure of groups, while other works use Instrumental Variables (IV) based on historical (e.g., \cite{acemoglu2015}) or other external restrictions (\cite{ioannides2003neighbourhood} and \cite{durlauf2008understanding}).} Unfortunately, these solutions are unavailable for many settings. Yet, in a seminal contribution, \cite{bramoulle2009identification} showed that $(\beta, \boldsymbol{\delta}', \boldsymbol{\gamma}')'$ could be identified using only the model above, the existing $X_i$ and the network structure itself, $\mathbf{G}$, where its $ij$ element $g_{ij} > 0$ represents that $j \in P(i)$. They proposed using instruments based on the characteristics of friends-of-friends or higher-order connections (i.e. $\mathbf{G}^k \mathbf{X}$, where $\mathbf{X}$ is the matrix stacking $X_i'$).\footnote{Such instruments are valid because they are excluded from \eqref{struc_model_intro} and $\mathbf{X}$ and $\mathbf{G}$ are exogenous. They are relevant because, in general, multiplying $G_i$ on both sides of \eqref{struc_model_intro} implies that $\bar{Y}_i$ is a function of $G_i^2 X_i$, where $G_i, G_i^2$ represent $i$'s friends and friends-of-friends, respectively.} This solution provided an easy to implement identification strategy and a natural estimator based on instruments readily available to the researcher.

In this paper, we show that "friends-of-friends" instrumental variables are not a panacea for linear-in-means applications in Economics. In particular, we study a specific mechanism related to network topology for why peers-of-peers instruments may not perform well, and characterize these mechanisms in a specific random-graph benchmark. 

First, we show that these instruments can fail for different reasons depending on the density of the network: whether it is sparse or sufficiently dense. To do so, we link the network topology to the first-stage estimand from instrumenting \eqref{struc_model_intro}, and distinguishing between two related but different problems: weak identification and asymptotic degeneracy of the first-stage estimand. In sparse networks, higher-order neighborhoods may contain too little stable variation, making the first-stage weak or, in extreme cases, causing the population first-stage estimand to become ill-defined. In dense or near-homogeneous networks, higher-order links may add little independent variation beyond the original network, again weakening the first-stage. This may result in instruments with low variance and potentially low covariance with the endogenous variable. 

Sparse networks are a prevalent feature in empirical work, so this mechanism is likely to be present in many applied settings. For example, Table \ref{tab:deg_sum} shows the degree distribution statistics for each of the networks $(\mathbf{G})$ as well as the squared counterparts ($\mathbf{G}^2$) used to construct the instrument in salient examples in political economy (alumni networks in the U.S. Congress, \cite{battaglini2018}) and development economics (network of allies/enemies in the Second Congo War, \cite{Konig2017}). The networks are indeed very sparse: the modal degree for $\mathbf{G}$ is 0. In fact, this is also true for $\mathbf{G}^2$! We visualize these networks in Figures \ref{fig:bat_graph}-\ref{fig:konig_graph_allies} and we revisit them further below.



As a second contribution, we characterize regimes where these challenges may arise using a benchmark class of random graphs: Erdős--Rényi random graphs (\cite{erdds1959random}). These graphs are extensively used in both theoretical economics and econometrics as tractable models of network formation (see \cite{jackson10, mele17, campbell24} for examples and discussions). Using tools from random graph theory we characterize how rates at which network sparsity/density, captured by the average degree $d_n$, induce weakness or ill-defined first-stages as network size grows. We formally show that the identification strategy of \cite{bramoulle2009identification} performs well when networks exhibit intermediate levels of connectivity, but first-stage issues may arise as the network becomes extremely sparse or extremely dense. In the baseline model, when the adjacency matrix $\mathbf{G}$ is left unscaled, we find that when the average degree ($d_n$) decreases with \(n\), the variance of the instrument tends to zero faster than its covariance with the endogenous regressor. Then, the first-stage becomes asymptotically ill-defined. On the other extreme, when $d_n$ grows with \(n\), $\mathbf G^{2}$ becomes asymptotically collinear with $\mathbf G$. Thus, the instrument adds little independent variation and first-stage relevance vanishes. Between these extremes, when average degree is bounded, first-stage strength depends on how quickly sampling noise dissipates which, in turn depends on network dependence and the assumed variance structure. We further discuss alternatives that may mitigate these challenges, including scaling the adjacency matrix, which acts to regularize the spectrum of the latter. These, however, come with their own challenges, such as the comparability of the scaled model to the original one.


For settings where weak network-based instruments arise (and scaling is insufficient/infeasible), we adapt standard weak-IV robust inference procedures to the case with peers-of-peers instruments. In particular, we implement the Anderson--Rubin test (\cite{anderson1949estimation}) and combine it with the network-dependent variance estimator of \cite{kojevnikov2021limit}. We show that this yields asymptotically valid inference under weak-IV asymptotics. This approach explicitly takes the network cross-sectional dependence into account and, in Monte Carlo simulations, is shown to perform well. Yet, we further show that using a simpler variance estimator that assumes homoskedasticity still performs very well in finite samples in sparse settings. This is because, in sparse settings, network spillovers are limited and this restricts heteroskedasticity.

We show that accounting for the very sparse nature of some important networks in economics (and their effects on estimation and inference) can lead to different conclusions in empirical examples. In particular, we revisit the setting of \cite{Konig2017} who studied the effects of allied (or enemy) networks across ethnicities in Africa and their effects on conflict. To account for endogeneity, they propose instruments that use the network structure of such linkages. Due to the sparsity of the network shown in Table \ref{tab:deg_sum} and Figure \ref{fig:konig_graph_allies}, their original work already suggested that their instrument was weak. Our proposed inference yields confidence intervals for the parameters of interest ($\beta$ above) that are much larger, and include 0.

Finally, we conclude by discussing that our insights extend beyond the linear-in-means model, to other linear regression models with network-based instruments. Thus, close attention to network structure and its growth with sample size must be considered when implementing such instruments.

\begin{table}[h!]
\centering
\caption{Degree Distribution Statistics for Empirical Graphs and their Second Power}
\label{tab:deg_sum}
\begin{adjustbox}{max width=\textwidth}
\begin{tabular}{|l|cc|cc|cc|}

\hline
& \multicolumn{2}{c|}{US Legislative Network} 
& \multicolumn{4}{c|}{Network of Allies/Enemies in 2nd Congo War} \\
 & \multicolumn{2}{c|}{\cite{battaglini2018}} 
 & \multicolumn{4}{c|}{\cite{Konig2017}} \\

\hline
 & Non-Normalized & Non-Normalized Squared & Allies & Allies Squared 
 & Enemies & Enemies Squared \\
\hline
Min & 0 & 0 & 0 & 0 & 0 & 0  \\
Median & 0 & 0 & 1 & 5.5 & 1 & 24 \\
Mean & 1.8 & 4.2 & 2.4 & 15.3 & 3 & 24.3   \\
Mode & 0 & 0 & 0 & 0 & 1 & 0  \\
Max & 28 & 196 & 21 & 49 & 26 & 94   \\
\hline
\end{tabular}
\end{adjustbox}
\end{table}

The rest of the paper is organized as follows: Section \ref{S2} provides a brief review of literature, while Section \ref{S3} contains the main theoretical results of our paper. Section \ref{S4} provides results on our Monte Carlo simulations, while Section \ref{S5} provides the empirical applications. We conclude in Section \ref{S6}. All proofs are provided in Appendix Section \ref{SA1}.

\section{Related Literature}\label{S2}

There has been a steady growth in the literature dealing with econometric issues related to peer effects and social interactions. 
\cite{manski1993identification} studied the (lack of) identification in the linear-in-means model due to the "reflection problem", spurring a large literature (see \cite{brock2001interactions}, \cite{durlauf2004neighborhood}, and \cite{blume2005identifying}, \cite{bramoulle2020peer} for surveys). Empirical research has proposed different solutions to this problem, including structure on the types of social interactions (e.g., \cite{gaviria2001school}), or the validity of instrumental variables (e.g., \cite{ioannides2003neighbourhood} and \cite{durlauf2008understanding} which use group analogues of individual characteristics satisfying an exclusion restriction, or \cite{acemoglu2015} using historical variables). Others have used randomization as an appropriate identification strategy like \cite{field2016friendship}. However, these solutions are often case specific and not easy to generalize. In the absence of such identification strategies, \cite{bramoulle2009identification} suggested using the network structure itself to generate valid instruments using friends-of-friends' characteristics or other higher powers of the adjacency matrix to construct instruments, as discussed above. We focus on the standard linear-in-means model and their proposed instruments. We derive novel results on the role of the specific sparse/dense-network topology with two associated challenges: weak instruments and ill-defined first-stage estimands. We characterize the rates under a salient benchmark: Erdős--Rényi random graphs.

The possibility of weak instruments due to the correlation between a network and its higher-order counterparts was pointed out in a discussion in \cite{gibbons} (p.179) in the spatial econometrics context, and is excluded from the identification results in \cite{bramoulle2009identification}, as we revisit below. Yet, three recent papers, \cite{tchuente2019weakidentification}, \cite{ross2022} and \cite{Wang2025} study weak identification in social-interaction settings with network-based instruments in the linear-in-means setting.

\cite{tchuente2019weakidentification} studies weak identification arising from high transitivity, which can generate near-rank deficiency in the first stage, and proposes a regularized Two Stage Least Squares (TSLS) procedure to mitigate small-sample bias. They explicitly note that weak identification can also occur when there are too many isolated individuals, but their focus is on highly transitive networks (p.~2). By comparison, our results focus on alternative networks (including extremely sparse networks, salient in empirical settings), we provide new characterization results for Erdős--Rényi graphs, and our inference is not based on the regularized TSLS estimator. A crucial feature of our derived rates is to distinguish between cases where weak identification occurs, and when the first-stage estimand is ill-defined (as the instrument variance goes to 0 faster than its covariance with the endogenous variable).

\cite{Wang2025} studies weak identification under a near-degree-regularity assumption, where neighborhoods become asymptotically identical. Again, our paper differs in several ways: both in the mechanism by which weak identification can occur (here, network sparsity and heterogeneous degree growth are key determinants of identification strength), by our characterization of degenerate, weak and informative regimes in the Erdős--Rényi graphs as a benchmark (including explicit asymptotic rates), by our explicit discussions of the empirical literature and examples where these issues arise, and by adapting and applying weak-IV robust inference for the resulting network-based weak identification problem.

\cite{ross2022} discusses weak-IV robust inference with a type of network-based instruments based on partially overlapping peer groups (from quasi-experimental roommate allocations) which differs from those in \cite{bramoulle2009identification}. 
They explicitly contrast their instruments to those that we study, emphasizing that ``identification from a network configuration can also give rise to weak identification, but \textit{in our case identification comes from a discrete transition over time and weakness in instruments is not due to reliance on the spatial structure of the networks}, but rather due to the relatively weak first-period contextual effects on student performance" (p.998, see also Supplement B.6). Our paper instead studies how the topology of an observed network affects friends-of-friends instruments of the form \(\mathbf G^2\mathbf X\), and combines weak-IV robust inference with network-dependent variance estimation. We also provide Monte Carlo evidence on when conventional \(t\)-test-based inference breaks down and when weak-IV robust procedures deliver more reliable coverage.

As a result, our paper is related to the literature on weak instruments (see \cite{andrews2019weak} for a detailed survey). This literature, beginning with contributions such as \cite{dufour97} and \cite{staigerstock97}, has primarily developed weak-IV theory and robust inference in the classical IV framework. We adapt these tools to linear-in-means models with network-based instruments. While the linear-in-means model shares many similarities with the linear IV model, first-stage strength in the former depends on network topology and growth, and spillovers may introduce cross-sectional dependence via correlation between relevant sample moments across connected individuals. Thus, we show how weak-IV robust procedures, such as Anderson--Rubin and conditional likelihood ratio tests \citep{anderson1949estimation,moreira03}, can be implemented in this environment together with variance estimators that account for network dependence, building on results such as \cite{kojevnikov2021limit} and \cite{conley1999gmm}.

\section{Weak Instruments in the Linear-in-Means Model}\label{S3}

\subsection{Model}

We use the extended linear-in-means model of \cite{bramoulle2009identification} where \(n\)-agents interact over an exogenously given network with the $N \times N$ adjacency matrix $\mathbf{G}$. Each element of this adjacency matrix is given by $g_{ij}$, where:
\begin{align*}
    g_{ij} = \begin{cases}
    1 & \text{if $i$ is connected to $j$ for } i \neq j  \\
    0 & otherwise
    \end{cases}
\end{align*}
We consider the matrix version of the structural model in equation \eqref{struc_model_intro} and, for simplicity, consider the case without group fixed-effects (called correlated effects), as they can be differenced out (see \cite{bramoulle2009identification}). Formally,
\begin{align}\label{struc_model}
    \mathbf{Y} = \alpha \mathbf{\iota} + \beta \mathbf{GY} + \mathbf{X} \boldsymbol{\gamma} + \mathbf{G X} \boldsymbol{\delta} + \mathbf{\varepsilon},
\end{align}
where $\mathbf Y$ and $\varepsilon$ is the $N \times 1$ vector of outcomes and errors, respectively, and $\mathbf{X}$ the $N \times d$ matrix of observable characteristics, and we have separated the constant from $\mathbf{X}$. $\boldsymbol{\gamma}$ and $\boldsymbol{\delta}$ are the $d  \times 1$ vector of coefficients associated with $\mathbf{X}$ and $\mathbf{GX}$ respectively. We make the following standard assumptions:
\begin{assumption}\label{A3.1}
\begin{enumerate}[(i)]
\item
$(\mathbf{\iota},\mathbf{X})$ is full rank, 
\item 
$|\beta|<1$, 
\item 
$\mathbf{G} \neq \mathbf{0}$, 
\item
$\mathbb{E}[\varepsilon | \mathbf{X}, \mathbf{G}] = 0$
\end{enumerate}
\end{assumption}

Assumption \ref{A3.1} (i) and (ii) are standard to guarantee the behavior of the model, including the invertibility and stability of the system. Condition (ii) ensures that ($\mathbf{I} - \beta \mathbf{G}$) is invertible and can be expanded into an infinite matrix (Neumann) series. Condition (iii) assumes that not all nodes are isolated, necessary for identification and non-trivial results. Then, (iv) assumes strict exogeneity of individual characteristics and the network respectively. Note that endogeneity still arises due to the reflection problem. Now, we can write the reduced-form of (\ref{struc_model}) as:
\begin{align}\label{red_model}
    \mathbf{Y} = \alpha (\mathbf{I} - \beta \mathbf{G})^{-1} \mathbf{\iota} + (\mathbf{I} - \beta \mathbf{G})^{-1} (\mathbf{X} \boldsymbol{\gamma} + \mathbf{G} \mathbf{X} \boldsymbol{\delta} )  + (\mathbf{I} - \beta \mathbf{G})^{-1} \mathbf{\varepsilon}.
\end{align}
Under Assumptions \ref{A3.1} (i)-(ii),we can expand $(\mathbf{I} - \beta \mathbf{G})^{-1}$ into a Neumann series:
\begin{align}\label{exp_eqn}
\mathbf{Y} = \alpha (\mathbf{I} - \beta \mathbf{G})^{-1} \mathbf{\iota} + \mathbf{X} \boldsymbol{\gamma} + \sum_{k=0}^{\infty}\beta^k\mathbf{G}^{k+1} \mathbf{X}(\beta \boldsymbol{\gamma} + \boldsymbol{\delta} ) + \sum_{k=0}^{\infty}\beta^k\mathbf{G}^{k+1} \mathbf{\varepsilon}
\end{align}
Then, from the strict exogeneity assumption, we get:
\begin{align}\label{cond_mean}
\mathbb{E}(\mathbf{GY} | \mathbf{X}) = \frac{\alpha}{(1-\beta)} \mathbf{G} \iota + \mathbf{GX} \boldsymbol{\gamma} + \sum_{k=0}^{\infty}\beta^k\mathbf{G}^{k+2} \mathbf{X}(\beta \boldsymbol{\gamma} + \boldsymbol{\delta} ).
\end{align}

\subsection{Identification}

The model is said to be identified if $\boldsymbol{\theta} = (\alpha, \beta, \boldsymbol{\gamma}', \boldsymbol{\delta}')'$ is identified.\footnote{We assume there is a super-population of exogenous networks from which the sample $\mathbf{G}$ is drawn, thereby defining identification relative to this super-population, and the DGP given in equation \eqref{struc_model}.} The reflection problem discussed by \cite{manski1993identification} is evident from equation \eqref{struc_model} where individual outcomes are affected by the respective expected group outcome which in itself is impacted by the former. \cite{bramoulle2009identification} suggested using the exogenously given network structure to construct valid instruments. The main identification result of their paper states that, as long as attributes/characteristics of neighbors have some direct or indirect effect, i.e. $( \beta \boldsymbol{\gamma} + \boldsymbol{\delta}) \neq \mathbf{0}$ and $\mathbf{I}$, $\mathbf{G}$ and its higher powers are not linearly dependent, then $\mathbf{G}^k \mathbf{X}$ for $k \geq 2 $ can be used as valid instruments for $\mathbf{GY}$. This is summarized in their Proposition 1, rewritten for convenience below.

\begin{proposition}[\cite{bramoulle2009identification}, Proposition 1, for non-row normalized $\mathbf{G}$]\label{prop1}
Suppose that Assumption 3.1 holds, that $\beta \boldsymbol{\gamma} + \boldsymbol{\delta} \neq 0$ and that the matrices $I, \mathbf{G}, \mathbf{G}^2$ are linearly independent. Then, the social effects $\mathbf{\theta}=(\alpha, \beta, \boldsymbol{\gamma}', \boldsymbol{\delta}')'$ are identified. 
\end{proposition}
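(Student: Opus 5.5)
The plan is to follow the argument of \cite{bramoulle2009identification}: show that two parameter vectors producing the same conditional mean of $\mathbf{Y}$ must coincide. Identification is understood with respect to $\mathbb{E}[\mathbf{Y}\mid\mathbf{X},\mathbf{G}]$. By Assumption \ref{A3.1}(iv) and the reduced form \eqref{red_model}, this mean is
\begin{align*}
(\mathbf{I}-\beta\mathbf{G})^{-1}\left(\alpha\iota+\mathbf{X}\boldsymbol{\gamma}+\mathbf{G}\mathbf{X}\boldsymbol{\delta}\right).
\end{align*}
So suppose $\boldsymbol{\theta}=(\alpha,\beta,\boldsymbol{\gamma}',\boldsymbol{\delta}')'$ and $\tilde{\boldsymbol{\theta}}=(\tilde\alpha,\tilde\beta,\tilde{\boldsymbol{\gamma}}',\tilde{\boldsymbol{\delta}}')'$ both satisfy Assumption \ref{A3.1} and generate the same conditional mean. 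By Assumption \ref{A3.1}(ii) both $\mathbf{I}-\beta\mathbf{G}$ and $\mathbf{I}-\tilde\beta\mathbf{G}$ are invertible. These matrices are polynomials in $\mathbf{G}$, so they commute. Multiplying the equality on the left by $(\mathbf{I}-\beta\mathbf{G})(\mathbf{I}-\tilde\beta\mathbf{G})$ clears both inverses and gives the polynomial identity
\begin{align*}
(\mathbf{I}-\tilde\beta\mathbf{G})\left(\alpha\iota+\mathbf{X}\boldsymbol{\gamma}+\mathbf{G}\mathbf{X}\boldsymbol{\delta}\right)=(\mathbf{I}-\beta\mathbf{G})\left(\tilde\alpha\iota+\mathbf{X}\tilde{\boldsymbol{\gamma}}+\mathbf{G}\mathbf{X}\tilde{\boldsymbol{\delta}}\right).
\end{align*}

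Next I collect terms. The part involving $\mathbf{X}$ reads
\begin{align*}
\mathbf{X}(\boldsymbol{\gamma}-\tilde{\boldsymbol{\gamma}})+\mathbf{G}\mathbf{X}\left(\boldsymbol{\delta}-\tilde\beta\boldsymbol{\gamma}-\tilde{\boldsymbol{\delta}}+\beta\tilde{\boldsymbol{\gamma}}\right)+\mathbf{G}^2\mathbf{X}\left(\beta\tilde{\boldsymbol{\delta}}-\tilde\beta\boldsymbol{\delta}\right).
\end{align*}
The part involving the constant reads
\begin{align*}
(\alpha-\tilde\alpha)\iota+(\beta\tilde\alpha-\tilde\beta\alpha)\mathbf{G}\iota.
\end{align*}
The key step is to split this into separate equations for the $\mathbf{X}$-part and the constant part, and then to equate coefficients on $\mathbf{I},\mathbf{G},\mathbf{G}^2$ using their linear independence. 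I expect this to be the main obstacle, because linear independence of the matrices $\mathbf{I},\mathbf{G},\mathbf{G}^2$ does not by itself imply linear independence of the vectors $\mathbf{X}_k,\mathbf{G}\mathbf{X}_k,\mathbf{G}^2\mathbf{X}_k$ for a particular realised $\mathbf{X}$.

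To get around this, I will interpret identification, as in \cite{bramoulle2009identification}, as holding for all $\mathbf{X}$ in the support, using Assumption \ref{A3.1}(i). The identity must hold for every admissible $\mathbf{X}$, and the constant part does not depend on $\mathbf{X}$. Varying one column $\mathbf{X}_k$ while holding everything else fixed then shows that, for each $k$, the matrix
\begin{align*}
(\gamma_k-\tilde\gamma_k)\mathbf{I}+(\delta_k-\tilde\beta\gamma_k-\tilde\delta_k+\beta\tilde\gamma_k)\mathbf{G}+(\beta\tilde\delta_k-\tilde\beta\delta_k)\mathbf{G}^2
\end{align*}
annihilates all admissible $\mathbf{X}_k$, hence equals zero. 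Linear independence of $\mathbf{I},\mathbf{G},\mathbf{G}^2$ then delivers, for every $k$,
\begin{align*}
\gamma_k=\tilde\gamma_k,\qquad \delta_k-\tilde\delta_k=(\tilde\beta-\beta)\gamma_k,\qquad \beta\tilde\delta_k=\tilde\beta\delta_k.
\end{align*}

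The remainder is algebra. Substituting $\tilde\delta_k=\delta_k-(\tilde\beta-\beta)\gamma_k$ into the third equation gives $(\tilde\beta-\beta)(\beta\gamma_k+\delta_k)=0$. Since $\beta\boldsymbol{\gamma}+\boldsymbol{\delta}\neq 0$, some component is nonzero, so $\tilde\beta=\beta$. The second equation then gives $\tilde{\boldsymbol{\delta}}=\boldsymbol{\delta}$, and $\tilde{\boldsymbol{\gamma}}=\boldsymbol{\gamma}$ was already obtained. With $\tilde\beta=\beta$, the constant part becomes $(\alpha-\tilde\alpha)(\mathbf{I}-\beta\mathbf{G})\iota=0$. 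Since $\mathbf{I}-\beta\mathbf{G}$ is invertible, this forces $(\alpha-\tilde\alpha)\iota=0$, and hence $\alpha=\tilde\alpha$.

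Equivalently, I can phrase the conclusion in IV terms. The argument shows that $\mathbb{E}(\mathbf{GY}\mid\mathbf{X})$ in \eqref{cond_mean} is not in the span of $(\iota,\mathbf{X},\mathbf{G}\mathbf{X})$. This is exactly the rank condition that makes $\mathbf{G}^2\mathbf{X}$ (and higher powers) relevant, excluded instruments for $\mathbf{GY}$. That reading is the one used in the remainder of the paper.
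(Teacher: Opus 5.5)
Your argument is correct and is essentially the reduced-form argument of \cite{bramoulle2009identification}, which the paper cites rather than reproving: clearing both inverses, matching coefficients on $\mathbf{I},\mathbf{G},\mathbf{G}^2$ column by column, using $\beta\boldsymbol{\gamma}+\boldsymbol{\delta}\neq 0$ to force $\tilde{\beta}=\beta$, and then recovering $\boldsymbol{\delta}$, $\boldsymbol{\gamma}$ and $\alpha$ is their route, and your algebra checks out. The one caveat is that two ingredients must be imposed rather than read off Assumption~\ref{A3.1}: first, invertibility of $\mathbf{I}-\beta\mathbf{G}$ and $\mathbf{I}-\tilde{\beta}\mathbf{G}$, which $|\beta|<1$ does not deliver for an unnormalized $\mathbf{G}$ (on the three-node path, $\mathbf{I},\mathbf{G},\mathbf{G}^2$ are linearly independent yet $\mathbf{I}-\mathbf{G}/\sqrt{2}$ is singular, so something like Assumption~\ref{A3.2} or a restricted parameter space is needed); and second, a support condition, such as the distribution of $\mathbf{X}$ given $\mathbf{G}$ not being concentrated on a proper affine subspace of $\mathbb{R}^{n\times d}$, which is what actually turns your vector identity into the matrix identity and which the rank condition on a realized $\mathbf{X}$ in Assumption~\ref{A3.1}(i) does not provide.
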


Hence, if the first-stage is given by
\begin{align}\label{eq:first_stage}
    \mathbf{GY} = \Tilde{\alpha}\iota +  \mathbf{X} \boldsymbol{\nu} +  \mathbf{G X} \boldsymbol{\Tilde{\gamma}} + \mathbf{G}^{2} \mathbf{X} \boldsymbol{\pi} + \tilde{\epsilon},
\end{align}
the structural equation \eqref{struc_model} can be re-written as,
\begin{align}\label{eq:struct_eq2}
    \mathbf{Y} = (\alpha + \beta \Tilde{\alpha}) \mathbf{\iota} +   \mathbf{X} (\boldsymbol{\gamma} + \beta \boldsymbol{\nu}) + \mathbf{GX} (\boldsymbol{\delta} + \beta \boldsymbol{\Tilde{\gamma}})  +  \mathbf{G}^2 \mathbf{X} \boldsymbol{\xi} + \mathbf{\eta},
\end{align}
where $\eta = \epsilon + \beta \Tilde{\epsilon}$ and $\boldsymbol{\xi}  = \beta \boldsymbol{\pi}$. Thus, the endogenous effect $\beta$ represents the proportionality constant linking the coefficient on $\mathbf{G}^{2} \mathbf{X}$ in the first-stage regression to that in the structural equation \eqref{eq:struct_eq2} which can be estimated using an appropriate estimator $\hat{\boldsymbol{\xi}}  = \hat{\beta} \hat{ \boldsymbol{\pi}}$.

\subsection{Empirical Challenges with Network-Based Instruments}

Proposition~\ref{prop1} shows that, for non-trivial combinations of parameters, identification hinges on the informational content of friends-of-friends' networks ($\mathbf{G}^2$) relative to the network $\mathbf{G}$ and to a constant. See also \cite{gibbons} in the spatial econometrics context. This suggests that, even when the exclusion restriction and rank conditions hold in principle, the effectiveness of $\mathbf{G}^2 \mathbf X$ as an instrument depends critically on whether it introduces sufficient independent variation beyond $\mathbf{GX}$. In the next section, we formalize the conditions for this to arise in a salient class of random graph models. Here, we outline the main ideas that apply more generally.

First, we demonstrate that the key assumptions in Proposition \ref{prop1}, namely the linear independence of $\mathbf{G}$ and $\mathbf{G}^2$ as well as $\mathbf{G}^2$ being non-zero (see footnote 23 of \cite{bramoulle2009identification}), are likely violated (or close to violated) in many empirical settings. Table \ref{tab:deg_sum} already provides evidence of this problem arising in various empirically observed networks. This is also easily observed when the graphs themselves are plotted in Figures \ref{fig:bat_graph} and \ref{fig:konig_graph_allies}. 
 These examples illustrate situations where $\mathbf{G}^2$ is extremely sparse. On the other hand, Figure \ref{fig:cruz_graph} showcases a much denser network where higher-order links become very close to a completed network. Our second main contribution is to characterize the behavior of the network-based instruments in different scenarios. The two extremes mentioned above lead to distinct implications for identification and inference.

\begin{figure}[ht]
    \centering
    \captionsetup[subfigure]{justification=centering}
    \begin{subfigure}[b]{0.45\textwidth}
        \centering
        \includegraphics[width=\textwidth]{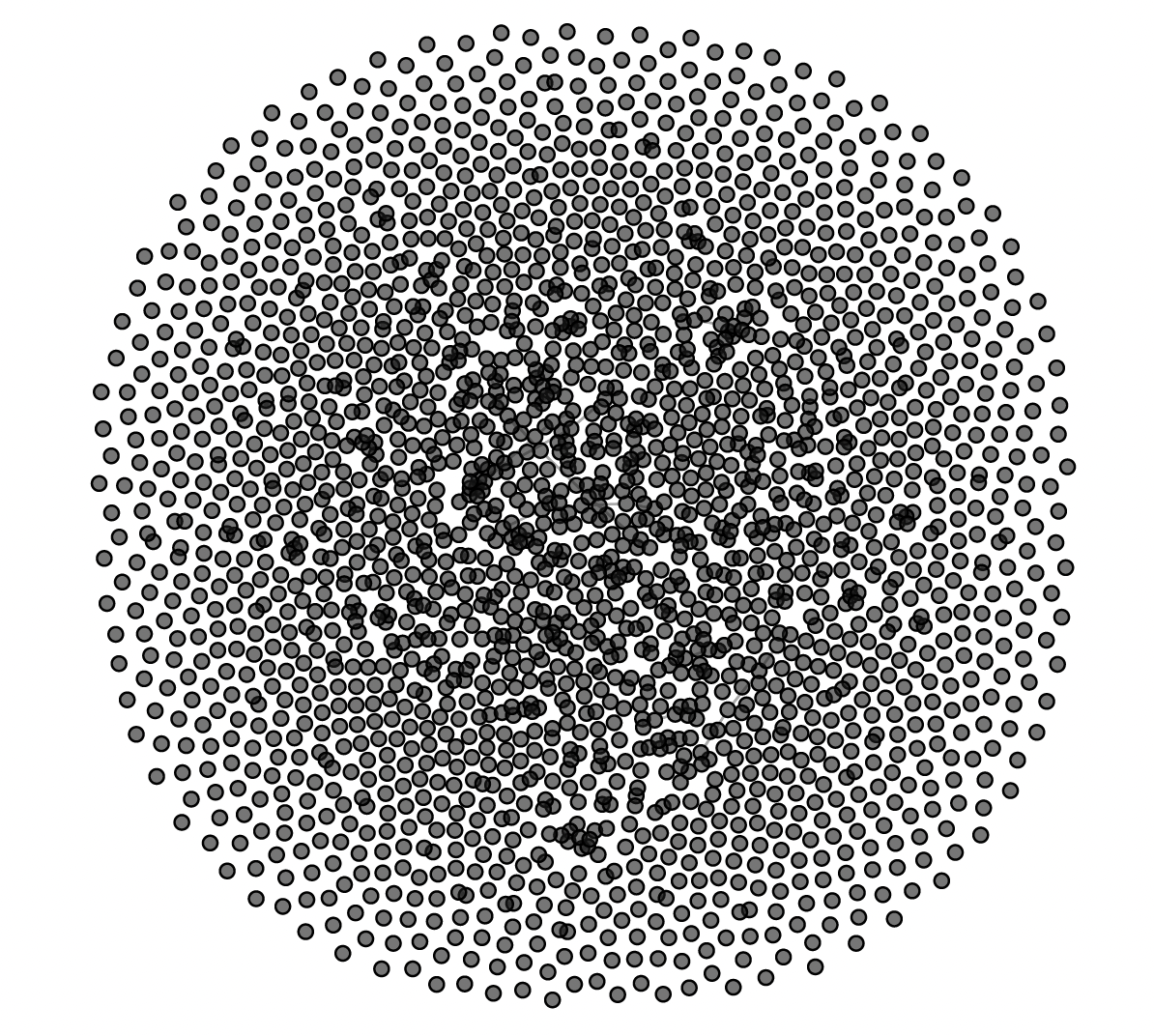}
        \caption{Network}
        \label{fig:subBa}
    \end{subfigure}
    \hfill
    \begin{subfigure}[b]{0.45\textwidth}
        \centering
        \includegraphics[width=\textwidth]{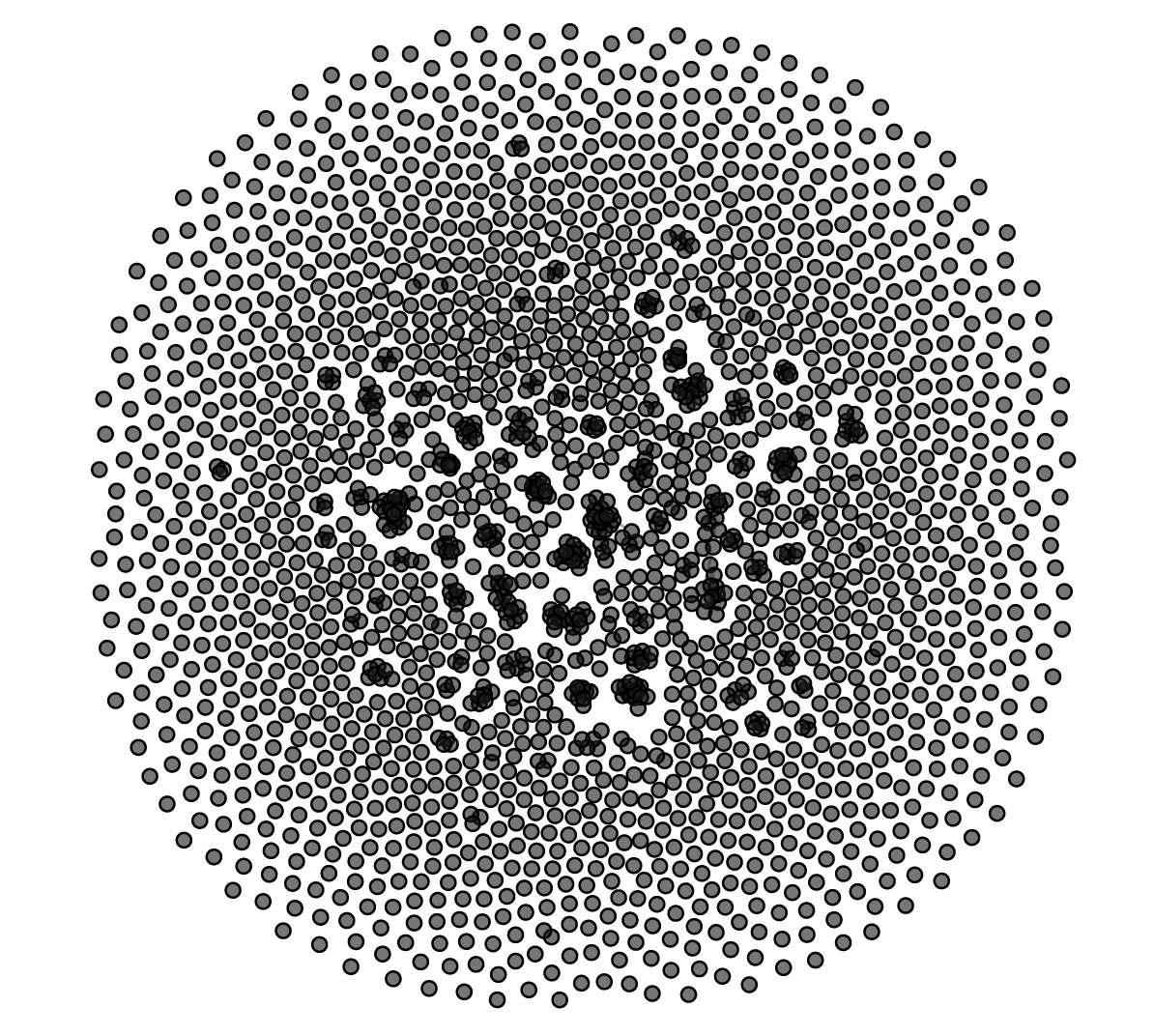}
        \caption{Network-squared}
        \label{fig:subBb}
    \end{subfigure}
    \caption{US Congressional Alumni Network 109-113 (\citealp{battaglini2018})}
    \label{fig:bat_graph}
\end{figure}

\begin{figure}[ht]
     \centering
    \captionsetup[subfigure]{justification=centering}
    \begin{subfigure}[b]{0.45\textwidth}
        \centering
        \includegraphics[width=\textwidth]{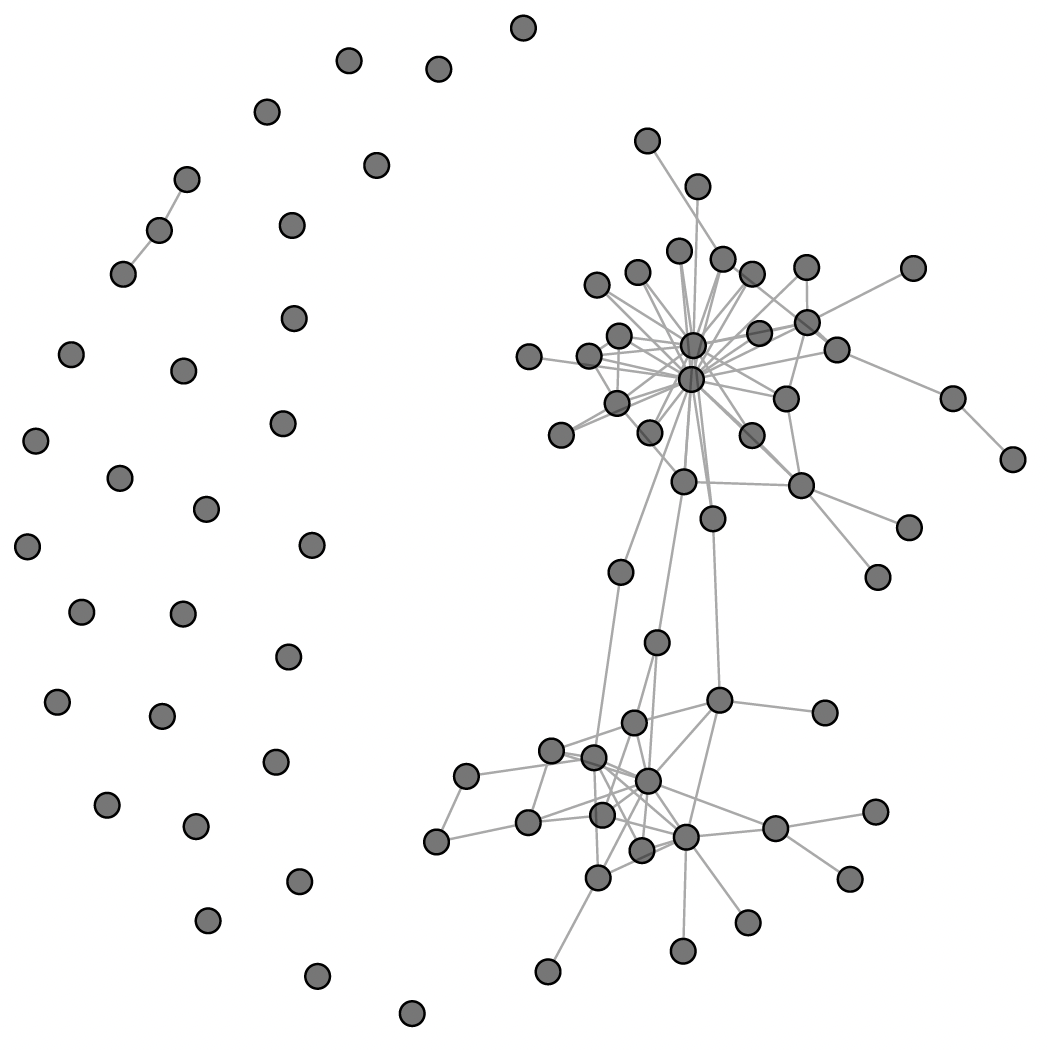}
        \caption{Network}
        \label{fig:subKa_allies}
    \end{subfigure}
    \hfill
    \begin{subfigure}[b]{0.45\textwidth}
        \centering
        \includegraphics[width=\textwidth]{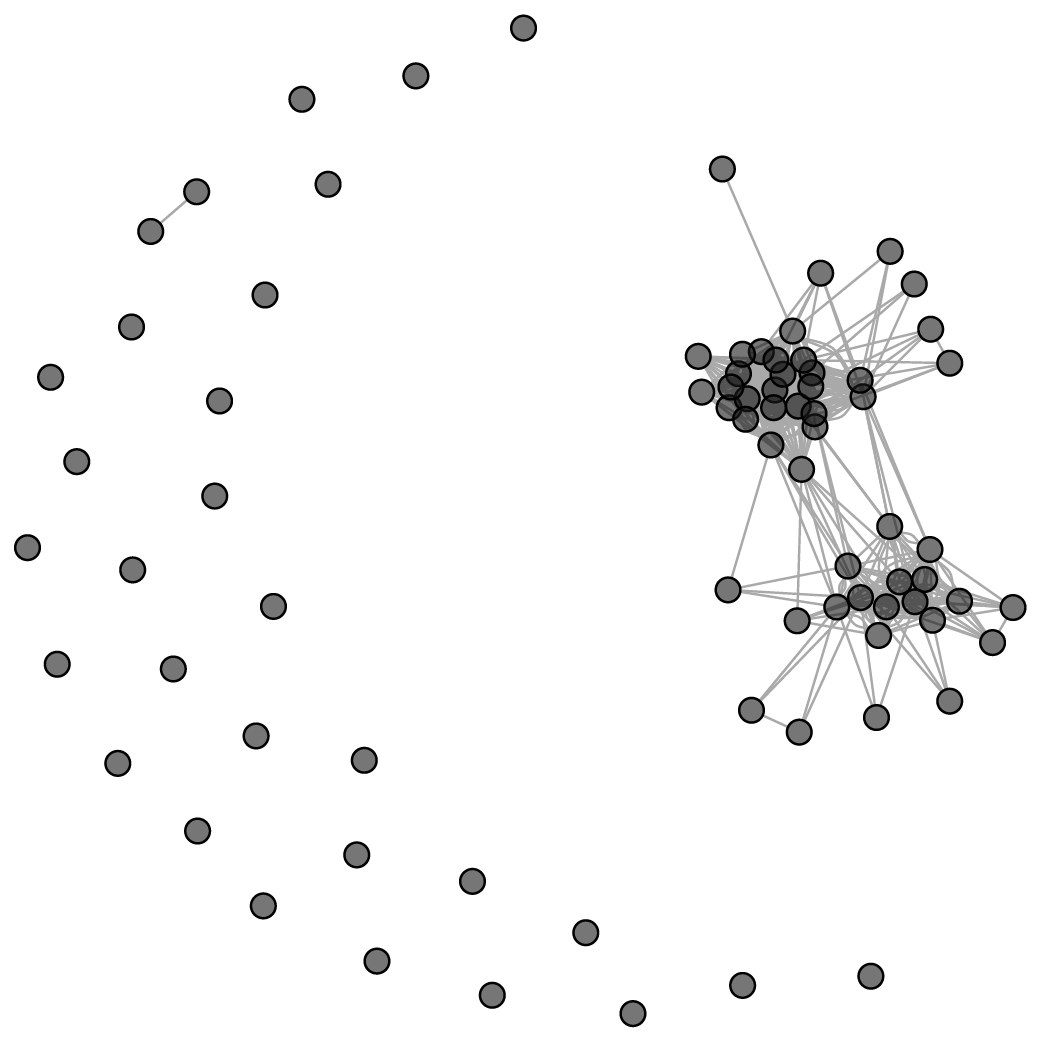}
        \caption{Network-squared}
        \label{fig:subKb_allies}
    \end{subfigure}

    \caption{Network of Allies from the Second Congo War (\citealp{Konig2017})}
    \label{fig:konig_graph_allies}
\end{figure}

\begin{figure}[ht]
    \centering
    \captionsetup[subfigure]{justification=centering}
    \begin{subfigure}[b]{0.45\textwidth}
        \centering
        \includegraphics[width=\textwidth]{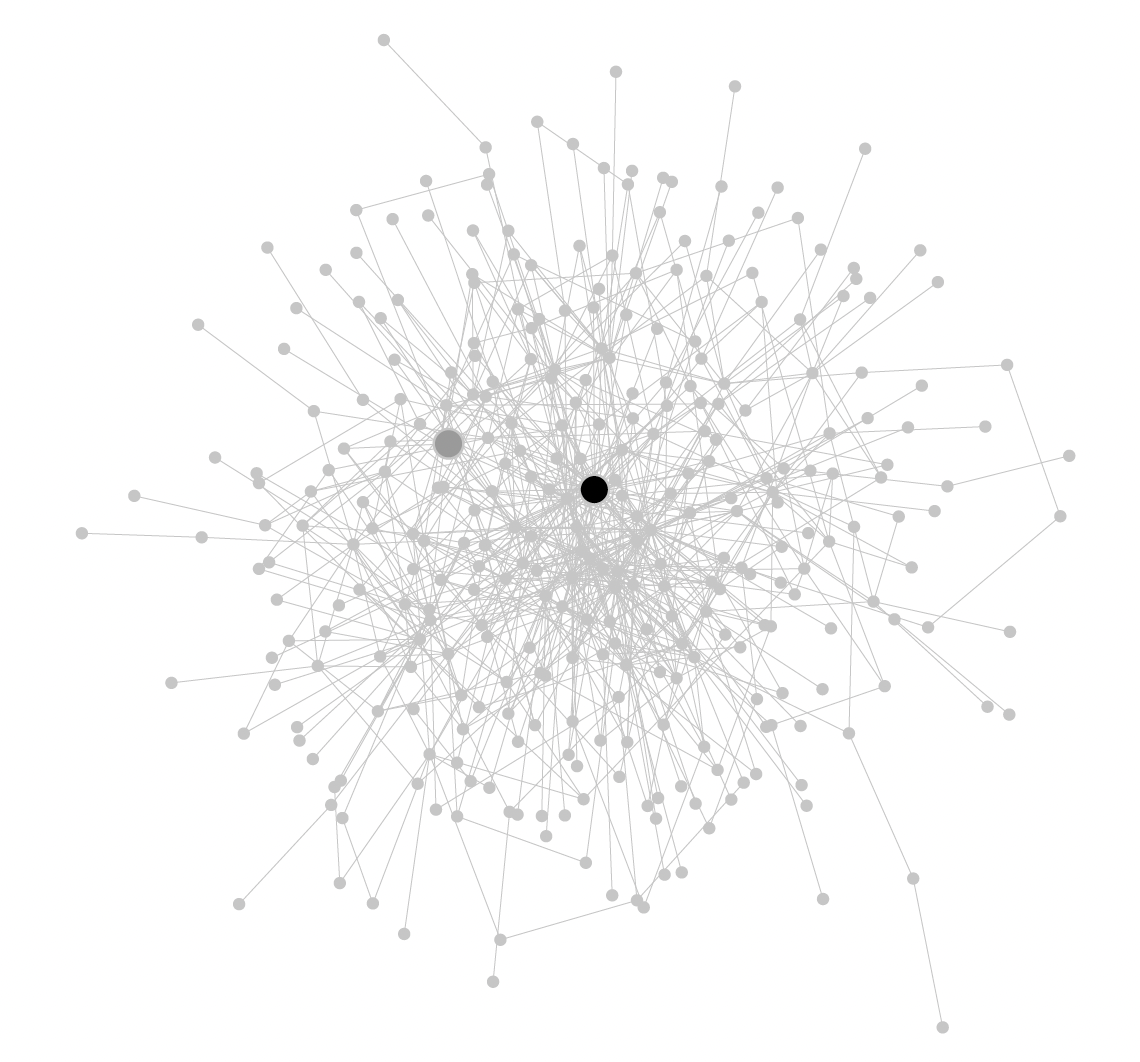}
        \caption{Network}
        \label{fig:subCa}
    \end{subfigure}
    \hfill
    \begin{subfigure}[b]{0.45\textwidth}
        \centering
        \includegraphics[width=\textwidth]{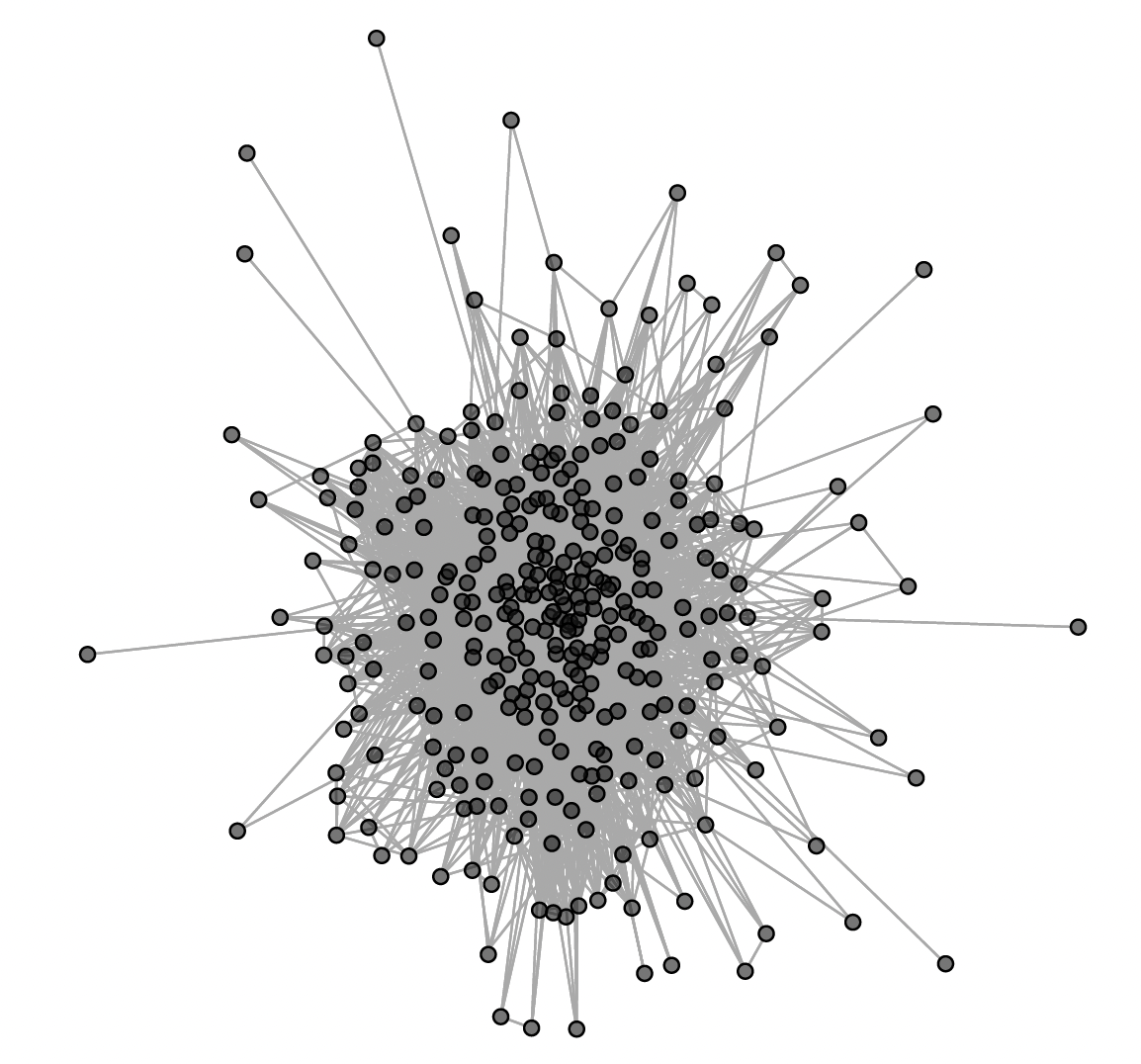}
        \caption{Network-squared}
        \label{fig:subCb}
    \end{subfigure}

    \caption{Family Network from sample municipality from Philippines (\citealp{Cruz2017})}
    \label{fig:cruz_graph}
\end{figure}

If network sparsity increases with \(n\), \(\mathbf G^2\) collapses to the zero matrix faster than \(\mathbf G\). The instrument then becomes asymptotically degenerate: its variance tends to zero faster than its covariance with the endogenous variable. This is not identification failure in the usual sense, but rather asymptotic degeneracy of the instrument itself, where the limit estimand is ill-defined. In the opposite extreme, when the network becomes very dense, \(\mathbf G^2\) becomes approximately proportional to \(\mathbf G\). In this case, \(\mathbf G^2\mathbf X\) adds little independent variation beyond \(\mathbf GX\), the population first-stage coefficient tends to zero, and identification becomes weak through asymptotic collinearity. This is the dense-network mechanism highlighted in \cite{bramoulle2009identification} and further studied by \cite{Wang2025} in near-regular graphs. Between these two extremes, \(\mathbf G^2\mathbf X\) may remain non-degenerate and non-collinear, but still be weak when the covariance between \(\mathbf G\mathbf Y\) and \(\mathbf G^k\mathbf X\), \(k\ge2\), is small relative to sampling noise. Conversely, in moderately connected networks where higher-order instruments retain independent variation, the first stage can remain strong.

Following the literature on weak identification,\footnote{In particular \cite{staigerstock97}, \cite{stock00} and \cite{stock2005testing}. See \cite{stockwrightyogo2002} and \cite{andrews2019weak} for a detailed survey and discussion.} we characterize the weak instruments problem in our set-up through the population first-stage coefficient $\boldsymbol{\pi}$ in \eqref{eq:first_stage}. As emphasized by \cite{andrews2019weak}, non-standard asymptotic behavior of IV estimators arises when the population first-stage coefficient $\boldsymbol{\pi}$ is small relative to the sampling variability of the estimator $\hat{\boldsymbol{\pi}}$. This is summarized using a concentration-parameter, $\mu$.\footnote{In the standard setting, the concentration parameter is given by $ \mu^2 \equiv \boldsymbol{\pi}' Var(\hat{\boldsymbol{\pi}} | \mathbf{G}, \mathbf{X})^{-1} \boldsymbol{\pi}$. Instruments are weak when $\mu^2 = O(1)$ and strong when $\mu^2 \rightarrow \infty$.} In the standard \textit{i.i.d.} case studied in \cite{staigerstock97}, $\hat{\boldsymbol{\pi}}$ concentrates at the rate $n^{-1/2}$. Hence, $\mu \asymp \sqrt{n} |\boldsymbol{\pi}|$ and instruments are weak when $\boldsymbol{\pi}$ is local to zero at the same rate, i.e.\ when $\boldsymbol{\pi} = O(n^{-1/2})$. In models with network dependence, however, the concentration rate of the first-stage estimator depends on the strength and structure of cross-sectional dependence induced by the network (see \cite{kojevnikov2021limit} and \cite{Wang2025} among others). Thus, to adapt the standard definition of weak instruments to this set-up, we introduce $k(n)$ to denote the network-dependent concentration rate. To do so, we must introduce additional notation. 

Let $Z_n=\mathbf{G}^{k}\mathbf{X}$ with $k\ge2$, where the subscript \(n\) emphasizes the dependence of the instrument on the size and structure of the network. Let $W = (\iota,\mathbf{X},\mathbf{G}\mathbf{X})$ and define the residualized instrument $\tilde{Z}_n := M_W Z_n$, where $M_W=I-W(W'W)^{-1}W'$. Let $\widetilde{\mathbf{G} \mathbf Y}:=M_W(\mathbf{G} \mathbf Y)$ denote the corresponding residualized endogenous regressor. We let $\boldsymbol{\pi}_n$ denote the population first-stage coefficient, which is allowed to depend on \(n\) and is given by the Frisch--Waugh--Lovell representation
\[
\boldsymbol{\pi}_n
:=
\left(\frac{1}{n}Var(\tilde Z_n)\right)^{-1}
\frac{1}{n}Cov\!\left(\widetilde{\mathbf{G} \mathbf Y},\,\tilde Z_n\right),
\]
where
$Var(\tilde Z_n)=\mathbb{E}[\tilde Z_n'\tilde Z_n]$ and $Cov(\widetilde{\mathbf{G}Y},\tilde Z_n) =\mathbb{E}[\tilde Z_n'\widetilde{\mathbf{G}Y}]$.\footnote{This dependence arises because both the strength of the instrument and the sampling variability of its estimator are functions of the network’s size and connectivity, so asymptotic behavior is governed by the sequence of networks $\{\mathbf{G}_n\}_{n\ge1}$ rather than by sample size alone.}

The sampling variability of \(\hat{\boldsymbol{\pi}}_n\) is driven by the first-stage error $\tilde\epsilon_n := \widetilde{\mathbf{G}Y}-\tilde Z_n\boldsymbol{\pi}_n$. Conditional on \(\mathbf{G}\), the variance of the first-stage estimator is therefore
\[
Var(\hat{\boldsymbol{\pi}}_n\mid\mathbf{G}, \mathbf{X})
=
\left(\frac{1}{n}Var(\tilde Z_n)\right)^{-1}
\frac{1}{n}\boldsymbol{\Omega}_{\pi,n}
\left(\frac{1}{n}Var(\tilde Z_n)\right)^{-1},
\]
where
\[
\boldsymbol{\Omega}_{\pi,n}
:=
Var\!\left(
\frac{1}{\sqrt{n}}\tilde Z_n'\tilde\epsilon_n
\;\middle|\;
\mathbf{G}, \mathbf{X}
\right).
\] Accordingly, define the network-dependent information index
\[
k(n)
\;\asymp\;
\left\|
\,Var(\hat{\boldsymbol{\pi}}_n\mid\mathbf{G}, \mathbf{X})^{-1/2}
\right\|,
\] implying that $k(n)$ grows at the rate as the inverse of the conditional standard deviation of the estimator.\footnote{This variance normalization coincides with that used by \cite{kojevnikov2021limit} to establish a central limit theorem for network-dependent quadratic forms. Under conditional homoskedasticity of \(\tilde\epsilon_n\) given \(\mathbf{G}\), \(\boldsymbol{\Omega}_{\pi,n}\) is proportional to \(\mathbb{E}[\tilde Z_n'\tilde Z_n\mid\mathbf{G}]\), implying that the information content of the first-stage is governed—up to constants—by the Frobenius norm \(\|\mathbf{G}^k\|_F^2\). This scaling is also adopted in \cite{Wang2025}, who shows that it yields stable Gaussian limits under both sparse and dense network sequences.} Thus, the results that follow use the following definition for weak identification with network-based instruments.
\begin{definition}[Weak identification in networks]
\label{def:weak_network}
Instruments of the form $Z_n=\mathbf{G}^k\mathbf{X}$ are said to be \emph{weak} if there exists a finite and fixed matrix $\mathbf{C}$ such that
\[
k(n) \boldsymbol{\pi}_n  \le
\mathbf{C},
\] for large \(n\). 
\end{definition}

Equivalently, instruments are weak when $k(n) \boldsymbol{\pi}_n=O(1)$ as $n\to\infty$. When $\mathbf{X}$ is uni-dimensional, Definition~\ref{def:weak_network} reduces to $k(n)\,\pi_n
=
k(n)\,
\frac{\frac{1}{n}Cov(\widetilde{\mathbf{G}Y},\tilde Z_n)}
     {\frac{1}{n}Var(\tilde Z_n)}
=
O(1),
\quad n\to\infty$, where we use the expression for the first-stage coefficient.

Definition \ref{def:weak_network} mirrors the local-to-zero framework of \cite{staigerstock97}, with \(k(n)\) replacing the usual \(\sqrt n\) rate when the first-stage coefficients share a common network-dependent convergence rate.\footnote{More generally, weakness can be stated through the effective concentration
parameter
\[
\mu_n^2
=
\boldsymbol{\pi}_n'
Var(\hat{\boldsymbol{\pi}}_n\mid \mathbf G,\mathbf X)^{-1}
\boldsymbol{\pi}_n
=O(1).
\]
In the scalar case, \(\mu_n \asymp k(n)|\pi_n|\).} Sparsity therefore leads either to degeneracy (when $\mathbf{G}^2\to 0$ too quickly) or to weak identification (when $\mathrm{Cov}(\mathbf{G} \mathbf Y,\mathbf{G}^k \mathbf{X})$ decreases faster than $\mathrm{Var}(\mathbf{G}^k \mathbf X)$ relative to the sampling noise). Intuitively, in a sparse network the friends-of-friends matrix $\mathbf{G}^2$ (and other higher order terms) contains many zeros and a few nodes with disproportionately large reach. As a result, $\mathbf{G}^k \mathbf X$ tends to have low covariance with $\mathbf{G} \mathbf Y$ in the sparse case.

\subsection{Challenges with Peers-of-Peers Instruments in Random Graph Models}

In which settings are network-based IVs likely to be ill-defined, lead to weak identification or standard identification and inference? To answer this question and to explore the above characterization, we examine the behavior of peers-of-peers instruments within the widely studied Erdős–Rényi (ER) random graph model. 

In an ER graph $G(n,p)$, each potential link between two nodes is formed independently with probability $p$. Sparsity or density of the network is therefore governed entirely by $p$, or equivalently by the average degree $d(n) = n p(n)$. Although conceptually simple, these graphs have been widely studied and can form a foundation for more complex models by providing a useful benchmark (\cite{jackson10}). Within Economics, for example, some structural models of network formation are asymptotically indistinguishable from ER graphs (\cite{mele17}), they are used to model such phenomena as market entry and diffusion (e.g., \cite{campbell24}) and they are also a basis for many simulation designs and comparisons (e.g. \cite{graham20}).

\subsubsection{Theoretical Results}\label{sec:theoretical}
Our characterization and proofs rely on results linking the spectral norm of the adjacency matrix to the degree distribution and use it to establish bounds on the first-stage coefficient. More specifically, we require $(\mathbf{I} - \beta \mathbf{G})$ to be invertible for the first-stage covariance to be finite and bounded. A sufficient condition for this is the following assumption.
\medskip
\begin{assumption}\label{A3.2}
\begin{eqnarray}
 \lambda^G_{1}(n)< \frac{1}{|\beta|} \label{A5},
 \end{eqnarray}
 where $\lambda^G_1(n)$ is the largest eigenvalue of the adjacency matrix $\mathbf{G}_n$, and we make explicit the dependency of $\mathbf{G}$ on the sample size \(n\).\footnote{This follows from the restriction $ 1>||\beta \mathbf{G}_n ||_2= |\beta| || \mathbf{G}_n ||_2 $, where $||G||_2$ is the spectral-norm of $\mathbf{G}$. When $\mathbf{G}_n$ is symmetric, as in the case of undirected graphs, its spectral norm coincides with the largest eigenvalue.}
\end{assumption}
This assumption holds automatically when the degree sequence is uniformly bounded, a condition imposed in several empirical and theoretical network models to control the size of peer effects and maintain stable influence (e.g., \cite{dePaula2018, Leung2020}). This is also a common assumption made by discrete choice peer effect models such as \cite{lambotte2025peer}, required for existence and uniqueness of equilibrium with peer effects. However, this assumption may fail when connectivity, and hence the largest eigenvalue, grows with network size. In the subsequent sub-section, we present an alternative specification for such cases.

In the following proposition, we establish an upper bound on the variance-normalized covariance between the endogenous variable and the network-based instrument on ER graphs. We find that contingent on the regime that we are in (defined by the average degree), very different conclusions about weak instruments and identification generally arise. (Note that for tractability, we work with the non-residualized variance--normalized covariance since it is a convenient proxy for the partial ratio in Definition \ref{def:weak_network} and remains informative about first-stage relevance.)

\begin{prop}[Upper bound on variance--normalized covariance in Erd\H{o}s--R\'enyi graphs]
\label{prop:cov_bound_unscl}
Let $\{G(n,p_n)\}_{n\ge1}$ be a sequence of Erd\H{o}s--R\'enyi graphs with adjacency matrix $\mathbf A_n$, expected degree $d_n=np_n$, and maximum degree $\Delta_n$. Let $\{X_i\}_{i=1}^n$ i.i.d.\ uni-dimensional real-valued random variables with $\mathbb E[X_i]=0$ and $\mathbb E[X_i^2]=\sigma_x^2\in(0,\infty)$, independent of $\mathbf A_n$. Suppose Assumption~\ref{A3.1} holds. Define
\[
\mathbf G_n := \mathbf A_n,
\qquad
\mathbf G_n^{(2)} := \mathbf G_n^2 - \mathbf D_n,
\qquad
\mathbf D_n := \operatorname{diag}(\mathbf G_n^2).
\]
If Assumption~\ref{A3.2} holds, then there exists a constant $c<\infty$ such that,
for all sufficiently large \(n\),
\begin{equation}\label{eq:upper_unscaled}
\left|
\frac{\frac{1}{n}\operatorname{Cov}(\mathbf G_n^{(2)}\mathbf X,\mathbf G_n\mathbf Y)}
     {\frac{1}{n}\operatorname{Var}(\mathbf G_n^{(2)}\mathbf X)}
\right|
\;\le\;
c\,\frac{1}{\sqrt{d_n + d_n^3/n}} .
\end{equation}
\end{prop}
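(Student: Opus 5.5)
The plan is to compute the numerator and denominator of the ratio in \eqref{eq:upper_unscaled} separately, conditioning on $\mathbf A_n$ and then averaging over the Erd\H{o}s--R\'enyi randomness. Both quantities are quadratic forms in $\mathbf X$, so everything reduces to traces of products of powers of $\mathbf G_n$.

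\textbf{Denominator.} Since the $X_i$ are i.i.d., mean zero and independent of $\mathbf A_n$,
\[
\tfrac1n\operatorname{Var}(\mathbf G_n^{(2)}\mathbf X)=\tfrac{\sigma_x^2}{n}\,\mathbb E\|\mathbf G_n^{(2)}\|_F^2 .
\]
The off-diagonal entry $(\mathbf G_n^{(2)})_{ij}$ counts common neighbours of $i\neq j$, so it is $\mathrm{Bin}(n-2,p_n^2)$. Its second moment is $(n-2)p_n^2+(n-2)(n-3)p_n^4$. Summing over the $n(n-1)$ ordered pairs and dividing by $n$ gives
\[
\tfrac1n\operatorname{Var}(\mathbf G_n^{(2)}\mathbf X)\asymp \sigma_x^2\,(d_n^2+d_n^4/n).
\]
Removing the diagonal $\mathbf D_n$ is what makes this clean: it prevents the degree terms $\mathbb E[\deg_i^2]$ from adding a spurious $d_n$-order contribution.

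\textbf{Numerator.} Write $\mathbf G_n\mathbf Y$ using \eqref{exp_eqn}. The terms involving $\alpha$ and $\varepsilon$ are uncorrelated with $\mathbf G_n^{(2)}\mathbf X$ by Assumption~\ref{A3.1}(iv) and $\mathbb E X_i=0$, conditional on $\mathbf A_n$. What remains is
\[
\operatorname{Cov}=\sigma_x^2\,\mathbb E\operatorname{Tr}\!\Big(\mathbf G_n^{(2)\prime}\,\mathbf G_n\big[\gamma \mathbf I+(\beta\gamma+\delta)\mathbf G_n(\mathbf I-\beta\mathbf G_n)^{-1}\big]\Big).
\]
I would bound this by Cauchy--Schwarz in the Frobenius inner product:
\[
|\operatorname{Cov}|\le \sigma_x^2\,\mathbb E\Big(\|\mathbf G_n^{(2)}\|_F\,\|\mathbf G_n\|_F\,\big\|\gamma \mathbf I+(\beta\gamma+\delta)\mathbf G_n(\mathbf I-\beta\mathbf G_n)^{-1}\big\|_2\Big).
\]
Under Assumption~\ref{A3.2}, the spectral factor is at most $|\gamma|+|\beta\gamma+\delta|\lambda_1/(1-|\beta|\lambda_1)$. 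This is bounded by a constant $C_0$, provided $\lambda_1$ stays a fixed distance below $1/|\beta|$ for large $n$; I would read Assumption~\ref{A3.2} uniformly in that sense. Also $\mathbb E\|\mathbf G_n\|_F^2=n(n-1)p_n\asymp n d_n$. Applying Cauchy--Schwarz once more in expectation,
\[
\tfrac1n|\operatorname{Cov}|\lesssim \tfrac1n\sqrt{n(d_n^2+d_n^4/n)}\sqrt{n d_n}=\sqrt{d_n}\,\sqrt{d_n^2+d_n^4/n}.
\]

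\textbf{Ratio.} Dividing by the denominator gives
\[
\sqrt{d_n}\big/\sqrt{d_n^2+d_n^4/n}=1\big/\sqrt{d_n+d_n^3/n},
\]
which is exactly \eqref{eq:upper_unscaled}, with $c$ absorbing $C_0$ and the binomial constants.

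\textbf{Main obstacle.} The delicate step is the interchange of expectation with the random spectral factor. Cauchy--Schwarz needs the resolvent bound to hold almost surely, or at least on an event of probability tending to one whose complement is negligible. When $d_n\to\infty$, $\lambda_1\approx d_n$ grows, so Assumption~\ref{A3.2} forces $|\beta|\lesssim 1/d_n$. In that case I would track $\beta$ explicitly; the factor $|\beta\gamma+\delta|\,d_n|\beta|/(1-|\beta|d_n)$ then stays $O(1)$. I would also use concentration of $\|\mathbf G_n\|_F^2$ and $\|\mathbf G_n^{(2)}\|_F^2$, which have variances of smaller order, so that products of expectations and expectations of products agree up to constants. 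The sparse case $d_n\to0$ needs care because the Frobenius norms are not concentrated there. However, the bound uses only second moments via Cauchy--Schwarz in expectation, and the resolvent bound is then deterministic because $\lambda_1\le\Delta_n$.
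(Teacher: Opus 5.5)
Your proposal is correct and follows essentially the same route as the paper: the trace representation of the covariance, a Frobenius/spectral trace inequality with the resolvent factor bounded via Assumption~\ref{A3.2}, Cauchy--Schwarz in expectation, and the binomial second moments of $\mathbf G_n$ and $\mathbf G_n^{(2)}$. You only regroup the operator as $\mathbf G_n[\gamma\mathbf I+(\beta\gamma+\delta)\mathbf G_n(\mathbf I-\beta\mathbf G_n)^{-1}]$, and you are more careful than the paper in requiring a uniform gap $1-|\beta|\lambda_1$ rather than pulling a random factor out of the expectation.

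You should drop the side remark about taking $|\beta|\lesssim 1/d_n$ when $d_n\to\infty$. Your factor there is $|\beta\gamma+\delta|\lambda_1/(1-|\beta|\lambda_1)\asymp|\delta|\,d_n$, not $O(1)$; a spurious $|\beta|$ crept in. Moreover, since $\operatorname{Tr}(\mathbf G_n^{(2)}\mathbf G_n^2)=\|\mathbf G_n^{(2)}\|_F^2$, the ratio is at least $\beta\gamma+\delta$ whenever $\beta,\gamma,\delta>0$, so no bound decaying like $1/\sqrt{d_n}$ is possible along $\beta_n\to0$. With $\beta$ fixed, as in the statement, Assumption~\ref{A3.2} already keeps $\lambda_1$ (hence the realized average degree) bounded, so that regime never arises and your main argument needs nothing more.
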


Proposition~\ref{prop:cov_bound_unscl} establishes that the variance--normalized covariance between the friends-of-friends instrument and the endogenous regressor is asymptotically bounded above by a function of the expected degree. Moreover, under asymptotics for ER graphs and the assumptions of the Proposition, the variance of the friends-of-friends instrument admits a sharp rate:
\[
\frac{1}{n}\operatorname{Var}(\mathbf G_n^{(2)}\mathbf X)
=
\frac{\sigma_x^2}{n}\,\mathbb E\|\mathbf G_n^{(2)}\|_F^2
=
\sigma_x^2\left(d_n^2 + \frac{d_n^4}{n}\right)
+
O\!\left(\frac{d_n^2}{n}+\frac{d_n^4}{n^2}\right).
\]
In contrast, the corresponding covariance admits an exact decomposition\footnote{This decomposition follows from substituting the linear representation  $\mathbf Y=(\mathbf I-\beta\mathbf G_n)^{-1}(\alpha\iota+\gamma\mathbf X+\delta\mathbf G_n\mathbf X+\boldsymbol\varepsilon)$  into $\operatorname{Cov}(\mathbf G_n^{(2)}\mathbf X,\mathbf G_n\mathbf Y)$ and using the Neumann-series expansion $(\mathbf I-\beta\mathbf G_n)^{-1}=\sum_{k\ge0}\beta^k\mathbf G_n^k$, which converges under Assumption~\ref{A3.2}.}
 in which the leading term is proportional to the same quantity:
\[
\frac{1}{n}\operatorname{Cov}(\mathbf G_n^{(2)}\mathbf X,\mathbf G_n\mathbf Y)
=
\frac{\sigma_x^2}{n}\Big(
(\beta\gamma+\delta)\,\mathbb E\|\mathbf G_n^{(2)}\|_F^2
+
\mathbb E[R_n]
\Big),
\]
where
\begin{align*}
R_n
&=
\gamma\,\operatorname{Tr}(\mathbf G_n^{(2)}\mathbf G_n)
+
\gamma\sum_{k=2}^{\infty}\beta^k
\operatorname{Tr}\!\big(\mathbf G_n^{(2)}\mathbf G_n^{k+1}\big)
+
\delta\sum_{k=1}^{\infty}\beta^k
\operatorname{Tr}\!\big(\mathbf G_n^{(2)}\mathbf G_n^{k+2}\big).
\end{align*}
This representation implies that the covariance cannot decay faster than the instrument variance: the leading term is of the same order as $\|\mathbf G_n^{(2)}\|_F^2$, while higher-order contributions are controlled by powers of $\beta$ under Assumption~\ref{A3.2}. Consequently, when $d_n=o(1)$ and the graph collapses asymptotically, the variance of $\mathbf G_n^{(2)}\mathbf X$ converges to zero rapidly while the covariance does not vanish faster, so the variance--normalized covariance diverges. In this extremely sparse regime, the population first-stage estimand is therefore ill-defined. 

In the other extreme, when the graph becomes asymptotically dense as $d_n$ increases, higher-order neighborhoods become nearly deterministic and $\mathbf G_n^{(2)}$ becomes asymptotically proportional to $\mathbf G_n$. Consequently, $\mathbf G_n^{(2)}\mathbf X$ becomes asymptotically collinear with $\mathbf G_n\mathbf X$, and the friends-of-friends instrument adds little independent variation beyond first-order neighbors. This mirrors the dense-network identification failure documented in \cite{bramoulle2009identification} and \cite{Wang2025}. In this regime, the upper bound in Proposition~\ref{prop:cov_bound_unscl} vanishes, the population first-stage coefficient $\pi_n$ shrinks, and identification fails due to asymptotic collinearity.

Between these extremes lies the empirically relevant case in which the expected degree is asymptotically bounded, $d_n=O(1)$. In this regime, the upper bound derived in Proposition~\ref{prop:cov_bound_unscl} does not force the population first-stage coefficient $\pi_n$ to diverge or vanish, so the first-stage estimand is well-defined. Identification strength is governed instead by the information index $k(n)$ following Definition~\ref{def:weak_network}. Weak identification in this regime arises when $k(n)$ remains bounded or, in fact, falls, so that sampling uncertainty in the first-stage does not vanish with \(n\). A natural case when $d_n=O(1)$ is with the presence of a large fraction of isolated or weakly connected nodes.\footnote{In Erdős–Rényi graphs with bounded average degree, the probability that a node has degree zero or one does not vanish asymptotically. In fact, when $d_n<1$, the graph fails to form a giant component (see \cite{erdds1959random}) and, hence, a non-negligible fraction of nodes are isolated.} Thus, increasing \(n\) may not eliminate the presence of many isolated or near-isolated nodes with empty or small higher-order neighborhoods. As these nodes contribute little variation to $\mathbf G_n^{(2)}\mathbf X$, 
sampling noise may not vanish at standard rates and first-stage information may accumulate extremely slowly, giving rise to weak instruments.

Taken together, these results show that network topology matters for the first-stage strength of peers-of-peers instruments. Sparse networks can leave too little higher-order variation for identification, while growing connectivity or asymptotic degeneracy can violate the stability conditions needed for the network operator to remain well behaved.

\subsubsection{Scaling: An Alternate Specification}\label{scaling}

As previously discussed, Assumption~\ref{A3.2} may fail for Erdős--Rényi graphs where the average degree or maximum degree grows with network size. This is because the largest eigenvalue of the raw adjacency matrix, $\lambda_1^A(n)$, grows at the same order as $\max\{d_n,\sqrt{\Delta_n}\}$, where $d_n$ denotes the average degree and $\Delta_n$ the maximum degree (see \cite{krivelevich2001}). In such cases, the unscaled adjacency matrix (denoted $\mathbf{A_n}$) may not remain stable as \(n\) grows, and \((\mathbf I-\beta \mathbf A_n)^{-1}\), even when it exists, may fail to be well behaved or admit a convergent Neumann-series representation.

One possible solution to this issue is to scale the adjacency matrix to bound its spectral norm. There are, however, several ways to implement such scaling. One approach is row-normalization, as in \cite{bramoulle2009identification}, which converts peer sums into peer averages and bounds the row sums of the network operator. Another approach is exact spectral scaling, where the adjacency matrix is scaled by its largest eigenvalue. A third approach is to scale by a deterministic, observable degree-based factor that tracks the order of the spectral norm: 
\[
w_n := \max\{d_n,\sqrt{\Delta_n}\}.
\]
This choice is useful in ER graphs as \(w_n\) tracks the growth rate of the largest eigenvalue using only observable network features, thus keeping the leading eigenvalue of \(\mathbf G_n=\mathbf A_n/w_n\) stochastically bounded while preserving symmetry. See Supplemental Appendix~\ref{SA2} for the formal result and related discussion.\footnote{Relatedly, \cite{Wang2025} study the distinction between row-normalized and scaled adjacency matrices in near-degree-regular networks. They show that while row normalization can induce weak identification in such settings, appropriate scaling can mitigate these issues by aligning the rates of network regressors. Our theoretical analysis is more general, applies to networks with heterogeneous degree distributions, and explicitly leverages results from random graph theory.}

The scaled version of Assumption~\ref{A3.2} and the corresponding extension of Proposition~\ref{prop:cov_bound_unscl} are provided in Supplemental Appendix~\ref{SA1.scl}. Proposition~\ref{prop:cov_bound_scaled} shows that scaling by \(w_n\) rescales the variance and covariance of the friends-of-friends instrument at different powers of \(w_n\), aligning growth rates and preventing explosive behavior of the first-stage ratio. While weak identification can still arise in this specification, it is governed by whether the information index \(k(n)\) grows fast enough relative to the scaling factor \(w_n\).

There are two important remarks about scaling. First, scaling can be interpreted as a form of spectral regularization. It replaces the raw feedback operator \(\beta\mathbf A_n\) with \((\beta/w_n)\mathbf A_n\), shrinking the eigenvalues of the network component by \(1/w_n\). This stabilizes the network operator and can make the first-stage well-defined.\footnote{The scaled reduced-form operator is $
\left(\mathbf I-\frac{\beta}{w_n}\mathbf A_n\right)^{-1}.$
Multiplying inside by \(w_n\) gives
\[
\left(\mathbf I-\frac{\beta}{w_n}\mathbf A_n\right)^{-1}
=
w_n\left(w_n\mathbf I-\beta\mathbf A_n\right)^{-1}
=
w_n\left((\mathbf I-\beta\mathbf A_n)+(w_n-1)\mathbf I\right)^{-1}.
\] 
Thus, relative to the baseline operator \(\mathbf I-\beta\mathbf A_n\), scaling is equivalent to adding the diagonal loading \((w_n-1)\mathbf I\).} However, relative to the original specification, the scaled version estimates a regularized version of \(\beta\). Since this difference is not asymptotically negligible unless the scaling perturbation vanishes, for example when \(w_n\to1\) (unlikely to hold in most ER regimes), the scaled parameter generally differs from the original structural parameter.

Second, this implies that scaling changes the economic interpretation of the peer-effect parameter. Compared to the original specification, scaling changes both the endogenous peer exposure, from \(\mathbf A_nY\) to \(\mathbf A_nY/w_n\), and the generated network-based instruments, from approximately \(\mathbf A_n^kX\) to \(\mathbf A_n^kX/w_n^k\). In the original specification, \(\beta\) measures the marginal effect of the raw peer-outcome sum: a one-unit increase in linked peer \(j\)'s outcome changes \(Y_i\) by \(\beta A_{ij}\). In the scaled specification, the corresponding raw-link marginal effect is \((\beta/w_n)A_{ij}\). When \(w_n\) grows, holding \(\beta\) fixed therefore implies a smaller raw marginal effect of each individual peer outcome. Scaling is therefore useful only when the empirical application permits changing the modeled peer exposure from the raw network sum to the scaled network exposure.\footnote{We thank a previous anonymous referee for noting these points.}

\subsubsection{Sign reversals and near-boundary instability}\label{rem:sign_scaled}
Assumptions~\ref{A3.2} and (its scaled specification counterpart) \ref{A3.3} are sufficient conditions for $(\mathbf I-\beta\mathbf G_n)$ to be invertible and for the reduced-form to be well defined. Lemma~\ref{lem:eigen_cov} in Supplemental Appendix Section~\ref{SA1} shows, however, that invertibility alone does not guarantee stable or well-behaved first-stage relationships. In particular, conditional on $\mathbf G_n$, the population first-stage covariance admits the decomposition
\[
\frac{1}{n}Cov(\mathbf G_n^{(2)}\mathbf X,\mathbf G_n\mathbf Y\mid \mathbf G_n)
=
\frac{\sigma_x^2}{n}\sum_{j=1}^n
\frac{\lambda_j^3(n)\big(\gamma+\delta\lambda_j(n)\big)}{1-\beta\lambda_j(n)}
\;+\;
\frac{\sigma_x^2}{n}\,R_{n,\mathrm{diag}},
\]
where the leading term aggregates the contributions of the spectral components of the network operator and the remainder term arises from the diagonal adjustment in $\mathbf G_n^{(2)}$. Moreover, the remainder satisfies the deterministic bound
\[
|R_{n,\mathrm{diag}}|
\;\le\;
\Tr(D_n)\,
\|\mathbf G_n\|_2\,
\|(\mathbf I_n-\beta\mathbf G_n)^{-1}\|_2\,
\|\gamma\mathbf I_n+\delta\mathbf G_n\|_2,
\]
and, therefore, does not introduce additional amplification through factors of $(1-\beta\lambda_j(n))^{-1}$.\footnote{The magnitude of $R_{n,\mathrm{diag}}$ depends on $\Tr(\mathbf D_n)$ as long as $(I_n - \beta \mathbf{G}_n)^{-1}$ is invertible and bounded. In the scaled version, taking expectations and using $\deg(i)\sim\mathrm{Bin}(n-1,p_n)$ gives
\[
\mathbb E [ \Tr(\mathbf D_n) ]
=\frac{n}{w_n^2}\,\mathbb E[\deg(i)]
=\frac{n d_n}{w_n^2}.
\]
Hence, by Markov's inequality, \(\frac{\Tr(\mathbf D_n)}{n}
=O_p\!\Big(\frac{d_n}{w_n^2}\Big). \)
If, in addition, $\|(\mathbf I_n-\beta\mathbf G_n)^{-1}\|_2=O_p(1)$ and
$\|\mathbf G_n\|_2=O_p(1)$, we have that
\[
\frac{1}{n}|R_{n,\mathrm{diag}}|
\le
\frac{\Tr(\mathbf D_n)}{n}\,
\|\mathbf G_n\|_2\,
\|(\mathbf I_n-\beta\mathbf G_n)^{-1}\|_2\,
\|\gamma\mathbf I_n+\delta\mathbf G_n\|_2
=
o_p(1),
\]
so the diagonal correction is asymptotically negligible and cannot affect the sign of the first-stage covariance provided the leading spectral term is bounded away from zero. Here $w_n=\max\{d_n,\sqrt{\Delta_n}\}$ with $\Delta_n$ the maximum degree. A similar conclusion holds in the unscaled case when the expected degree is uniformly bounded, $d_n=O(1)$.}
Two important phenomena follow directly from this spectral decomposition.
\begin{enumerate}
\item \textbf{Sign of the first-stage covariance.}
Lemma~\ref{lem:eigen_cov} shows that the population covariance between the peers-of-peers instrument $\mathbf G_n^{(2)}\mathbf X$ and the endogenous regressor $\mathbf G_n\mathbf Y$ decomposes into a sum of terms indexed by the eigenvalues of $\mathbf G_n$. As long as $(\mathbf I-\beta\mathbf G_n)$ is invertible, each term is proportional to
\[
\frac{\lambda_j(n)^3\big(\gamma+\delta\lambda_j(n)\big)}{1-\beta\lambda_j(n)}.
\]
Thus, the sign of each term is governed by the sign of $1-\beta\lambda_j(n)$ together with the sign of $(\gamma+\delta\lambda_j(n))$. When $\beta\lambda_j(n)$ exceeds unity for eigenvalues that contribute most to equilibrium variation, higher-order network feedback enters with the opposite sign. In this case, the population first-stage covariance may switch signs as network density increases, even when all primitive peer effects are positive.

\item \textbf{Near-boundary instability and weak identification.}
The same decomposition also explains why identification can deteriorate even in the scaled specification. Lemma~\ref{lem:eigen_cov} shows that each spectral contribution is amplified by a factor proportional to $|1-\beta\lambda_j(n)|^{-1}$. When the largest eigenvalue satisfies $\beta\lambda_1^G(n)\approx 1$, the matrix $(\mathbf I-\beta\mathbf G_n)^{-1}$ becomes poorly conditioned. In this region, small changes in the network or in higher-order components of network structure can lead to large changes in the  covariance from the first-stage. As a result, peers-of-peers instruments may be weak -- not because the reduced form ceases to exist -- but because equilibrium feedback makes the first-stage highly sensitive near the stability boundary. Scaling by $w_n =\max\{d_n, \sqrt{\Delta_n}\}$ ensures that $\lambda_1^G(n)$ is asymptotically close to unity. Hence, for $\beta$ close to 1, the inverse can become ill-conditioned in finite samples and the instruments would be weak even in the scaled model.
\end{enumerate}

\subsubsection{Numerical Illustrations of the Bounds}

To illustrate the previous discussions, in Appendix \ref{SA3_figures}, we report simulated upper bounds for the two specifications: baseline unscaled and the scaled model from Propositions~\ref{prop:cov_bound_unscl} and ~\ref{prop:cov_bound_scaled}. For each $n\in\{200,400,800,1600\}$ we generate $500$ Monte Carlo draws under several Erd\H{o}s--R\'enyi regimes. 

With the unscaled model (Figure \ref{fig:upper_non-norm}), we see that, in the dense regime where $d_n\to\infty$, the bound decreases toward zero, consistent with $\mathbf G_n^{(2)}$ becoming nearly collinear with $\mathbf G_n$, so that the instrument adds little independent variation. The regimes with constant and log-log average degree also have upper bounds close to zero, while in the extremely sparse regime ($d_n=o(1)$), the bound increases with \(n\). The latter is consistent with the  instability of the population first-stage estimand. Meanwhile, in the scaled specification  shown in Figure \ref{fig:upper_norm}, where $\mathbf G_n=\mathbf A_n/w_n$, the average upper bound remains well behaved across all the regimes: it does not diverge, and it does not fall quickly. 

\subsection{Weak-IV Robust Testing}

Given that multiple regimes in Propositions \ref{prop:cov_bound_unscl} and \ref{prop:cov_bound_scaled} have identification failures at the limit, the instruments based on peers-of-peers are likely to be weak empirically for those configurations, especially for the unscaled model. For these settings, we adapt inference robust to a weak first-stage.

In the classical IV model, we could proceed by either implementing the Anderson-Rubin (AR) test from \cite{anderson1949estimation} (known to be unbiased and asymptotically efficient in the just-identified case - see \cite{moreira2009tests}) or the Conditional Likelihood Ratio test from \cite{moreira03}. 
The AR-statistic is given by: 
\begin{align}\label{ar_teststat}
    AR(\beta) = n\hat{g}(\beta)'\Omega(\beta)^{-1}\hat{g}(\beta) \stackrel{H_0}{\to}_d \chi^2_k,
\end{align}
where $\sqrt{n}\hat{g} (\beta_0) = \sqrt{n}(\hat{\xi} - \beta_0 \hat{\pi}) \stackrel{H_0}{\to}_d N(0,\Omega(\beta_0))$ with an appropriate variance estimator for 
\begin{align}\label{var_arstat}
    \Omega(\beta_0) = Var_{\hat{\xi}} - \beta_0 (Cov_{\hat{\xi} \hat{\pi}} + Cov_{\hat{\pi} \hat{\xi}}) + \beta^2_0 Var_{\hat{\pi}}
\end{align}
Equation \eqref{ar_teststat} holds regardless of the strength of the instrument and, thus, the AR test is given as $\phi^{AR}_{n}(\alpha) = \mathbf{1}\{AR(\beta_0) > \chi^2_{k,1-\alpha} \}$ for the null hypothesis  $H_0:\beta = \beta_0$. The confidence set of the test can take different forms including the extreme case where it is the entire real line when $\pi=0$. This is because $\beta$ is not identified in that case and any value of $\beta$ satisfies the restriction condition (the test will have zero power in this case).

However, there is one main distinction of our system \eqref{eq:first_stage}-\eqref{eq:struct_eq2} relative to the classical set-up: the heteroskedasticity induced by network-dependency of $\varepsilon$. This is the term $(\mathbf{I} - \beta \mathbf{G})^{-1} \mathbf{\varepsilon}$ in \eqref{red_model}, implying errors of the form,
\begin{align}\label{spill_errors}
    (\mathbf{I} - \beta \mathbf{G} )^{-1} \varepsilon = \sum_{k=0}^{\infty}\beta^k\mathbf{G}^{k} \varepsilon
\end{align}
causing them to be correlated across the connections even if $\varepsilon$ is assumed to be homoskedastic. We differ from \cite{ross2022}, who also implement weak-IV robust inference in a network setting, by explicitly accounting for network-dependence induced heteroskedasticity.

We provide two possible approaches to inference robust to weak instruments. First, to fully deal with the cross-sectional dependence arising from network spillovers, we propose the use of the variance estimator in \cite{kojevnikov2021limit} (Proposition 4.3). As that variance estimator is consistent for $\Omega(\beta)$ given $\beta$, an application of Slutsky's Lemma guarantees the applicability of the feasible AR-test. This is summarized in the proposition below.\footnote{Other papers, such as \cite{acemoglu2015}, model the cross-sectional dependence as spatially correlated data and use \cite{conley1999gmm} as a consistent estimator for $\Omega(\beta)$. Alternatively, it is common to use clustered variance estimators. The latter requires a block structure (e.g., independence across villages, schools, families) and an asymptotic theory based on "many" networks.}

\begin{prop}\label{prop:Asymp_Dist}
    Consider the null of $H_0: \beta = \beta_0$ and let $ \tilde{V}_{\pi},\tilde{V}_{\xi}$ be the \cite{kojevnikov2021limit} variance estimators for the OLS coefficients of $\mathbf{G}^2 \mathbf{X}$ in equations \eqref{eq:first_stage} and \eqref{eq:struct_eq2}  respectively. Then, under the regularity conditions of Proposition 4.1 in \cite{kojevnikov2021limit},
\begin{eqnarray}\label{eq:var_prop1}
    \hat{\Omega}(\beta) = \tilde{V}_{\hat{\xi}} - \beta (\tilde{Cov}_{\hat{\xi} \hat{\pi}} + \tilde{Cov}_{\hat{\pi} \hat{\xi}}) + \beta^2 \tilde{V}_{\hat{\pi}} \to_p \Omega(\beta).
\end{eqnarray}
Furthermore,
\begin{align}\label{Asymp_Dist_ar_test}
    \tilde{AR}_n(\beta_0) = n\hat{g}(\beta_0)'\hat{\Omega}(\beta_0)^{-1}\hat{g}(\beta_0) \stackrel{H_0}{\to}_d \chi^2_k
\end{align}
\end{prop}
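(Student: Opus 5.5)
The argument has two parts: consistency of the plug-in variance $\hat\Omega(\beta)$, and then the limiting $\chi^2_k$ law, obtained from a network CLT for $\hat g(\beta_0)$ combined with Slutsky's lemma.

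\textbf{Step 1: linear representation of $\hat g(\beta_0)$.} By Frisch--Waugh--Lovell, with $\tilde Z=M_W\mathbf G^2\mathbf X$, write
\[
\hat\xi-\beta_0\hat\pi=(\tilde Z'\tilde Z)^{-1}\tilde Z'\big(M_W\mathbf Y-\beta_0 M_W\mathbf G\mathbf Y\big).
\]
Under $H_0$, equation \eqref{eq:struct_eq2} gives $\boldsymbol\xi=\beta_0\boldsymbol\pi$, so the population moment vanishes. Moreover, $M_W(\mathbf Y-\beta_0\mathbf G\mathbf Y)=M_W\varepsilon$, since the constant, $\mathbf X$ and $\mathbf G\mathbf X$ terms are partialled out exactly. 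Hence
\[
\sqrt n\,\hat g(\beta_0)=\Big(\tfrac1n\tilde Z'\tilde Z\Big)^{-1}\tfrac{1}{\sqrt n}\tilde Z'\eta(\beta_0),
\]
where $\eta(\beta_0)$ is the structural error. The key point is that this representation contains no first-stage coefficient, so its validity does not depend on $\pi_n$ being bounded away from zero. This is why the argument survives weak-IV asymptotics in the sense of Definition \ref{def:weak_network}.

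\textbf{Step 2: CLT.} The summands $\tilde Z_i\eta_i$ are functions of $(\mathbf X,\varepsilon)$ over a network. Under the $\psi$-dependence and moment conditions of Proposition 4.1 in \cite{kojevnikov2021limit}, their CLT gives
\[
\tfrac{1}{\sqrt n}\tilde Z'\eta\to_d N(0,\Sigma).
\]
Combining this with a law of large numbers for $\tfrac1n\tilde Z'\tilde Z$ yields $\sqrt n\,\hat g(\beta_0)\to_d N(0,\Omega(\beta_0))$. Bilinearity of covariance gives exactly the form \eqref{var_arstat}:
\[
\Omega(\beta_0)=\operatorname{Var}(\hat\xi)-2\beta_0\operatorname{Cov}(\hat\xi,\hat\pi)+\beta_0^2\operatorname{Var}(\hat\pi).
\]

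\textbf{Step 3: consistency of $\hat\Omega(\beta)$.} Stack the two OLS problems, \eqref{eq:first_stage} and \eqref{eq:struct_eq2}, into one system with score vector $(\tilde Z_i\tilde\epsilon_i,\tilde Z_i\eta_i)$. Apply Proposition 4.3 of \cite{kojevnikov2021limit} to this joint score. It delivers consistent HAC estimators of the full joint long-run variance, and therefore of $\tilde V_{\hat\xi}$, $\tilde V_{\hat\pi}$ and the cross terms $\tilde{Cov}$. Since $\hat\Omega(\beta)$ is a fixed continuous (quadratic) function of these blocks for given $\beta$, the continuous mapping theorem gives \eqref{eq:var_prop1}.

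\textbf{Step 4: conclusion.} Provided $\Omega(\beta_0)$ is nonsingular, which I add as a regularity condition, Slutsky's lemma and the continuous mapping theorem yield $\tilde{AR}_n(\beta_0)\to_d\chi^2_k$.

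\textbf{Main obstacle.} The hardest step is verifying that the Kojevnikov et al.\ conditions hold here. The estimated residuals enter the HAC estimator, and the dependence comes from $(\mathbf I-\beta\mathbf G)^{-1}\varepsilon$, which has infinite range. I would handle this by truncating the Neumann series at a distance $s$. Under Assumption \ref{A3.2}, the remainder decays geometrically like $|\beta\lambda_1|^s$, which gives $\psi$-dependence coefficients decaying in network distance. Effects of the estimated residuals are then controlled by the standard argument that the estimation error is $O_p(n^{-1/2})$ under the null. Note that this holds for the reduced-form quantities regardless of instrument strength.
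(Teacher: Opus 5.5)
Your Steps 3--4 are the paper's argument. It stacks the scores $z_i\tilde\epsilon_i$ and $z_i\eta_i$, applies the \cite{kojevnikov2021limit} HAC consistency result to their joint long-run variance, maps this into $\Omega(\beta)$ through the sandwich $(I_2\otimes\Sigma_{zz}^{-1})V_n(S_n)(I_2\otimes\Sigma_{zz}^{-1})$ with weights $[\beta I_k,\,-I_k]$, and concludes by Slutsky.

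The real difference is in Steps 1--2. The paper does not prove $\sqrt n\,\hat g(\beta_0)\to_d N(0,\Omega(\beta_0))$. It cites \cite{anderson1949estimation}, an independent-errors result that does not cover network dependence. Your Frisch--Waugh--Lovell identity $\hat\xi-\beta_0\hat\pi=(\tilde Z'\tilde Z)^{-1}\tilde Z'\varepsilon$ is exact in finite samples under $H_0$, because $\tilde Z'W=0$ and $\eta-\beta_0\tilde\epsilon=\varepsilon$. It supplies that missing step and shows directly why nothing hinges on $\pi_n$.

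It also gives you more than the paper's route. Two things cancel from the null score: the spillover term $(\mathbf I-\beta\mathbf G)^{-1}\varepsilon$, and the higher-order terms $\mathbf G^{k+2}\mathbf X$ omitted from the linear first-stage projection. So $\tilde Z_i\varepsilon_i$ has conditional mean zero given $(\mathbf X,\mathbf G)$, whereas the paper's individual scores $z_i\tilde\epsilon_i$ and $z_i\eta_i$ do not. If $\varepsilon$ is independent across $i$, your CLT becomes a conditional Lindeberg CLT with fixed weights $\tilde Z_i$. By bilinearity, $\hat\Omega(\beta_0)$ is then simply an estimator of the variance of $n^{-1/2}\tilde Z'\varepsilon$. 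The network dependence you single out as the main obstacle therefore matters only for the separate blocks, i.e.\ for \eqref{eq:var_prop1} at values of $\beta$ away from the truth. This is also why the homoskedastic AR test covers correctly in Table~\ref{tab:unscaled_diagnostics_coverage} even at $d=5$, $\beta=0.95$, where the sparsity explanation in the text does not apply.

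What the paper's block-by-block presentation delivers directly is the plug-in $\hat\Omega(\beta)$ for every $\beta$, which test inversion needs. You additionally address the estimated residuals, which the paper's proof, written with the true errors, passes over.

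Three smaller points need fixing.
\begin{enumerate}
\item For $k>1$ the cross term must be $\mathrm{Cov}(\hat\xi,\hat\pi)+\mathrm{Cov}(\hat\pi,\hat\xi)$, not $2\,\mathrm{Cov}(\hat\xi,\hat\pi)$.
\item Your LLN step needs $\frac1n\tilde Z'\tilde Z$ to have a nonsingular limit, as does the paper's invertible $\Sigma_{zz}$. By the variance rate stated after Proposition~\ref{prop:cov_bound_unscl}, this fails when $d_n=o(1)$. There you should normalize by $(\tilde Z'\tilde Z)^{1/2}$ instead, which is harmless because the AR statistic is invariant to rescaling $\tilde Z$. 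This requires $\max_i\tilde Z_i'(\tilde Z'\tilde Z)^{-1}\tilde Z_i\to0$.
\item Your truncation sketch would not by itself verify Assumption 4.1 of \cite{kojevnikov2021limit}. A decay rate $(|\beta|\lambda_1)^s$ must still beat shell sizes $\delta_n^\partial(s)$ that grow roughly like $d_n^s$ in Erd\H{o}s--R\'enyi graphs, and Assumption~\ref{A3.2} alone does not guarantee this. Since the proposition takes those conditions as hypotheses, this step is not needed.
\end{enumerate}
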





However, we note that even the \textit{homoskedastic} implementation of the Anderson-Rubin test and, thus, the Conditional Likelihood Ratio test (CLR) perform well asymptotically. This is because the higher-order terms in \eqref{spill_errors} are likely to be negligible when the network is sparse and converging to 0 and $\beta$ is small. Indeed, we have that
\begin{eqnarray}
 Var\left((\mathbf{I} - \beta \mathbf{G} )^{-1} \varepsilon\right) &=& Var\left(\sum_{k=0}^{\infty}\beta^k\mathbf{G}^{k} \varepsilon\right) \notag\\
 &\to & Var(\varepsilon),
\end{eqnarray}
as $ \beta^k \mathbf{G}^k \to \mathbf{0}$.

Thus, asymptotic inference based on homoskedastic errors (e.g., \cite{moreira03}) is likely to perform very well under weak instruments induced by network sparsity, since network-induced dependence becomes negligible in such cases. 

\section{Monte Carlo Simulations}\label{S4}

We now provide Monte Carlo simulations to illustrate the finite-sample properties of our theoretical results. We base our data-generating process on those used in \cite{bramoulle2009identification}, but with alternate network structures that showcase the issue of weak identification with peers-of-peers instruments for both the scaled and unscaled specifications.

We consider Erdős--Rényi random graphs (\cite{erdds1959random}) with $\textit{d = np}$ being the average degree. Following \cite{bramoulle2009identification}, we draw a uni-dimensional $X_i$ with approximately $5\%$ of values to be 0\footnote{Using a Bernoulli(0.9458333) as in \cite{bramoulle2009identification}.} and the remaining $95\%$ follow an i.i.d. log-normal ($X_i \stackrel{i.i.d.}{\sim} LogNormal(1,3)$). These are fixed for a given data size and average degree. We draw the error terms $\varepsilon_i \stackrel{i.i.d.}{\sim} Normal(0,1)$, independently for each simulation run. The true coefficients are set at $\alpha = 0.7683, \gamma = 0.0834, \delta = 0.1507$, and $\beta \in \{0.4666, 0.95\}$ to evaluate the impact of a change in intensity of the peer effect. 

We perform two sets of exercises. First, we compute first-stage diagnostics across the degree-growth regimes considered in Propositions~\ref{prop:cov_bound_unscl} and~\ref{prop:cov_bound_scaled}. These include the average first-stage \(F\)-statistic, the sample covariance between the endogenous regressor \(\mathbf G\mathbf Y\) and the instrument \(\mathbf G^{(2)}\mathbf X\), and the variance of the instrument. These are reported in Figures~\ref{fig:Fstat}-\ref{fig:Var} in Supplemental Appendix Section \ref{SA3_figures}. Then, we consider the performance of the TSLS, with the instrument $\mathbf{G}^{(2)} X$ for the endogenous variable $\mathbf{G}Y$, and the exogenous variables as instruments for themselves, under the inference procedures developed in the previous section. We run simulations by varying \textit{d} across settings, taking values from the set $\{0.25,0.5,0.75,1,2,5\}$ with sample size \(n\) varying in $\{250, 500, 1000, 2000\}$.\footnote{Notice that the smaller the value of \textit{d}, the sparser the network.} For every setting, we fix \textit{d} and \textit{n} and set the Monte Carlo Simulation number to 1000, keeping the network structure fixed across all runs. We estimate all parameters for two specifications: the unscaled setting where \(\mathbf G_n=\mathbf A_n\), and the scaled setting where \(\mathbf G_n=\mathbf A_n/w_n\) with \(w_n=\max\{d_n,\sqrt{\Delta_n}\}\). The main text reports the unscaled results in Table~\ref{tab:unscaled_diagnostics_coverage}. Appendix Section~\ref{SA3} reports the corresponding scaled results and additional diagnostics, including confidence-interval lengths.


Figure \ref{fig:Fstat} shows that for the  baseline model, first-stage $F$-statistic remains flat and close to zero across all regimes except the extremely sparse case. In the extremely sparse regime, the $F$-statistic increases with sample size. This is because the instrument variance goes to 0, rather than genuine accumulation of identifying information. In contrast, for the scaled specification, the first-stage $F$-statistic increases monotonically with \(n\) across regimes. Figures~\ref{fig:Cov} and \ref{fig:Var} further support this analysis, showcasing the covariance between the endogenous regressor and the instrument, and the variance of the instrument, respectively. In the unscaled case, the covariance between the endogenous regressor and the instrument remains close to zero in most regimes, while the variance of the instrument is large except in the extremely sparse regime. By contrast, in the scaled specification, both the covariance and the variance of the instrument remain stable across sample sizes once the extremely sparse regime is excluded. 


\afterpage{%
\begin{landscape}
\begin{table}[p]
\centering
\scriptsize
\setlength{\tabcolsep}{2.5pt}
\renewcommand{\arraystretch}{1.05}
\caption{Unscaled Model: TSLS Estimates, First-Stage Diagnostics, and Empirical Coverage Probabilities. Panel A reports TSLS estimates for specification~\eqref{struc_model} under the unscaled network, together with the correlation between \(\mathbf{GY}\) and \(\mathbf G^{(2)}\mathbf X\), and the first-stage \(F\)-statistic. Panel B reports coverage probabilities for conventional \(t\)-test using homoskedastic standard errors,  \(t\)-test with \cite{kojevnikov2021limit} network-dependent standard errors, and Anderson--Rubin inference using homoskedastic standard errors.}
\label{tab:unscaled_diagnostics_coverage}

\begin{adjustbox}{width=\linewidth,totalheight=0.88\textheight,keepaspectratio}
\begin{tabular}{ll|llllll|llllll|llllll}
\toprule
\multicolumn{20}{l}{\textbf{Panel A: TSLS Estimates and First-Stage Diagnostics}} \\
\midrule
 & & \multicolumn{6}{c|}{\textbf{TSLS Estimate}} 
 & \multicolumn{6}{c|}{\textbf{Correlation ($\mathbf{GY}$, $\mathbf{G}^2\mathbf{X}$)}} 
 & \multicolumn{6}{c}{\textbf{First-Stage F-statistic}} \\
\midrule
$\beta_0$ & $n \backslash d$ 
& 0.25 & 0.5 & 0.75 & 1 & 2 & 5 
& 0.25 & 0.5 & 0.75 & 1 & 2 & 5 
& 0.25 & 0.5 & 0.75 & 1 & 2 & 5 \\
\midrule

\multirow{4}{*}{0.4666} 
& 250  & 0.460 & 0.555 & 0.469 & 0.468 & 0.467 & 0.501 & 0.343 & -0.104 & -0.304 & -0.153 & -0.128 & -0.064 & 28.401 & 3.711 & 31.111 & 6.337 & 2.055 & 1.035 \\
& 500  & 0.444 & 0.570 & 0.466 & 0.467 & 0.466 & 0.446 & 0.237 & -0.065 & 0.241 & -0.074 & -0.083 & -0.055 & 21.844 & 6.132 & 21.376 & 1.962 & 2.170 & 1.448 \\
& 1000 & 0.463 & 0.467 & 0.467 & 0.467 & 0.478 & 0.466 & 0.458 & 0.155 & -0.275 & -0.186 & 0.017 & -0.069 & 252.452 & 25.033 & 72.175 & 29.988 & 0.627 & 4.882 \\
& 2000 & 0.465 & 0.466 & 0.459 & 0.476 & 0.467 & 0.462 & 0.537 & 0.226 & 0.039 & -0.013 & -0.016 & -0.008 & 819.442 & 88.291 & 2.197 & 1.671 & 2.164 & 0.178 \\

\noalign{\medskip}

\multirow{4}{*}{0.95} 
& 250  & 0.953 & 0.950 & 0.948 & 0.951 & 0.825 & 0.928 & -0.079 & -0.181 & -0.236 & -0.178 & -0.124 & -0.008 & 4.340 & 12.939 & 18.718 & 7.080 & 3.496 & 0.184 \\
& 500  & 0.952 & 0.951 & 0.952 & 0.951 & 0.905 & 0.967 & -0.082 & -0.125 & -0.188 & -0.179 & -0.057 & 0.001 & 11.345 & 8.463 & 16.339 & 14.451 & 1.527 & 0.887 \\
& 1000 & 0.951 & 0.950 & 0.950 & 0.950 & 0.852 & -4.806 & -0.093 & -0.119 & -0.190 & -0.182 & -0.018 & -0.027 & 16.462 & 15.505 & 46.086 & 31.921 & 1.441 & 0.698 \\
& 2000 & 0.951 & 0.950 & 0.950 & 0.951 & 0.969 & 0.931 & -0.104 & -0.157 & -0.176 & -0.107 & -0.003 & 0.004 & 45.5 & 87.2 & 70.0 & 26.8 & 0.23 & 0.70 \\

\midrule
\multicolumn{20}{l}{\textbf{Panel B: Coverage Probabilities}} \\
\midrule

 & & \multicolumn{6}{c|}{\textbf{\(t\)-test (Homoskedastic SE)}} 
 & \multicolumn{6}{c|}{\textbf{\(t\)-test (\cite{kojevnikov2021limit} SE)}} 
 & \multicolumn{6}{c}{\textbf{AR (Homoskedastic SE)}} \\
\midrule
$\beta_0$ & $n \backslash d$ 
& 0.25 & 0.5 & 0.75 & 1 & 2 & 5 
& 0.25 & 0.5 & 0.75 & 1 & 2 & 5
& 0.25 & 0.5 & 0.75 & 1 & 2 & 5 \\
\midrule

\multirow{4}{*}{0.4666}
& 250  & 0.959 & 0.996 & 0.961 & 0.995 & 0.999 & 1.000 
        & 0.952 & 1.000 & 0.973 & 0.997 & 0.999 & 1.000
        & 0.943 & 0.944 & 0.942 & 0.941 & 0.960 & 0.949 \\
& 500  & 0.979 & 0.988 & 0.973 & 1.000 & 1.000 & 1.000
        & 0.978 & 0.998 & 1.000 & 1.000 & 1.000 & 1.000
        & 0.948 & 0.924 & 0.944 & 0.956 & 0.942 & 0.946 \\
& 1000 & 0.950 & 0.971 & 0.965 & 0.967 & 1.000 & 0.995
        & 0.950 & 0.971 & 0.965 & 0.967 & 1.000 & 0.995
        & 0.948 & 0.949 & 0.959 & 0.953 & 0.947 & 0.949 \\
& 2000 & 0.958 & 0.963 & 0.999 & 1.000 & 1.000 & 1.000
        & 0.947 & 0.994 & 1.000 & 1.000 & 1.000 & 1.000
        & 0.956 & 0.959 & 0.952 & 0.950 & 0.940 & 0.957 \\

\noalign{\medskip}

\multirow{4}{*}{0.95}
& 250  & 0.999 & 0.982 & 0.967 & 0.989 & 1.000 & 1.000
        & 0.994 & 0.960 & 0.946 & 0.981 & 0.999 & 1.000
        & 0.953 & 0.953 & 0.947 & 0.961 & 0.953 & 0.955 \\
& 500  & 0.975 & 0.991 & 0.972 & 0.981 & 1.000 & 1.000
        & 0.954 & 0.957 & 0.948 & 0.975 & 1.000 & 1.000
        & 0.940 & 0.958 & 0.948 & 0.951 & 0.950 & 0.952 \\
& 1000 & 0.977 & 0.973 & 0.969 & 0.965 & 0.998 & 1.000
        & 0.953 & 0.954 & 0.962 & 0.959 & 0.998 & 1.000
        & 0.950 & 0.939 & 0.960 & 0.951 & 0.945 & 0.963 \\
& 2000 & 0.968 & 0.956 & 0.949 & 0.976 & 1.000 & 1.000
        & 0.959 & 0.950 & 0.949 & 0.985 & 1.000 & 1.000
        & 0.962 & 0.953 & 0.945 & 0.953 & 0.960 & 0.958 \\

\bottomrule
\end{tabular}
\end{adjustbox}

\end{table}
\end{landscape}
}

Table~\ref{tab:unscaled_diagnostics_coverage} reports the TSLS estimates, first-stage diagnostics (Panel A) and the coverage probabilities (Panel B) for the baseline specification. 
Several features are worth highlighting. When \(d<1\), the first-stage diagnostics can look deceptively strong, especially as \(n\) grows. This is consistent with our theoretical result that the first-stage estimand may become ill-defined in extremely sparse networks. As shown in Table~\ref{tab:covariance} in the Appendix, the covariance between $\mathbf G^{(2)}\mathbf X$ and $\mathbf G\mathbf Y$ is in fact extremely small in this regime, indicating that the large correlations are mechanically driven by vanishing instrument variance rather than genuine identifying power. Furthermore, as the average degree increases beyond one, we observe very low correlations and weak first-stage $F$-statistics across all sample sizes. This pattern persists as \(n\) grows, suggesting that this is not a small-sample artifact.

Panel B reports empirical coverage probabilities at the 95\% nominal level for three inference procedures: the conventional $t$-test with homoskedastic standard errors, the $t$-test using network-dependent standard errors  (\cite{kojevnikov2021limit}), and the Anderson--Rubin (AR) test with homoskedastic errors.\footnote{We also implement the AR test with network-dependent standard errors (reported in Table~\ref{tab:ar-kojev-cov} in Appendix Section \ref{SA3_tables}). Given the homoskedastic data generating process, the resulting coverage rates are very similar for larger samples.} We find that the $t$-test fails to provide nominal coverage for $d<1$. In these regimes, coverage probabilities are often close to 100\%, reflecting extremely wide confidence intervals (see Tables \ref{tab:ar-kojev-cov}-\ref{tab:ci_length_ar} in Appendix Section \ref{SA3_tables}) —symptomatic of weak instruments and identification failure. This is consistent with the discussion in Section~\ref{S3} and Panel A of Table \ref{tab:unscaled_diagnostics_coverage}. Even in the extremely sparse regime, where coverage improves toward the nominal 95\% level as \(n\) increases, we continue to observe systematic over-coverage. While the use of network-dependent standard errors mitigates some of these distortions,  the resulting confidence intervals remain conservative in many cases. This suggests that the problem cannot be resolved by variance correction alone.

To better understand these distortions, Figures~\ref{fig:cov_beta_un} and~\ref{fig:cov_t-stat_un} in Appendix Section~\ref{SA3_figures} display the empirical distributions of the TSLS estimator \(\hat{\beta}\) and the associated \(t\)-statistic computed using network-dependent standard errors. The figures compare two unscaled cases: a relatively strong first stage (\(d=0.5\)) and a weak first stage (\(d=2\)). In both cases, the sampling distribution of \(\hat{\beta}\) remains centered near the true parameter, but in the weak case it becomes heavy-tailed, while the associated \(t\)-statistics display clear departures from normality and multi-modality (e.g., see \cite{staigerstock97, andrews2019weak} for discussions in the weak-instruments case). These patterns reflect ill-conditioned first-stage inversions in the TSLS variance formula and explain the observed over-coverage of conventional confidence intervals.


We also report the corresponding results for the scaled specification in Table \ref{tab:scaled_diagnostics_coverage} (Appendix Section~\ref{SA3_tables}). As expected, the scaled model generally displays stronger first-stage diagnostics, especially when \(\beta=0.4666\). However, when \(\beta=0.95\), the first stage can still deteriorate in finite samples, consistent with the near-boundary phenomenon discussed in Section~\ref{rem:sign_scaled}. Thus, scaling stabilizes the network operator but does not rule out weak first stages in the scaled specification. As the average degree increases further away from one, the correlation moves away from zero.

\section{Empirical Application of Weak-IV Robust Inference with Network-Based Instruments}\label{S5}


\cite{Konig2017} investigate strategic complementarities in the use of violence within a network of armed groups during the Second Congo War and examine how networks of alliances and hostilities influence the intensity of conflict. Nodes correspond to armed actors, while links encode military relationships, distinguishing between alliances (groups fighting on the same side), enmities (groups that directly clash), and neutrality (groups that are neither allies nor enemies). They model conflict as a network game, and derive the corresponding Nash equilibrium with the optimal level of fighting intensity depending on the fighting of its allies and enemies. Endogeneity of these network-sum regressors is addressed using network-based instrumental variables constructed from exogenous weather shocks. In particular, rainfall in the homeland of linked groups is used as an excluded instrument, and—following \cite{bramoulle2009identification}—the authors explicitly exploit second-degree instruments based on the rainfall of neighbors-of-neighbors. 

We focus on the main empirical specification reported in Table~1 of \cite{Konig2017}, which examines how a group’s own fighting effort responds to the fighting efforts of its network neighbors. The empirical model is over-identified and features three endogenous regressors: total fighting effort of allies (TFA), of enemies (TFE), and of neutral groups (TFN). These are instrumented using rainfall shocks in higher-order network neighborhoods, corresponding to instruments of the form \(\mathbf G^{(2)}\mathbf X\) in our framework.

First, we connect the sparsity documented in Table \ref{tab:deg_sum} to first-stage strength. Figure~\ref{fig:par-var-cor-konig} reports a heatmap of the variance-normalized \emph{partial} covariance between the endogenous network-sum regressors and the higher-order rainfall instruments.\footnote{For expositional purposes, the partial covariances and variances are computed without conditioning on additional exogenous controls that may enter the first-stage regressions.} Across most instruments, the resulting ratios are of the order $10^{-3}$ to $10^{-1}$, indicating very small partial first-stage coefficients. Together with the sparsity of the underlying ally and enemy networks (e.g., Figure \ref{fig:konig_graph_allies} and Table \ref{tab:deg_sum}), these findings are consistent with a weak-identification environment in the sense of Definition \ref{def:weak_network}.

\begin{figure}[!h]
    \centering
    \includegraphics[width=\textwidth]{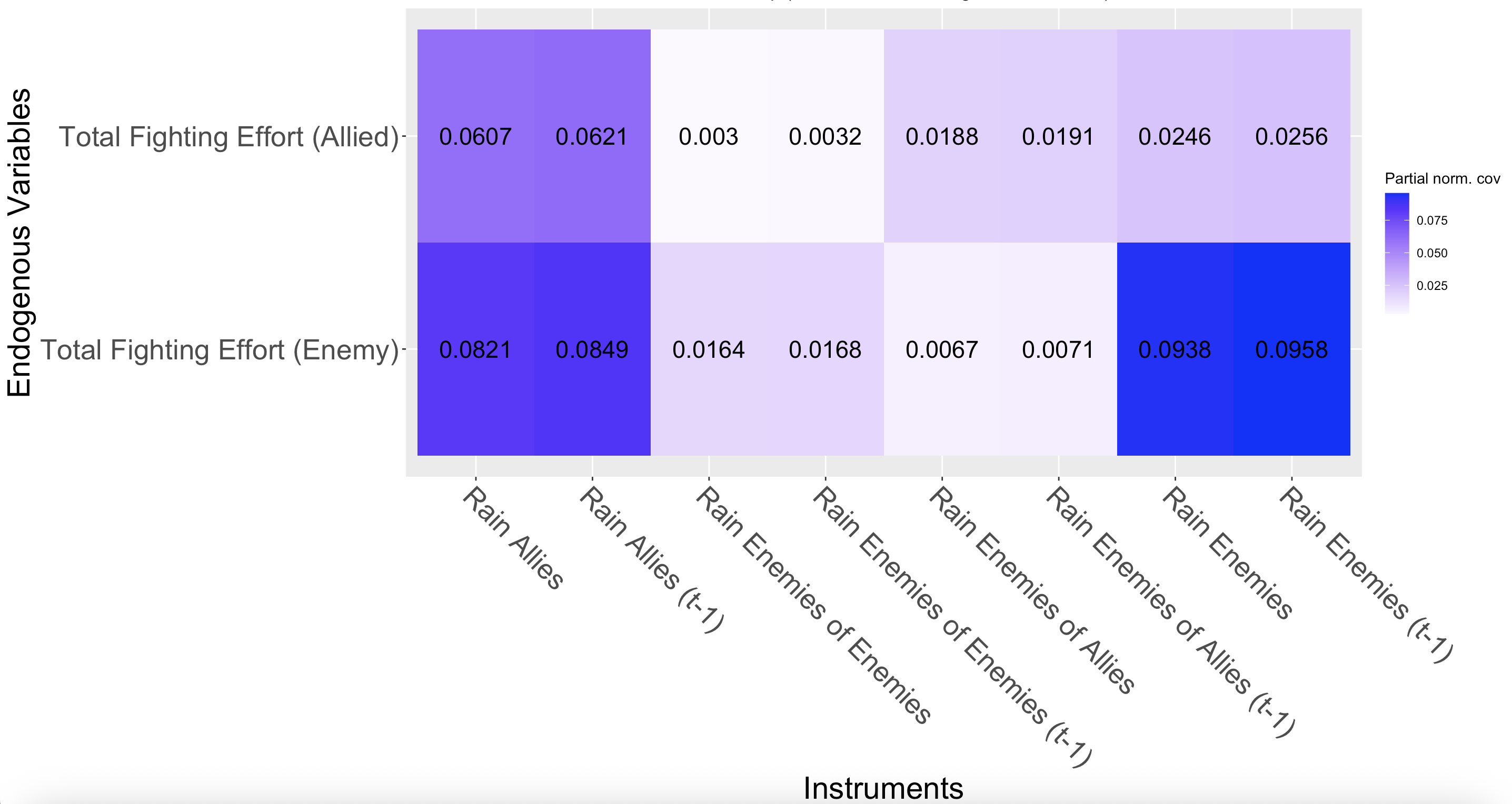}
    \caption{Variance Normalized Partial-Covariance between the two Endogenous Variables with instruments (\cite{Konig2017})}
    \label{fig:par-var-cor-konig}
\end{figure}

Second, we look at the three most relevant specifications from Table~1. These are reproduced in Table~\ref{tab:konig_iv_results}, where we report point estimates, 95\% non-robust confidence intervals based on clustered $t$-tests, and projected 95\% Anderson–Rubin and conditional likelihood ratio confidence intervals.\footnote{The authors use a custom spatial TSLS estimator in \texttt{Stata} to account for spatial correlation. We instead use clustered standard errors via the \texttt{ivreg2} command, which is also used in their replication files and is compatible with the \texttt{weakiv} package of \cite{weakiv2013}. As a result, reported standard errors may differ slightly from those in \cite{Konig2017}, while point estimates remain identical.} We compute standard errors based on the homoskedastic case, given the sparsity of the underlying network, its simplicity in implementation, and its performance in this setting as seen in the previous section.

\begin{table}[ht]
\centering
\caption{IV Estimates and Confidence Intervals for \cite{Konig2017}. We present original estimates from their Table 1, Columns 2-4, together with our adapted Anderson-Rubin and Conditional Likelihood Ratio Confidence Intervals.}
\label{tab:konig_iv_results}
\resizebox{\textwidth}{!}{%
\begin{tabular}{|l c c c c|}
\hline
\multicolumn{5}{|l|}{}\\
\textbf{Endogenous Variable} & \textbf{Point Estimate} & \textbf{Original CI} & \textbf{AR CI} & \textbf{CLR CI} \\
\multicolumn{5}{|l|}{}\\
\hline

\multicolumn{5}{|l|}{}\\
\multicolumn{5}{|l|}{\textbf{Table 1, Column 2 -- Reduced-Form IV Specification}} \\
\multicolumn{5}{|l|}{}\\
Enemies (TFE) & 0.130 & [0.023, 0.240] & [-0.049, +$\infty$) & [-0.067, +$\infty$) \\
Allies (TFA)  & -0.218 & [-0.377, -0.058] & [-0.431, +$\infty$) & [-0.457, +$\infty$) \\
\multicolumn{5}{|l|}{}\\

\multicolumn{5}{|l|}{\textbf{Table 1, Column 3 -- IV Specification}} \\
\multicolumn{5}{|l|}{}\\
Enemies (TFE) & 0.066 & [0.025, 0.106] & (-$\infty$, +$\infty$) & (-$\infty$, 0.140] \\
Allies (TFA)  & -0.117 & [-0.204, -0.029] & [-0.234, +$\infty$) & [-0.146, +$\infty$) \\
\multicolumn{5}{|l|}{}\\

\multicolumn{5}{|l|}{\textbf{Table 1, Column 4 -- IV Specification}} \\
\multicolumn{5}{|l|}{}\\
Enemies (TFE)  & 0.083 & [0.041, 0.125] & (-$\infty$, +$\infty$) & (-$\infty$, +$\infty$) \\
Allies (TFA)   & -0.114 & [-0.198, -0.030] & (-$\infty$, +$\infty$) & (-$\infty$, +$\infty$) \\
Neutrals (TFN) & 0.004 & [-0.005, 0.013] & [-0.011, 0.011] & (-$\infty$, +$\infty$) \\
\hline
\end{tabular}%
}
\end{table}

The non-robust confidence intervals reported in Table~1 of \cite{Konig2017} would suggest statistically significant effects of allies’ and enemies’ fighting efforts. 
However, once weak-IV–robust inference is applied, this conclusion is weakened. Across the same specifications, the Anderson–Rubin and conditional likelihood ratio confidence intervals not only include zero but—with the exception of a single case—are unbounded for all three endogenous variables. This pattern is characteristic of a weak-instrument environment in which the covariance between the instruments and the endogenous regressors is small relative to sampling variability.

The source of this weakness is the sparse structure of the underlying alliance–enmity networks. As shown in Table~\ref{tab:deg_sum}, Figures~\ref{fig:konig_graph_allies}, and Figure~\ref{fig:par-var-cor-konig}, the underlying networks are sparse and the resulting higher-order rainfall instruments provide little independent first-stage variation. Consequently, these instruments are only weakly correlated with the corresponding network-sum regressors. Together, these findings illustrate how standard inference can be misleading in network settings: non-robust confidence intervals mask weak identification due to sparsity and limited higher-order neighborhoods, while weak-IV–robust methods remain valid under weak identification.

\section{Conclusion}\label{S6}

The growth in measuring peer-effects in academia and policy should also bring renewed attention to the challenges in inference. In this paper, we showed how ill-defined first-stage estimands and/or weak instruments can arise naturally in the linear-in-means model with network-based instruments due to a specific mechanism related to network topology. We characterized the conditions using Erdős--Rényi graphs, a standard network formation model, and illustrated their impact on inference. We presented an alternative scaled specification that can help mitigate some identification challenges when the empirical application supports the corresponding reinterpretation of the peer-effect parameter. In cases where weakness persists, we adapt results from standard weak-IV robust testing literature to the network-based instruments setting and combine it with a consistent network-dependent variance estimator. Such concerns are relevant and examples of sparse networks span many fields, including salient cases in Economics (see Table \ref{tab:deg_sum}). 
While our characterization results focus on the linear-in-means model, our main intuition extends to any linear regression framework with network-based instruments. Indeed, the adapted weak instruments definition \ref{def:weak_network} for network-based IVs does not rely on that specific model. Furthermore, issues with first-stage estimands and lack of information for higher-order networks would be prevalent in all such cases. Thus, we deem that such concerns are warranted in many more applications. Finally, it is likely that such issues extend to non-linear models of network interactions, such as discrete choice models with network-based instruments (e.g., \cite{volpe2025discrete}).
Future work should investigate such settings.



\bibliography{report_bib}

@article{battaglini2018,
author = {Battaglini, Marco and Patacchini, Eleonora},
title = {Influencing Connected Legislators},
journal = {Journal of Political Economy},
volume = {126},
number = {6},
pages = {2277-2322},
year = {2018},
doi = {10.1086/700074},
}

@article{calvo2009peer,
  title={Peer effects and social networks in education},
  author={Calv{\'o}-Armengol, Antoni and Patacchini, Eleonora and Zenou, Yves},
  journal={The review of economic studies},
  volume={76},
  number={4},
  pages={1239--1267},
  year={2009},
  publisher={Wiley-Blackwell}
}

@article{gibbons,
  title={Mostly pointless spatial econometrics?},
  author={Gibbons, Stephen and Overman, Henry G},
  journal={Journal of regional Science},
  volume={52},
  number={2},
  pages={172--191},
  year={2012},
  publisher={Wiley Online Library}
}

@article{bramoulle2020peer,
  title={Peer effects in networks: A survey},
  author={Bramoull{\'e}, Yann and Djebbari, Habiba and Fortin, Bernard},
  journal={Annual Review of Economics},
  volume={12},
  number={1},
  pages={603--629},
  year={2020},
  publisher={Annual Reviews}
}

@Article{Konig2017,
  author={Michael D. König and Dominic Rohner and Mathias Thoenig and Fabrizio Zilibotti},
  title={{Networks in Conflict: Theory and Evidence From the Great War of Africa}},
  journal={Econometrica},
  year=2017,
  volume={85},
  number={},
  pages={1093-1132},
  doi={},
  url={https://ideas.repec.org/a/wly/emetrp/v85y2017ip1093-1132.html}
}

@article{Cruz2017,
Author = {Cruz, Cesi and Labonne, Julien and Querubín, Pablo},
Title = {Politician Family Networks and Electoral Outcomes: Evidence from the Philippines},
Journal = {American Economic Review},
Volume = {107},
Number = {10},
Year = {2017},
Pages = {3006-37},
DOI = {10.1257/aer.20150343},
URL = {https://www.aeaweb.org/articles?id=10.1257/aer.20150343}}

@article{bramoulle2009identification,
  title={Identification of peer effects through social networks},
  author={Bramoulle, Yann and Djebbari, Habiba and Morton, Florian},
  journal={Econometrica},
  volume={77},
  number={4},
  pages={1467-1514},
  year={2009},
}

@article{dufour97,
  title={Some impossibility theorems in econometrics with applications to structural and dynamic models},
  author={Dufour, Jean-Marie},
  journal={Econometrica: Journal of the Econometric Society},
  pages={1365--1387},
  year={1997},
  publisher={JSTOR}
}

@article{staigerstock97,
  title={Instrumental Variables Regression with Weak Instruments},
  author={Staiger, Douglas and Stock, James H},
  journal={Econometrica},
  volume={65},
  number={3},
  pages={557--586},
  year={1997},
  publisher={Econometric Society}
}

@article{andrews2019weak,
  title={Weak instruments in instrumental variables regression: theory and practice},
  author={Andrews, Isaiah and Stock, James H and Sun, Liyang},
  journal={Annual Review of Economics},
  volume={11},
  number={1},
  pages={727--753},
  year={2019}
}

@article{manski1993identification,
  title={Identification of endogenous social effects: The reflection problem},
  author={Manski, Charles F},
  journal={The review of economic studies},
  volume={60},
  number={3},
  pages={531--542},
  year={1993},
  publisher={Wiley-Blackwell}
}

@article{anderson1949estimation,
  title={Estimation of the parameters of a single equation in a complete system of stochastic equations},
  author={Anderson, Theodore W and Rubin, Herman},
  journal={The Annals of mathematical statistics},
  volume={20},
  number={1},
  pages={46--63},
  year={1949},
  publisher={Institute of Mathematical Statistics}
}

@misc{stock2005testing,
  title={Testing for Weak Instruments in Linear IV Regression},
  author={Stock, James H. and Yogo, Motohiro},
  booktitle={Identification and Inference for Econometric Models: Essays in Honor of Thomas Rothenberg},
  pages={80--108},
  year={2005},
  publisher={Cambridge University Press}
}

@article{erdds1959random,
  title={On random graphs I},
  author={Erdös, P. and Rényi, A},
  journal={Publ. math. debrecen},
  volume={6},
  number={290-297},
  pages={18},
  year={1959}
}

@article{kojevnikov2021limit,
  title={Limit theorems for network dependent random variables},
  author={Kojevnikov, Denis and Marmer, Vadim and Song, Kyungchul},
  journal={Journal of Econometrics},
  volume={222},
  number={2},
  pages={882--908},
  year={2021},
  publisher={Elsevier}
}

@article{acemoglu2015,
Author = {Acemoglu, Daron and García-Jimeno, Camilo and Robinson, James A.},
Title = {State Capacity and Economic Development: A Network Approach},
Journal = {American Economic Review},
Volume = {105},
Number = {8},
Year = {2015},
Month = {8},
Pages = {2364-2409},
DOI = {10.1257/aer.20140044},
URL = {https://www.aeaweb.org/articles?id=10.1257/aer.20140044}}

@article{conley1999gmm,
  title={GMM estimation with cross sectional dependence},
  author={Conley, Timothy G},
  journal={Journal of econometrics},
  volume={92},
  number={1},
  pages={1--45},
  year={1999},
  publisher={Elsevier}
}

@article{durlauf2008understanding,
  title={Understanding regression versus variance tests for social interactions},
  author={Durlauf, Steven N and Tanaka, Hisatoshi},
  journal={Economic Inquiry},
  volume={46},
  number={1},
  pages={25--28},
  year={2008},
  publisher={Wiley Online Library}
}

@incollection{brock2001interactions,
  title={Interactions-based models},
  author={Brock, William A and Durlauf, Steven N},
  booktitle={Handbook of econometrics},
  volume={5},
  pages={3297--3380},
  year={2001},
  publisher={Elsevier}
}

@book{blume2005identifying,
  title={Identifying social interactions: A review},
  author={Blume, Lawrence E and Durlauf, Steven N and others},
  year={2005},
  publisher={Social Systems Research Institute, University of Wisconsin Madison}
}

@article{gaviria2001school,
  title={School-based peer effects and juvenile behavior},
  author={Gaviria, Alejandro and Raphael, Steven},
  journal={Review of Economics and Statistics},
  volume={83},
  number={2},
  pages={257--268},
  year={2001},
  publisher={MIT Press 238 Main St., Suite 500, Cambridge, MA 02142-1046, USA journals~…}
}

@article{ioannides2003neighbourhood,
  title={Neighbourhood effects and housing demand},
  author={Ioannides, Yannis M and Zabel, Jeffrey E},
  journal={Journal of applied Econometrics},
  volume={18},
  number={5},
  pages={563--584},
  year={2003},
  publisher={Wiley Online Library}
}

@article{durlauf2004neighborhood,
  title={Neighborhood effects},
  author={Durlauf, Steven N},
  journal={Handbook of regional and urban economics},
  volume={4},
  pages={2173--2242},
  year={2004},
  publisher={Elsevier}
}

@article{field2016friendship,
  title={Friendship at work: Can peer effects catalyze female entrepreneurship?},
  author={Field, Erica and Jayachandran, Seema and Pande, Rohini and Rigol, Natalia},
  journal={American Economic Journal: Economic Policy},
  volume={8},
  number={2},
  pages={125--153},
  year={2016},
  publisher={American Economic Association 2014 Broadway, Suite 305, Nashville, TN 37203-2425}
}

@article{moreira2009tests,
  title={Tests with correct size when instruments can be arbitrarily weak},
  author={Moreira, Marcelo J},
  journal={Journal of Econometrics},
  volume={152},
  number={2},
  pages={131--140},
  year={2009},
  publisher={Elsevier}
}

@article{stock00,
  title={GMM with weak identification},
  author={Stock, James H. and Wright, Jonathan H},
  journal={Econometrica},
  volume={68},
  number={5},
  pages={1055--1096},
  year={2000},
  publisher={Wiley Online Library}
}

@article{moreira03,
  title={A conditional likelihood ratio test for structural models},
  author={Moreira, Marcelo J},
  journal={Econometrica},
  volume={71},
  number={4},
  pages={1027--1048},
  year={2003},
  publisher={Wiley Online Library}
}

@article{campbell24,
  title={The network origins of entry},
  author={Campbell, Arthur and Ushchev, Philip and Zenou, Yves},
  journal={Journal of Political Economy},
  volume={132},
  number={11},
  pages={3867--3916},
  year={2024},
  publisher={The University of Chicago Press Chicago, IL}
}

@book{graham20,
  title={The econometric analysis of network data},
  author={Graham, Bryan and De Paula, {\'A}ureo},
  year={2020},
  publisher={Academic Press}
}

@article{mele17,
  title={A structural model of dense network formation},
  author={Mele, Angelo},
  journal={Econometrica},
  volume={85},
  number={3},
  pages={825--850},
  year={2017},
  publisher={Wiley Online Library}
}

@book{jackson10,
  title={Social and Economic Networks},
  author={Jackson, Matthew O},
  year={2010},
  publisher={Princeton University Press}
}

@article{Sacerdote2001,
  author = {Sacerdote, Bruce},
  title = {Peer Effects with Random Assignment: Results for Dartmouth Roommates},
  journal = {Quarterly Journal of Economics},
  year = {2001},
  volume = {116},
  number = {2},
  pages = {681--704},
  doi = {10.1162/00335530151144131}
}

@article{MasMoretti2009,
  author = {Mas, Alexandre and Moretti, Enrico},
  title = {Peers at Work},
  journal = {American Economic Review},
  year = {2009},
  volume = {99},
  number = {1},
  pages = {112--145},
  doi = {10.1257/aer.99.1.112}
}

@article{LoughranSchultz2004,
  author = {Loughran, Tim and Schultz, Paul},
  title = {IPO Pricing and Peer Influence: Evidence from the Years 1980–2000},
  journal = {Journal of Financial Economics},
  year = {2004},
  volume = {71},
  number = {3},
  pages = {225--254},
  doi = {10.1016/S0304-405X(03)00213-9}
}

@article{caria2024village,
  title={Village social structure and labor market performance: Evidence from the Philippines},
  author={Caria, A. Stefano and Labonne, Julien},
  journal={Journal of Economic Behavior \& Organization},
  volume={219},
  pages={371-380},
  year={2024},
  publisher={Elsevier}
}

@misc{krivelevich2001,
      title={The largest eigenvalue of sparse random graphs}, 
      author={Michael Krivelevich and Benny Sudakov},
      year={2001},
      eprint={math/0106066},
      archivePrefix={arXiv},
      primaryClass={math.CO},
      url={https://arxiv.org/abs/math/0106066}, 
}

@Misc{weakiv2013,
  author={Keith Finlay and Leandro Magnusson and Mark E Schaffer},
  title={{WEAKIV: Stata module to perform weak-instrument-robust tests and confidence intervals for instrumental-variable (IV) estimation of linear, probit and tobit models}},
  year=2013,
  month=Aug,
  howpublished={Statistical Software Components, Boston College Department of Economics},
  doi={},
  url={https://ideas.repec.org/c/boc/bocode/s457684.html},
}

@article{Leung2020,
	author = {Leung, Michael P.},
	doi = {10.1162/rest_a_00818},
	eprint = {https://direct.mit.edu/rest/article-pdf/102/2/368/1881176/rest_a_00818.pdf},
	issn = {0034-6535},
	journal = {The Review of Economics and Statistics},
	month = {05},
	number = {2},
	pages = {368-380},
	title = {Treatment and Spillover Effects Under Network Interference},
	url = {https://doi.org/10.1162/rest_a_00818},
	volume = {102},
	year = {2020}}

@article{dePaula2018,
author = {de Paula, {\'A}ureo and Richards-Shubik, Seth and Tamer, Elie},
title = {Identifying Preferences in Networks With Bounded Degree},
journal = {Econometrica},
volume = {86},
number = {1},
pages = {263-288},
doi = {https://doi.org/10.3982/ECTA13564},
url = {https://onlinelibrary.wiley.com/doi/abs/10.3982/ECTA13564},
eprint = {https://onlinelibrary.wiley.com/doi/pdf/10.3982/ECTA13564},
year = {2018}
}

@article{lambotte2025peer,
author = {Lambotte, Mathieu},
title = {Peer Effects in Binary Outcomes: Strategic Complementarity and Taste for Conformity With Endogenous Networks},
journal = {Journal of Applied Econometrics},
volume = {40},
number = {6},
pages = {608-626},
doi = {https://doi.org/10.1002/jae.3128},
url = {https://onlinelibrary.wiley.com/doi/abs/10.1002/jae.3128},
eprint = {https://onlinelibrary.wiley.com/doi/pdf/10.1002/jae.3128},
year = {2025}
}

@TechReport{Wang2025,
type={Papers},
institution={arXiv.org},
author={William W. Wang and Ali Jadbabaie},
title={Weak Identification in Peer Effects Estimation},
year={2025},
month={8},
number={2508.04897},
doi={None},
url={https://ideas.repec.org/p/arx/papers/2508.04897.html},
}

@techreport{volpe2025discrete,
  title={Discrete Choice with Generalized Social Interactions},
  author={Volpe, Oscar},
  year={2025},
  institution={Working paper, Harvard University}
}

@article{tchuente2019weakidentification,
title = {Weak identification and estimation of social interaction models},
journal = {Journal of Statistical Planning and Inference},
volume = {246},
pages = {106439},
year = {2027},
issn = {0378-3758},
doi = {https://doi.org/10.1016/j.jspi.2026.106439},
url = {https://www.sciencedirect.com/science/article/pii/S0378375826000674},
author = {Guy Tchuente},
}

@article{stockwrightyogo2002,
journal={Journal of Business \& Economic Statistics},
author={Stock, James H and Wright, Jonathan H and Yogo, Motohiro},
title={A Survey of Weak Instruments and Weak Identification in Generalized Method of Moments},
year={2002},
month={October},
pages={518-529},
volume={20},
number={4},
doi={None},
url={https://ideas.repec.org/a/bes/jnlbes/v20y2002i4p518-29.html},
}

@article{ross2022,
author = {Stephen L. Ross and Zhentao Shi},
title = {Measuring Social Interaction Effects When Instruments Are Weak},
journal = {Journal of Business \& Economic Statistics},
volume = {40},
number = {3},
pages = {995--1006},
year = {2022},
publisher = {Taylor \& Francis},
doi = {10.1080/07350015.2021.1895811},
URL = { 
    
        https://doi.org/10.1080/07350015.2021.1895811
    
    

},
eprint = {
        https://doi.org/10.1080/07350015.2021.1895811
}

}

\newpage                          
\pagenumbering{roman}
\setcounter{page}{1}
\counterwithin{figure}{section}
\appendix

\section*{Supplemental Appendix to ``Empirical Challenges with Peers-of-Peers Instruments
in the Linear-In-Means Model"}\label{App}

\renewcommand{\thesection}{\Alph{section}}
\renewcommand{\theHassumption}{appendix.\thesection.\arabic{assumption}}
\renewcommand{\theHdefinition}{appendix.\thesection.\arabic{definition}}
\section{Proofs and Additional Results}\label{SA1}

\subsection{Proofs}
\begin{customproof}{Proposition \ref{prop:cov_bound_unscl}}
Let $\{G_n\}_{n\ge 1}$ be a sequence of random graphs with 
$G_n \sim \mathsf{ER}(n,p_n)$, where $p_n\in(0,1)$ may depend on \(n\). 
Denote the expected degree by $d_n = (n-1)p_n \sim n p_n$, 
and let $\mathbf{G}_n=(a_{ij})$ be the adjacency matrix of $G_n$ 
(with $a_{ii}=0$ and $a_{ij}=a_{ji}$). Define the degree matrix 
$\mathbf{D}_n = \operatorname{diag}(\mathbf{G}_n\mathbf{1})$.
Let
\[
\mathbf{G}^{(2)}_n := \mathbf{G}_n^2 - \mathbf{D}_n.
\]
Then $(\mathbf{G}^{(2)}_n)_{ij}$ equals the number of two–step walks between nodes $i,j$ where $i\neq j$.  
Using Assumption \ref{A3.2} we know that $ |\beta| \|\mathbf{G}_n\|_2 < 1$.

Our goal is to prove asymptotic upper and lower bounds on the variance-normalized covariance between the friends-of-friends instrument and the endogenous variable given by:
\[
\frac{\frac{1}{n}\mathrm{Cov}\!\left(\mathbf{G}^{(2)}_n \mathbf{X}, \mathbf{G}_n \mathbf{Y}\right) }{\frac{1}{n}\mathrm{Var}\!\left(\mathbf{G}^{(2)}_n \mathbf{X}\right) }.
\]
For Erdős–Rényi graphs, the probability of link formation is independent of $\mathbf{X}$ then by the Law of Iterated Expectations (LoIE), independence of $\mathbf{G}_n$ from $\mathbf{X}$, and mean zero of $X$ ($\mu_x=0$), we obtain
\begin{align}\label{eq:pf_mux}
    \mathbb{E}\!\left[ \mathbf{X}^T \mathbf{G}^{(2)}_n \right] 
    &= \mathbb{E}\!\left[ \mathbf{X}^T \, \mathbb{E}\!\left[ \mathbf{G}^{(2)}_n \mid \mathbf{X} \right] \right] \\
    &= \mathbb{E}\!\left[\mathbf{X}^T \right] \mathbb{E}\!\left[ \mathbf{G}^{(2)}_n \mid \mathbf{X} \right]  \notag \\
    &= \mu_x \, \mathbb{E}\!\left[ \mathbf{G}^{(2)}_n \right] \iota^T \;=\; 0. \notag
\end{align}
Using~\eqref{eq:pf_mux} together with equation~\eqref{exp_eqn},
$\mathrm{Cov}\!\left(\mathbf{G}^{(2)}_n \mathbf{X}, \mathbf{G}_n \mathbf{Y}\right)$ expands as
\begin{align*}
    \mathrm{Cov}\!\left(\mathbf{G}^{(2)}_n \mathbf{X}, \mathbf{G}_n \mathbf{Y}\right) 
    &= \mathbb{E}\!\left[ \mathbf{X}^T \mathbf{G}^{(2)}_n \mathbf{G}_n \mathbf{Y} \right] 
       - \mathbb{E}\!\left[ \mathbf{X}^T \mathbf{G}^{(2)}_n \right] \mathbb{E}\!\left[ \mathbf{G}_n \mathbf{Y} \right] \\
    &= \mathbb{E}\!\left[ \mathbf{X}^T \mathbf{G}^{(2)}_n \left(
         \tfrac{\alpha}{1-\beta} \mathbf{G}_n \iota
         + \gamma \sum_{k=0}^{\infty} \beta^k \mathbf{G}_n^{k+1} \mathbf{X}
         + \delta \sum_{k=0}^{\infty} \beta^k \mathbf{G}_n^{k+2} \mathbf{X}
         + \sum_{k=0}^{\infty} \beta^k \mathbf{G}_n^{k+2} \boldsymbol{\epsilon}
     \right) \right] \\
    &= \mathbb{E}\!\left[
         \tfrac{\alpha}{1-\beta} \mathbf{X}^T \mathbf{G}^{(2)}_n \mathbf{G}_n \iota
         + \gamma \, \mathbf{X}^T \mathbf{G}^{(2)}_n \sum_{k=0}^{\infty}\beta^k \mathbf{G}_n^{k+1}\mathbf{X}
         + \delta \, \mathbf{X}^T \mathbf{G}^{(2)}_n \sum_{k=0}^{\infty}\beta^k \mathbf{G}_n^{k+2}\mathbf{X}
     \right].
\end{align*}
The final step follows from Assumption~\ref{A3.1}(iv). Furthermore, by the LoIE and $\mu_x = 0$, the first term vanishes. Hence, we obtain
\begin{align}\label{eq:cov_exp}
    \mathrm{Cov}\!\left(\mathbf{G}^{(2)}_n \mathbf{X}, \mathbf{G}_n \mathbf{Y} \right) 
    &= \mathbb{E}\!\left[
        \gamma \, \mathbf{X}^T \left(\mathbf{G}^2_n - \mathbf{D}_n \right) 
        \sum_{k=0}^{\infty}\beta^k \mathbf{G}_n^{k+1}\mathbf{X} 
        + \delta \, \mathbf{X}^T \left(\mathbf{G}^2_n - \mathbf{D}_n \right) 
        \mathbf{G}_n \sum_{k=0}^{\infty}\beta^k \mathbf{G}_n^{k+1}\mathbf{X}
    \right] \notag \\
    &= \mathbb{E}\!\left[
        \mathbf{X}^T \left(\mathbf{G}^2_n - \mathbf{D}_n \right) 
        \left( \gamma \mathbf{I}_n + \delta \mathbf{G}_n \right)
        \sum_{k=0}^{\infty}\beta^k \mathbf{G}_n^{k+1}\mathbf{X}
    \right] \notag \\
        &= \sigma_x^2 \mathbb{E}\!\left[
        \operatorname{Tr} \left( \left(\mathbf{G}^2_n - \mathbf{D}_n \right) 
        \left( \gamma \mathbf{I}_n + \delta \mathbf{G}_n \right)
        \sum_{k=0}^{\infty}\beta^k \mathbf{G}_n^{k+1} \right)
    \right],
\end{align}
where the last equality follows from the cyclical property of the trace and $\mathbb{E}[\mathbf{X} \mathbf{X}^T \mid \mathbf{G}_n] = \mathbb{E}[\mathbf{X} \mathbf{X}^T ] = \sigma_x^2 \mathbb{I}_n$.
Applying the three–factor trace inequality 
$|\operatorname{Tr}(ABC)| \le \|A\|_F\|B\|_2\|C\|_F$, 
and using $|\beta| \|\mathbf{G}_n\|_2<1$, gives
\begin{align}\label{eq:norm_ineq}
    \big|\mathbb{E} \left[\operatorname{Tr}(\left(\mathbf{G}^2_n - \mathbf{D}_n \right) 
        \left( \gamma \mathbf{I}_n + \delta \mathbf{G}_n \right)
        \sum_{k=0}^{\infty}\beta^k \mathbf{G}_n^{k+1}) \right] \big|
\;\le\;
\mathbb{E} ( \|\mathbf{G}^2_n - \mathbf{D}_n\|_F
        \| \gamma \mathbf{I}_n + \delta \mathbf{G}_n\|_2
        \|\sum_{k=0}^{\infty}\beta^k \mathbf{G}_n^{k+1}\|_F).
\end{align}
By the triangle inequality for the spectral norm,
\begin{align}\label{eq:B}
\|\gamma \mathbf{I}_n+\delta \mathbf{G}_n\|_2
\le |\gamma|\,\|\mathbf{I}_n\|_2+|\delta|\,\|\mathbf{G}_n\|_2
< |\gamma|+\frac{|\delta|}{|\beta|}.
\end{align}
Furthermore, using sub-multiplicativity of the Frobenius norm, $\|XY\|_F\le \|X\|_2\|Y\|_F$, iteratively we get,
\begin{align}\label{eq:C}
    \Big\|\sum_{k\ge 0}\beta^k \mathbf{G}_n^{k+1}\Big\|_F
\le \sum_{k\ge 0}|\beta|^k\,\|\mathbf{G}_n^{k+1}\|_F
\le \sum_{k\ge 0}|\beta|^k\,\|\mathbf{G}_n\|_2^{\,k}\,\|\mathbf{G}_n\|_F
< \frac{\|\mathbf{G}_n\|_F}{1-|\beta| \|\mathbf{G}_n\|_2},
\end{align}
since $|\beta| \ \|\mathbf{G}_n\|_2< 1$.
Combining equations \eqref{eq:norm_ineq}, \eqref{eq:B} and \eqref{eq:C} gives the bound
\begin{align}\label{eq:final_bound}
    \frac{1}{n}\big|\mathbb{E}\left[ \operatorname{Tr}(\left(\mathbf{G}^2_n - \mathbf{D}_n \right) 
        \left( \gamma \mathbf{I}_n + \delta \mathbf{G}_n \right)
        \sum_{k=0}^{\infty}\beta^k \mathbf{G}_n^{k+1})\right] \big|
\;<\;
\frac{|\gamma|+|\delta|/|\beta|}{1-|\beta|\|\mathbf{G}_n\|_2}\,
\frac{1}{n}\,
\mathbb{E}\!\big[\|\mathbf{G}^{(2)}_n\|_F\,\|\mathbf{G}_n\|_F\big].
\end{align}
Using the Cauchy-Schwartz inequality for matrix norm i.e. $\mathbb{E}[UV] \leq (\mathbb{E}[U^2])^{1/2} (\mathbb{E}[V^2])^{1/2}$, gives us
\begin{align}\label{eq:cov_fin}
\frac{1}{n}\big|\operatorname{Cov}(\mathbf{G}^{(2)}_n\mathbf{X},\mathbf{G}_n\mathbf{Y})\big|
\le
\frac{|\gamma|+|\delta|/|\beta|}{1-|\beta|\|\mathbf{G}_n\|_2} \frac{\sigma_x^2}{n}
\Big(\mathbb{E}\|\mathbf{G}^{(2)}_n\|_F^2\Big)^{1/2}
\Big(\mathbb{E}\|\mathbf{G}_n\|_F^2\Big)^{1/2}.
\end{align}

Now, to get an expression for the variance of the instrument we use the fact that $\mathbf{X}\in\mathbb{R}^n$ satisfies $\mathbb{E}[\mathbf{X}]=0$,
$\mathrm{Cov}(\mathbf{X})=\sigma_x^2 \mathbf{I}_n$, and $\mathbf{X}\perp\!\!\!\perp \mathbf{G}_n$,
\[\operatorname{Var}\!\big(\mathbf{G}^{(2)}_n \mathbf{X}\,\big|\mathbf{G}_n\big)
=\mathbb{E}\!\left[\big\|\mathbf{G}^{(2)}_n \mathbf{X}\big\|^2\,\big|\,\mathbf{G}_n\right]
=\mathbb{E}\!\left[\mathbf{X}^\top \big(\mathbf{G}^{(2)}_n\big)^\top \mathbf{G}^{(2)}_n \mathbf{X}\,\big|\,\mathbf{G}_n\right].
\]
Using $\mathbb{E}[\mathbf{X}\mathbf{X}^\top]=\sigma_x^2 \mathbf{I}_n$ and independence of $\mathbf{X}$ and $\mathbf{G}_n$ which follows from the independence of link formation in ER graphs,
\[
\mathbb{E}\!\left[\mathbf{X}^\top (\mathbf{G}^{(2)}_n)^\top \mathbf{G}^{(2)}_n \mathbf{X}\right]=\mathrm{Tr}\!\big((\mathbf{G}^{(2)}_n)^\top \mathbf{G}^{(2)}_n\,\mathbb{E}[\mathbf{X}\mathbf{X}^\top]\big)
=\sigma_x^2\,\mathrm{Tr}((\mathbf{G}^{(2)}_n)^\top \mathbf{G}^{(2)}_n).
\]
Using again $||A||_F = \sqrt{\operatorname{Tr}(A^TA)}$ and the symmetry of $\mathbf{G}$, we get
\[
\operatorname{Var}\!\big(\mathbf{G}^{(2)}_n \mathbf{X}\,\big|\mathbf{G}_n\big)
=\sigma_x^2\,\mathrm{Tr}\!\big((\mathbf{G}^{(2)}_n)^\top \mathbf{G}^{(2)}_n\big)
=\sigma_x^2\,\big\|\mathbf{G}^{(2)}_n\big\|_F^2.
\]
Taking expectations over $\mathbf{G}_n$ and using the homoskedasticity of $\mathbf{X}$, the Law of Total Variance gives us,
\begin{align}\label{eq:var_ins}
\frac{1}{n}\operatorname{Var}\!\big(\mathbf{G}^{(2)}_n \mathbf{X}\big)
=\frac{1}{n}\,\mathbb{E}\!\left[\operatorname{Var}\!\big(\mathbf{G}^{(2)}_n \mathbf{X}\,\big|\mathbf{G}_n\big)\right]
=\frac{\sigma_x^2}{n}\,\mathbb{E}\big\|\mathbf{G}^{(2)}_n\big\|_F^2.
\end{align}
Normalizing \eqref{eq:cov_fin} by this variance yields
\begin{equation}\label{eq:slope_def}
\frac{\tfrac{1}{n}\operatorname{Cov}( \mathbf{G}^{(2)}_n \mathbf{X},\mathbf{G}_n \mathbf{Y})}{\tfrac{1}{n}\operatorname{Var}( \mathbf{G}^{(2)}_n \mathbf{X})}
\;\le\;
\frac{|\gamma|+|\delta|/|\beta|}{1-|\beta|\|\mathbf{G}_n\|_2}\cdot
\sqrt{\frac{\mathbb{E}\|\mathbf{G}_n\|_F^2}{\mathbb{E}\|\mathbf{G}^{(2)}_n\|_F^2}}.
\end{equation}

We now solve for the two expected Frobenius norm terms. For $i\neq j$, the $(i,j)^{th}$ entry of $\mathbf{G}_n^2$ counts the number of two–step walks between $i$ and $j$:
\[
(\mathbf{G}_n^2)_{ij}
= \sum_{k\neq i,j} G_{ik} G_{kj},
\]
since $G_{ik}G_{kj}=1$ precisely when $k$ is a common neighbor of $i$ and $j$ hence, for any fixed $k\neq i,j$,
\[
\Pr(G_{ik}G_{kj}=1)
= \Pr(G_{ik}=1)\Pr(G_{kj}=1)
= p_n^2.
\]
Furthermore, the indicators $\{G_{ik}G_{kj}\}_{k\neq i,j}$ are independent across $k$. Thus, each term in the sum is Bernoulli$(p_n^2)$, and there are exactly $n-2$ such terms. Hence, $(\mathbf{G}_n^2)_{ij}
\sim \mathrm{Bin}(n-2,\,p_n^2)$ which gives us,
\begin{align*}
\mathbb{E}[(G_n^2)_{ij}^2] = \operatorname{Var}((G_n^2)_{ij}) + (\mathbb{E}(G_n^2)_{ij})^2  
=&(n-2)p_n^2(1-p_n^2)+(n-2)^2p_n^4 \notag \\
=& n p_n^2(1-p_n^2) + n^2p_n^4 + O(p^2_n + n p_n^4) \notag \\
=& p_n^2\left[ n + p_n^2 n^2 \right] + O(p^2_n + n p_n^4) 
\end{align*}
for large n. Since edges are $i.i.d.$, the expected value of the Frobenius norm for $\mathbf{G}^{(2)}_n$ gives us,
\begin{align}\label{eq:den}
\mathbb{E}\|\mathbf{G}^{(2)}_n\|_F^2
&= \sum_{i \neq j} \mathbb{E}((G^2_n)_{ij})^2  \notag \\
&= n(n-1) \left(p_n^2\left[ n + p_n^2 n^2\right] + O(p^2_n + n p_n^4)\right)
\end{align}
Similarly, for the Frobenius norm of $\mathbf{G}_n$ we have
\[\|\mathbf{G}_n\|_F^2 
= \sum_{i\neq j}(G_n)_{ij},
\] where $G_{ij}\sim \mathrm{Bernoulli}\!\left(p_n\right)$.
Taking expectation and summing over all non-diagonal elements gives us,
\begin{align}\label{eq:num}
\mathbb{E}\|\mathbf{G}_n\|_F^2 = n(n-1) p_n.
\end{align}
Putting equations \eqref{eq:den} and \eqref{eq:num} together and using $d_n = n p_n$, we get
\begin{align}\label{eq:ratio_bound}
\sqrt{\frac{\mathbb{E}\|\mathbf{G}_n\|_F^2}{\mathbb{E}\|\mathbf{G}^{(2)}_n\|_F^2}}
&=
\frac{1}{\sqrt{n p_n + n^2 p_n^3 + O(p_n + n p_n^3)}} \notag \\
&=
\frac{1}{\sqrt{d_n + \tfrac{d_n^3}{n} + O(p_n + n p_n^3)}}.
\end{align}

Consider $f(x) = x^{-1/2}$.  
A first-order Taylor expansion (mean-value form) around  
\[
x_0 := d_n + \tfrac{d_n^3}{n}
\]
gives, for some $\theta_n\in(0,1)$,
\[
f\bigl(x_0 + O(p_n + n p_n^3)\bigr)
=
f(x_0) 
+ 
f'\!\bigl(x_0 + \theta_n\,O(p_n + n p_n^3)\bigr)\,
O(p_n + n p_n^3).
\]
Since $f'(x) = -\tfrac12 x^{-3/2}$, this yields
\[
\frac{1}{\sqrt{d_n + \tfrac{d_n^3}{n} + O(p_n + n p_n^3)}}
=
\frac{1}{\sqrt{d_n + \tfrac{d_n^3}{n}}}
+
O\!\left(
\frac{p_n + n p_n^3}{(n p_n + n^2 p_n^3)^{3/2}}
\right).
\]
To bound the remainder, note that $0 \le p_n \le 1$ implies
\[
\frac{p_n + n p_n^3}{(n p_n + n^2 p_n^3)^{3/2}}
=
\frac{1}{n (np_n+n^2 p_n^3)^{1/2}}
= \frac{1}{n \sqrt{d_n + d_n^3/n}}.
\]
Thus,
\[
\sqrt{\frac{\mathbb{E}\|\mathbf{G}_n\|_F^2}{\mathbb{E}\|\mathbf{G}^{(2)}_n\|_F^2}}
=
\frac{1}{\sqrt{d_n + \tfrac{d_n^3}{n}}} (1 + o(1)).
\]
This gives the normalized covariance bound
\[
\left|
\frac{\tfrac{1}{n}\operatorname{Cov}(\mathbf{G}_n^{(2)}\mathbf{X},\mathbf{G}_n\mathbf{Y})}
     {\tfrac{1}{n}\operatorname{Var}(\mathbf{G}_n^{(2)}\mathbf{X})}
\right|
\le
\frac{|\gamma|+|\delta|/|\beta|}{1-|\beta|\|\mathbf{G}_n\|_2}
\cdot
\Biggl(
\frac{1}{\sqrt{d_n + \tfrac{d_n^3}{n}}} (1 + o(1))
\Biggr),
\]
and therefore,
\begin{equation}\label{eq:varnorm_final}
\left|
\frac{\tfrac{1}{n}\operatorname{Cov}(\mathbf{G}_n^{(2)}\mathbf{X},\mathbf{G}_n\mathbf{Y})}
     {\tfrac{1}{n}\operatorname{Var}(\mathbf{G}_n^{(2)}\mathbf{X})}
\right| 
=
O\!\left(
\frac{1}{\sqrt{d_n + d_n^3/n}}
\right),
\qquad n\to\infty.
\end{equation}
\end{customproof}

\newpage

\begin{lemma}[Eigenvalue decomposition of the first-stage covariance]
\label{lem:eigen_cov}
Consider the set-up in Propositions~\ref{prop:cov_bound_unscl} and ~\ref{prop:cov_bound_scaled} and let Assumption \ref{A3.1} hold. Suppose the symmetric adjacency matrix admit the eigen-decomposition
$\mathbf G_n=\mathbf V_n\mathbf\Lambda_n\mathbf V_n'$, where
$\mathbf\Lambda_n=\mathrm{diag}(\lambda_1(n),\dots,\lambda_n(n))$
and $\mathbf V_n$ is orthonormal. If $(\mathbf I_n-\beta\mathbf G_n)$ is invertible, then,
conditional on $\mathbf G_n$,
\begin{align*}
\frac{1}{n}Cov(\mathbf G_n^{(2)}\mathbf X,\mathbf G_n\mathbf Y\mid \mathbf G_n)
&=
\frac{\sigma_x^2}{n}\sum_{j=1}^n
\frac{\lambda_j(n)^3(\gamma+\delta\lambda_j(n))}{1-\beta\lambda_j(n)}
\;+\;
\frac{\sigma_x^2}{n}\,R_{n,\mathrm{diag}},
\end{align*}
where
\[
R_{n,\mathrm{diag}}
:=
-\Tr\!\Big(
\mathbf D_n\,\mathbf G_n(\mathbf I_n-\beta\mathbf G_n)^{-1}(\gamma\mathbf I_n+\delta\mathbf G_n)
\Big),
\qquad
\mathbf D_n:=\operatorname{diag}(\mathbf G_n^2).
\]
Moreover, the remainder term satisfies the deterministic bound
\[
|R_{n,\mathrm{diag}}|
\;\le\;
\operatorname{Tr}(\mathbf D_n)\,
\|\mathbf G_n\|_2\,
\|(\mathbf I_n-\beta\mathbf G_n)^{-1}\|_2\,
\|\gamma\mathbf I_n+\delta\mathbf G_n\|_2.
\]
\end{lemma}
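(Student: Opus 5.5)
The plan is to reuse the trace representation already derived in the proof of Proposition~\ref{prop:cov_bound_unscl}, now conditional on $\mathbf G_n$ rather than in expectation. Conditioning on $\mathbf G_n$ and using $\mathbb E[\mathbf X\mid\mathbf G_n]=0$, $\mathbb E[\mathbf X\mathbf X'\mid\mathbf G_n]=\sigma_x^2\mathbf I_n$ and Assumption~\ref{A3.1}(iv), the same steps leading to \eqref{eq:cov_exp} give
\[
Cov(\mathbf G_n^{(2)}\mathbf X,\mathbf G_n\mathbf Y\mid\mathbf G_n)
=\sigma_x^2\,\Tr\!\Big((\mathbf G_n^2-\mathbf D_n)(\gamma\mathbf I_n+\delta\mathbf G_n)\,\mathbf G_n(\mathbf I_n-\beta\mathbf G_n)^{-1}\Big).
\]
Here the Neumann series $\sum_k\beta^k\mathbf G_n^{k+1}$ is replaced by the closed form $\mathbf G_n(\mathbf I_n-\beta\mathbf G_n)^{-1}$. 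This replacement needs only invertibility if I derive it directly from the reduced form \eqref{red_model}, i.e.\ $\mathbf G_n\mathbf Y=\mathbf G_n(\mathbf I_n-\beta\mathbf G_n)^{-1}(\alpha\iota+\gamma\mathbf X+\delta\mathbf G_n\mathbf X+\varepsilon)$. The $\alpha$-term and the $\varepsilon$-term drop out by mean-zero $\mathbf X$ and conditional exogeneity. I will do this explicitly so the argument does not rely on Assumption~\ref{A3.2}.

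Next I split the trace linearly into a $\mathbf G_n^2$ part and a $-\mathbf D_n$ part. The second part is exactly $\sigma_x^2 R_{n,\mathrm{diag}}$ once the factors are reordered. All of $\mathbf G_n$, $(\mathbf I_n-\beta\mathbf G_n)^{-1}$ and $\gamma\mathbf I_n+\delta\mathbf G_n$ are polynomials/rational functions of $\mathbf G_n$, so they commute with each other, though not with $\mathbf D_n$. Cyclicity then lets me put $\mathbf D_n$ first, giving the stated form.

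For the $\mathbf G_n^2$ part, every factor is a function of $\mathbf G_n$. Substituting $\mathbf G_n=\mathbf V_n\mathbf\Lambda_n\mathbf V_n'$ with $\mathbf V_n$ orthonormal, the product equals $\mathbf V_n f(\mathbf\Lambda_n)\mathbf V_n'$ with $f(\lambda)=\lambda^3(\gamma+\delta\lambda)/(1-\beta\lambda)$. This is well defined because invertibility means $1-\beta\lambda_j\neq 0$ for all $j$. The trace is then $\sum_j f(\lambda_j(n))$, and dividing by $n$ gives the display.

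For the bound on $R_{n,\mathrm{diag}}$, I use $|\Tr(\mathbf D_n\mathbf M)|\le \Tr(\mathbf D_n)\|\mathbf M\|_2$. This holds since $\mathbf D_n$ is diagonal and nonnegative, with entries equal to the degrees: $|\sum_i d_{ii}M_{ii}|\le\sum_i d_{ii}|M_{ii}|\le \Tr(\mathbf D_n)\max_i|M_{ii}|\le\Tr(\mathbf D_n)\|\mathbf M\|_2$. Submultiplicativity of the spectral norm then finishes the bound. The main obstacle is bookkeeping rather than depth: ensuring commutativity is only used among functions of $\mathbf G_n$, and justifying the conditional covariance without the Neumann expansion. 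The trace-versus-diagonal inequality is the one step needing care, but it is elementary.
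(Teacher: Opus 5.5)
Your proposal is correct and follows the paper's proof essentially step for step: you derive the conditional trace representation directly from the reduced form under invertibility, split $\mathbf G_n^{(2)}=\mathbf G_n^2-\mathbf D_n$, diagonalize the $\mathbf G_n^3$ term, and bound the remainder via $|M_{ii}|\le\|\mathbf M\|_2$ and submultiplicativity of the spectral norm. One cosmetic point: in the scaled specification the diagonal entries of $\mathbf D_n$ are the degrees divided by $w_n^2$, not the degrees themselves, but your bound uses only nonnegativity, which holds because $(\mathbf G_n^2)_{ii}=\sum_j (\mathbf G_n)_{ij}^2\ge 0$.
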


\begin{proof}
Fix $\mathbf G_n$ and write the reduced form
\[
\mathbf Y
=
(\mathbf I_n-\beta\mathbf G_n)^{-1}\big(\alpha\iota+(\gamma\mathbf I_n+\delta\mathbf G_n)\mathbf X+\boldsymbol\varepsilon\big),
\]
which holds by invertibility of $\mathbf I_n-\beta\mathbf G_n$. Using $\mathbf X\perp\!\!\!\perp \boldsymbol\varepsilon$ and
$\mathbb E[\mathbf X]=0$, conditional on $\mathbf G_n$ we have
\begin{align*}
Cov(\mathbf G_n^{(2)}\mathbf X,\mathbf G_n\mathbf Y\mid \mathbf G_n)
&=
Cov\!\Big(\mathbf G_n^{(2)}\mathbf X,\,
\mathbf G_n(\mathbf I_n-\beta\mathbf G_n)^{-1}(\gamma\mathbf I_n+\delta\mathbf G_n)\mathbf X
\;\Big|\;\mathbf G_n\Big),
\end{align*}
since $\mathbb E[\mathbf G_n^{(2)}\mathbf X\mid \mathbf G_n]=\mathbf 0$ implies
$Cov(\mathbf G_n^{(2)}\mathbf X,\mathbf G_n\iota\mid \mathbf G_n)=0$, and the
$\boldsymbol\varepsilon$ term drops out by Assumption~\ref{A3.1}(iv).

By the conditional second-moment restriction
$\mathbb E[\mathbf X\mathbf X'\mid \mathbf G_n]=\sigma_x^2\mathbf I_n$ and symmetry of
$\mathbf G_n$, we obtain
\[
Cov\!\Big(\mathbf G_n^{(2)}\mathbf X,\,
\mathbf G_n(\mathbf I_n-\beta\mathbf G_n)^{-1}(\gamma\mathbf I_n+\delta\mathbf G_n)\mathbf X
\;\Big|\;\mathbf G_n\Big)
=
\sigma_x^2Tr\!\Big(\mathbf G_n^{(2)}\mathbf G_n(\mathbf I_n-\beta\mathbf G_n)^{-1}(\gamma\mathbf I_n+\delta\mathbf G_n)\Big).
\]

Decomposing $\mathbf G_n^{(2)}=\mathbf G_n^2-\mathbf D_n$ yields
\begin{align*}
\sigma_x^2\Tr\!\Big(\mathbf G_n^{(2)}\mathbf G_n(\mathbf I_n-\beta\mathbf G_n)^{-1}(\gamma\mathbf I_n+\delta\mathbf G_n)\Big)
&=
\sigma_x^2\Tr\!\Big(\mathbf G_n^3(\mathbf I_n-\beta\mathbf G_n)^{-1}(\gamma\mathbf I_n+\delta\mathbf G_n)\Big)\\
&\quad
-\sigma_x^2\Tr\!\Big(\mathbf D_n\,\mathbf G_n(\mathbf I_n-\beta\mathbf G_n)^{-1}(\gamma\mathbf I_n+\delta\mathbf G_n)\Big).
\end{align*}

For the first trace, use the eigen-decomposition $\mathbf G_n=\mathbf V_n\mathbf\Lambda_n\mathbf V_n'$ and orthonormality of $\mathbf V_n$ to obtain
\begin{align*}
    Tr\!\Big(\mathbf G_n^3(\mathbf I_n-\beta\mathbf G_n)^{-1}(\gamma\mathbf I_n+\delta\mathbf G_n)\Big)
&=Tr\Big(
\mathbf V_n\mathbf\Lambda_n^3\mathbf V_n'
\mathbf V_n(\mathbf I_n-\beta\mathbf\Lambda_n)^{-1}\mathbf V_n'
\mathbf V_n(\gamma\mathbf I_n+\delta\mathbf\Lambda_n)\mathbf V_n'
\Big) \\
&=Tr\Big(
\mathbf V_n\mathbf\Lambda_n^3(\mathbf I_n-\beta\mathbf\Lambda_n)^{-1}(\gamma\mathbf I_n+\delta\mathbf\Lambda_n)\mathbf V_n'\Big).
\end{align*}

Using the cyclicity of the trace operator,
\begin{align*}
    Tr\Big(\mathbf G_n^3(\mathbf I_n-\beta\mathbf G_n)^{-1}(\gamma\mathbf I_n+\delta\mathbf G_n)\Big)
    &=Tr\Big(\mathbf\Lambda_n^3(\mathbf I_n-\beta\mathbf\Lambda_n)^{-1}(\gamma\mathbf I_n+\delta\mathbf\Lambda_n)\mathbf V_n'\mathbf V_n
\Big)\ \\
&=Tr\Big(
\mathbf\Lambda_n^3(\mathbf I_n-\beta\mathbf\Lambda_n)^{-1}(\gamma\mathbf I_n+\delta\mathbf\Lambda_n)
\Big), \\
&=\sum_{j=1}^n \frac{\lambda_j(n)^3(\gamma+\delta\lambda_j(n))}{1-\beta\lambda_j(n)}.
\end{align*}

For the remainder term, define
\[
\mathbf M_n
:=
\mathbf G_n(\mathbf I_n-\beta\mathbf G_n)^{-1}(\gamma\mathbf I_n+\delta\mathbf G_n).
\]
Then
\[
R_{n,\mathrm{diag}}
=
-\operatorname{Tr}(\mathbf D_n\mathbf M_n).
\]
Since \(\mathbf D_n=\operatorname{diag}(\mathbf G_n^2)\) is diagonal, we have
\[
\operatorname{Tr}(\mathbf D_n\mathbf M_n)
=
\sum_{i=1}^n (\mathbf D_n)_{ii}(\mathbf M_n)_{ii}.
\]
Therefore,
\[
|R_{n,\mathrm{diag}}|
=
\left|\sum_{i=1}^n (\mathbf D_n)_{ii}(\mathbf M_n)_{ii}\right|
\le
\sum_{i=1}^n (\mathbf D_n)_{ii}\,|(\mathbf M_n)_{ii}|.
\]
Now let \(e_i\) denote the \(i\)-th canonical basis vector in \(\mathbb R^n\). Since
\[
(\mathbf M_n)_{ii}=e_i'\mathbf M_n e_i,
\]
we obtain
\[
|(\mathbf M_n)_{ii}|
=
|e_i'\mathbf M_n e_i|
\le
\|\mathbf M_n\|_2\,\|e_i\|_2^2
=
\|\mathbf M_n\|_2.
\]
Hence
\[
|R_{n,\mathrm{diag}}|
\le
\sum_{i=1}^n (\mathbf D_n)_{ii}\,\|\mathbf M_n\|_2
=
\operatorname{Tr}(\mathbf D_n)\,\|\mathbf M_n\|_2.
\]
Using submultiplicativity of the operator norm,
\[
\|\mathbf M_n\|_2
\le
\|\mathbf G_n\|_2\,
\|(\mathbf I_n-\beta\mathbf G_n)^{-1}\|_2\,
\|\gamma\mathbf I_n+\delta\mathbf G_n\|_2,
\]
so
\[
|R_{n,\mathrm{diag}}|
\le
\operatorname{Tr}(\mathbf D_n)\,
\|\mathbf G_n\|_2\,
\|(\mathbf I_n-\beta\mathbf G_n)^{-1}\|_2\,
\|\gamma\mathbf I_n+\delta\mathbf G_n\|_2,
\]
which completes the proof.
\end{proof}

\bigskip

\begin{customproof}{Proposition \ref{prop:Asymp_Dist}}

We begin with the first part of the proposition which is to show that the variance in equation \eqref{eq:var_prop1} can be estimated by using the Network HAC estimator of \cite{kojevnikov2021limit}. The variance of the AR-test statistic is given by $\Omega(\beta) = V_{\hat{\xi}} - \beta_0 (Cov_{\hat{\xi} \hat{\pi}} + Cov_{\hat{\pi} \hat{\xi}}) + \beta^2_0 V_{\hat{\pi}}$ where $\hat{\xi}$ and $\hat{\pi}$ are the OLS estimators in equations \eqref{eq:struct_eq2} and \eqref{eq:first_stage} respectively. To apply the limit theorem results of \cite{kojevnikov2021limit}, we define :
\begin{align}
    Y_{n,\pi} = \tilde{\epsilon} z_i \\
    Y_{n,\xi} =  \eta z_i
\end{align}
and 
\begin{align}
    S_{n,\pi} = \sum_{i \in N_n} Y_{n,\pi} \\
    S_{n,\xi} =  \sum_{i \in N_n} Y_{n,\xi}
\end{align}
where $\tilde{\epsilon}$ and $\eta$ are the errors from equations \eqref{eq:first_stage} and \eqref{eq:struct_eq2} and $z_i$ is the $i^{th}$ row of $Z = (\iota, X, GX, G^2 X)$. Note that even if the error terms $\tilde{\epsilon}$ and $\eta$ are independent of the network, the vector $z_i$ is not and thus $Y_{n,\pi}$ and $Y_{n,\xi}$ are network dependent random variables.
Let $S_n = (S'_{n,\pi}, S'_{n,\xi})'$ be the stacked vector of network dependent variables of interest, then the variance of $S_{n}$ can be represented as the block matrix given by:
\[
V \left(S_n \right) =
\begin{bmatrix}
V_n(S_{n,\pi}) & Cov(S_{n,\xi}, S_{n,\pi})\\
Cov(S_{n,\pi}, S_{n,\xi})  & V(S_{n,\xi}) 
\end{bmatrix}
\]
We need a consistent estimator for $V\left(S_n \right)$ such that we can use it to get a consistent estimator for $\Omega(\beta)$ defined above which will be a simple application of the Slutsky Theorem since:

\[ \Omega(\beta) =
\begin{bmatrix} \beta I_k &-I_k \end{bmatrix}
(I_2 \otimes \Sigma_{zz}^{-1}) V_n(S_n) (I_2 \otimes \Sigma_{zz}^{-1}). 
\begin{bmatrix} \beta I_k \\ -I_k \end{bmatrix} \]
where $\Sigma_{zz}  = \mathbb{E}[z_iz_i']$ and $I_k$ and $I_2$ are the identity matrices of size k and 2 respectively.

For the limit theorem to apply for $V _n\left(S_n\right)$, we assume that our network dependent random variables, satisfy the following two assumptions (see \cite{kojevnikov2021limit} for detailed definitions and a discussion):

\begin{namedassumption}[Assumption 2.1 from \cite{kojevnikov2021limit}]
The triangular array $\{Y_{n,i}\}$ is conditionally $\psi$-dependent given $\{C_n\}$ with the dependence coefficients $\{\theta_n\}$ satisfying the following conditions.
\begin{itemize}
    \item[(a)] For some constant $C > 0$,
    \[
    \psi_{a,b}(f,g) \leq C \cdot ab \left( \|f\|_\infty + \text{Lip}(f) \right) \left( \|g\|_\infty + \text{Lip}(g) \right).
    \]
    
    \item[(b)] $\displaystyle \sup_{n \geq 1} \max_{s \geq 1} \theta_{n,s} < \infty$ a.s.
\end{itemize}
\end{namedassumption}

$\{\mathcal{C}_n\}_{n \geq 1}$ is a given sequence of $\sigma$-fields, $\{\theta_{n,s}\}_{s \geq 0}$ a $\mathcal{C}_n$-measurable sequence ($\theta_{n,0} = 1$) and $\psi_{a,b}$ are a collection of nonrandom functions.

\begin{namedassumption}[Assumption 4.1 from \cite{kojevnikov2021limit}]
There exists $p > 4$ such that
\begin{itemize}
  \item[(i)] $\displaystyle \sup_{n \geq 1} \max_{i \in N_n} \left\| Y_{n,i} \right\|_{C_n, p} < \infty \text{ a.s.},$
  
  \item[(ii)] $\displaystyle \lim_{n \to \infty} \sum_{s \geq 1} \left| \omega_n(s) - 1 \right| \delta_n^{\partial}(s) \theta_{n,s}^{1 - (2/p)} = 0 \text{ a.s.},$
  
  \item[(iii)] $\displaystyle \lim_{n \to \infty} n^{-1} \sum_{s \geq 0} c_n(s, b_n; 2) \theta_{n,s}^{1 - (4/p)} = 0 \text{ a.s.}$
\end{itemize}
\end{namedassumption}

with $\delta_n^\partial(s; k) = \frac{1}{n} \sum_{i \in \mathcal{N}_n} |N_n^\partial(i; s)|^k,$ where $N_n(i; s) = \{ j \in \mathcal{N}_n : d_n(i, j) \leq s \}$ and $ N_n^\partial(i; s) = \{ j \in \mathcal{N}_n : d_n(i, j) = s \}.$ Furthermore, $c_n(s, m; k) = \inf_{\alpha > 1} \left[ \Delta_n(s, m; k\alpha) \right]^{\frac{1}{\alpha}} \left[ \delta_n^\partial \left( s; \left( \frac{\alpha}{\alpha - 1} \right)^k \right) \right]^{1 - \frac{1}{\alpha}}.$ Finally, $\omega: \overline{\mathbf{R}} \rightarrow [-1, 1]$ is an appropriate kernel function and $b_n$ a bandwidth parameter such that $\omega_n(s) = \omega(s/b_n)$.

These two assumptions put restrictions on the denseness of the network as well as how strong the network effect is as we go further way from any particular node.

We define the following estimators:

\begin{align}
    \tilde{V}_n\left(\frac{S_n}{\sqrt{n}} \right) = \sum_{s\geq0} \omega_n(s) \tilde{\Omega}_n(s) \\
    \tilde{\Omega}_n(s) = n^{-1} \sum_{i \in N_n} \sum_{j \in N_n^\partial(i; s)} Y_{n,i} Y^T_{n,j}
    \end{align} where s is the path length between nodes in the network and $\omega_n(s)$ is the kernel function defined above. 
    
Then by Proposition 4.1 of \cite{kojevnikov2021limit}, $\tilde{V}_n  \left(\frac{S_n}{\sqrt{n}} \right) \xrightarrow{a.s.} V_n\left(\frac{S_n}{\sqrt{n}} \right)$ and finally, applying Slutsky's Theorem,  we get that $\hat{\Omega}_n(\beta) \xrightarrow{p} \Omega(\beta) $ which proves the first part of the proposition.

Now, we know that $\sqrt{n}\hat{g}(\beta_0) \stackrel{H_0}{\to}_d N(0,\Omega(\beta_0))$ (\citealp{anderson1949estimation}). Again, by the Slutsky Theorem and the first part of the proof, $\hat{\Omega}(\beta_0))^{-1/2} \sqrt{n}\hat{g}(\beta_0) \stackrel{H_0}{\to}_d N(0,\mathbf{I}_{k X k})$. Finally, by definition of $\chi^2$ distribution, $\tilde{AR}_n(\beta_0) = n\hat{g}(\beta_0)'\hat{\Omega}(\beta_0)^{-1}\hat{g}(\beta_0) \stackrel{H_0}{\to}_d \chi^2_k$.  

\end{customproof}

\newpage

\subsection{Scaled Specification}\label{SA1.scl}

This subsection states the scaled analogue of Assumption~\ref{A3.2} and Proposition~\ref{prop:cov_bound_unscl}. Let \(\mathbf A_n\) be an Erdős--Rényi adjacency matrix with average degree \(d_n=np_n\) and maximum degree \(\Delta_n\). Define
\[
w_n := \max\{d_n,\sqrt{\Delta_n}\},
\qquad
\mathbf G_n := \frac{1}{w_n}\mathbf A_n .
\]
The stability condition for the scaled specification is then
\begin{assumption}\label{A3.3}
\begin{equation}
\frac{\lambda_1^A(n)}{w_n} < \frac{1}{|\beta|},
\end{equation}
where \(\lambda_1^A(n)\) denotes the largest eigenvalue of \(\mathbf A_n\).
\end{assumption}

When \(w_n\) tracks the order of \(\lambda_1^A(n)\), Assumption~\ref{A3.3} allows the average degree to grow while keeping the scaled network operator stable. If \(w_n=\lambda_1^A(n)\), Assumption~\ref{A3.3} reduces to the exact spectral normalization case.

\begin{prop}[Upper bound on variance--normalized covariance in Erd\H{o}s--R\'enyi graphs]
\label{prop:cov_bound_scaled}
Let \(\{G(n,p_n)\}_{n\ge1}\) be a sequence of Erd\H{o}s--R\'enyi graphs with adjacency matrix \(\mathbf A_n\), expected degree \(d_n=np_n\), and maximum degree \(\Delta_n\). Let \(\{X_i\}_{i=1}^n\) be i.i.d.\ uni-dimensional real-valued random variables with \(\mathbb E[X_i]=0\) and \(\mathbb E[X_i^2]=\sigma_x^2\in(0,\infty)\), independent of \(\mathbf A_n\). Suppose Assumption~\ref{A3.1} holds. Define
\[
w_n := \max\{d_n,\sqrt{\Delta_n}\},
\qquad
\mathbf G_n := \frac{1}{w_n}\mathbf A_n,
\qquad
\mathbf G_n^{(2)} := \mathbf G_n^2 - \mathbf D_n,
\qquad
\mathbf D_n := \operatorname{diag}(\mathbf G_n^2).
\]
If Assumption~\ref{A3.3} holds, then there exists a constant \(c<\infty\) such that, for all sufficiently large \(n\),
\begin{equation}\label{eq:upper_scaled}
\left|
\frac{\frac{1}{n}\operatorname{Cov}(\mathbf G_n^{(2)}\mathbf X,\mathbf G_n\mathbf Y)}
     {\frac{1}{n}\operatorname{Var}(\mathbf G_n^{(2)}\mathbf X)}
\right|
\;\le\;
c\,\frac{w_n}{\sqrt{d_n + d_n^3/n}} .
\end{equation}
\end{prop}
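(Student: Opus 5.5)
The plan is to re-run the argument of Proposition~\ref{prop:cov_bound_unscl} with $\mathbf G_n=\mathbf A_n/w_n$, and track how each Frobenius-norm factor picks up a power of $w_n$. I will treat $w_n$ as a deterministic normalizing sequence, as in the discussion of Section~\ref{scaling}. If $\Delta_n$ is kept random, I will condition on it, or replace $w_n$ by a deterministic sequence of the same order, which holds w.h.p.\ in ER graphs.

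\textbf{Step 1 (trace representation).} Exactly as in \eqref{eq:cov_exp}, independence of $\mathbf X$ from $\mathbf A_n$, $\mathbb E[X_i]=0$ and Assumption~\ref{A3.1}(iv) remove the intercept and error terms. This gives
\[
\operatorname{Cov}(\mathbf G_n^{(2)}\mathbf X,\mathbf G_n\mathbf Y)=\sigma_x^2\,\mathbb E\Big[\operatorname{Tr}\Big(\mathbf G_n^{(2)}(\gamma\mathbf I_n+\delta\mathbf G_n)\textstyle\sum_{k\ge0}\beta^k\mathbf G_n^{k+1}\Big)\Big].
\]
The Neumann series converges because Assumption~\ref{A3.3} gives $|\beta|\,\|\mathbf G_n\|_2=|\beta|\lambda_1^A(n)/w_n<1$.

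\textbf{Step 2 (norm bound).} Apply $|\operatorname{Tr}(ABC)|\le\|A\|_F\|B\|_2\|C\|_F$ and the analogues of \eqref{eq:B}--\eqref{eq:C}, then Cauchy--Schwarz. This yields
\[
\tfrac1n|\operatorname{Cov}|\le \frac{|\gamma|+|\delta|/|\beta|}{1-|\beta|\|\mathbf G_n\|_2}\,\frac{\sigma_x^2}{n}\big(\mathbb E\|\mathbf G_n^{(2)}\|_F^2\big)^{1/2}\big(\mathbb E\|\mathbf G_n\|_F^2\big)^{1/2}.
\]
As in \eqref{eq:var_ins}, the variance is $\tfrac1n\operatorname{Var}(\mathbf G_n^{(2)}\mathbf X)=\tfrac{\sigma_x^2}{n}\mathbb E\|\mathbf G_n^{(2)}\|_F^2$. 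The ratio is therefore bounded by the prefactor times $\sqrt{\mathbb E\|\mathbf G_n\|_F^2/\mathbb E\|\mathbf G_n^{(2)}\|_F^2}$.

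\textbf{Step 3 (scaling the Frobenius norms).} We have $\mathbf G_n=\mathbf A_n/w_n$ and $\mathbf G_n^{(2)}=(\mathbf A_n^2-\operatorname{diag}(\mathbf A_n^2))/w_n^2$. Hence $\mathbb E\|\mathbf G_n\|_F^2=n(n-1)p_n/w_n^2$, and $\mathbb E\|\mathbf G_n^{(2)}\|_F^2=w_n^{-4}\,\mathbb E\|\mathbf A_n^2-\operatorname{diag}(\mathbf A_n^2)\|_F^2$. The latter expectation was already computed in \eqref{eq:den} via the $\mathrm{Bin}(n-2,p_n^2)$ law of the off-diagonal entries. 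The square-root ratio is then $w_n$ times the unscaled ratio \eqref{eq:ratio_bound}. The Taylor-expansion step of the unscaled proof then gives $\frac{w_n}{\sqrt{d_n+d_n^3/n}}(1+o(1))$.

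\textbf{Main obstacle.} The delicate point is the prefactor $(1-|\beta|\|\mathbf G_n\|_2)^{-1}$, which must be bounded uniformly in $n$ to get a fixed constant $c$. Assumption~\ref{A3.3} only gives strict inequality for each $n$. I would first strengthen its use to $\limsup_n|\beta|\lambda_1^A(n)/w_n<1$. To justify this I would invoke the Krivelevich--Sudakov result that $\lambda_1^A(n)=(1+o(1))\max\{d_n,\sqrt{\Delta_n}\}$, so $\lambda_1^A(n)/w_n\to1$, together with $|\beta|<1$ from Assumption~\ref{A3.1}(ii). Alternatively, I would simply read Assumption~\ref{A3.3} as holding uniformly with a margin, as the unscaled proof implicitly does. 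The constant is then $c=(|\gamma|+|\delta|/|\beta|)/(1-\bar\rho)$ with $\bar\rho<1$ that margin, and the $(1+o(1))$ factor is absorbed for $n$ large.
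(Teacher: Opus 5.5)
Your proposal is correct and follows the paper's proof essentially step for step. It uses the same trace representation of the covariance and the same three-factor trace and Cauchy--Schwarz bound, which reduce the ratio to the prefactor times $\sqrt{\mathbb E\|\mathbf G_n\|_F^2/\mathbb E\|\mathbf G_n^{(2)}\|_F^2}$; scaling then contributes $w_n^{-2}$ and $w_n^{-4}$ to the two Frobenius norms, which gives the extra factor $w_n$ on top of the unscaled rate.

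The two issues you flag, the randomness of $w_n$ and a uniform bound on $(1-|\beta|\|\mathbf G_n\|_2)^{-1}$, are left implicit in the paper. Of your fixes, only treating $w_n$ as deterministic and reading Assumption~\ref{A3.3} with a uniform margin $\bar\rho<1$ actually deliver a fixed $c$. The Krivelevich--Sudakov result holds only with probability tending to one, not for every realization as the expectation requires, and its $o(1)$ need not vanish when $\max\{d_n,\sqrt{\Delta_n}\}$ stays bounded. Conditioning on $\Delta_n$ would also destroy the independent-edge $\mathrm{Bin}(n-2,p_n^2)$ computation.
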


\begin{proof}
Suppose \(\mathbf G_n=\mathbf A_n/w_n\), and write
\[
\mathbf A_n^{(2)}:=\mathbf A_n^2-\operatorname{diag}(\mathbf A_n\mathbf 1),
\qquad
\mathbf G_n^{(2)}=\frac{1}{w_n^2}\mathbf A_n^{(2)}.
\]
The proof follows the same decomposition strategy as Proposition~\ref{prop:cov_bound_unscl}. Under Assumptions~\ref{A3.1} and~\ref{A3.3}, the linear representation of \(\mathbf Y\) and independence of \(\mathbf X\) and \(\mathbf G_n\) imply that the covariance admits the same trace representation as in \eqref{eq:cov_exp}, now with the scaled network objects. Since \(|\beta|\|\mathbf G_n\|_2<1\), the derivations leading to \eqref{eq:slope_def} yield
\[
\left|
\frac{\tfrac{1}{n}\operatorname{Cov}(\mathbf{G}_n^{(2)}\mathbf{X},\,\mathbf{G}_n\mathbf{Y})}
     {\tfrac{1}{n}\operatorname{Var}(\mathbf{G}_n^{(2)}\mathbf{X})}
\right|
\;\le\;
\frac{|\gamma|+|\delta|/|\beta|}{1-|\beta|\|\mathbf{G}_n\|_2}\,
\sqrt{\frac{\mathbb{E}\|\mathbf{G}_n\|_F^2}
           {\mathbb{E}\|\mathbf{G}_n^{(2)}\|_F^2}} .
\]
Moreover,
\[
\|\mathbf G_n\|_F^2=\frac{\|\mathbf A_n\|_F^2}{w_n^2},
\qquad
\|\mathbf G_n^{(2)}\|_F^2=\frac{\|\mathbf A_n^{(2)}\|_F^2}{w_n^4},
\]
so
\[
\sqrt{\frac{\mathbb E\|\mathbf G_n\|_F^2}{\mathbb E\|\mathbf G_n^{(2)}\|_F^2}}
=
w_n
\sqrt{\frac{\mathbb E\|\mathbf A_n\|_F^2}{\mathbb E\|\mathbf A_n^{(2)}\|_F^2}}.
\]
Using the Frobenius-norm calculations in \eqref{eq:den}--\eqref{eq:ratio_bound}, applied to \(\mathbf A_n\) and \(\mathbf A_n^{(2)}\),
\[
\sqrt{\frac{\mathbb E\|\mathbf A_n\|_F^2}{\mathbb E\|\mathbf A_n^{(2)}\|_F^2}}
=
\frac{1}{\sqrt{d_n+\tfrac{d_n^3}{n}}}+o(1).
\]
Substituting this into the preceding display gives
\[
\left|
\frac{\tfrac{1}{n}\operatorname{Cov}(\mathbf{G}_n^{(2)}\mathbf{X},\mathbf{G}_n\mathbf{Y})}
     {\tfrac{1}{n}\operatorname{Var}(\mathbf{G}_n^{(2)}\mathbf{X})}
\right|
=
O\!\left(
\frac{w_n}{\sqrt{d_n + d_n^3/n}}
\right),
\]
which proves Proposition~\ref{prop:cov_bound_scaled}.
\end{proof}

Proposition~\ref{prop:cov_bound_scaled} shows that scaling rescales the variance and covariance of the friends-of-friends instrument at different powers of \(w_n\). In sparse regimes, where \(d_n=o(1)\) or \(d_n=O(1)\), the maximum degree may still grow even when the average degree is bounded. The scaled variance and covariance satisfy
\[
\operatorname{Var}(\mathbf G_n^{(2)}\mathbf X)
=
O\!\left(\frac{1}{w_n^4}\,\mathbb E\|\mathbf A_n^{(2)}\|_F^2\right),
\qquad
\operatorname{Cov}(\mathbf G_n^{(2)}\mathbf X,\mathbf G_n\mathbf Y)
=
O\!\left(\frac{1}{w_n^3}\,\mathbb E\|\mathbf A_n^{(2)}\|_F^2\right).
\]
Thus, scaling aligns the relevant growth rates and prevents explosive behavior of the first-stage ratio. Weak identification can still arise in the scaled specification, but it is governed by whether the information index \(k(n)\) grows fast enough relative to \(w_n\).

\subsection{Supplementary Results and Definitions}\label{SA2}

\begin{theoremnamed}[Theorem~1.1 of \citealp{krivelevich2001}]\label{thm:krivelevich_sudakov}
Let $G = G(n,p)$ be a random graph and let $\Delta$ be the maximum degree of $G$. Then almost surely the largest eigenvalue of the adjacency matrix of $G$ satisfies    
\[
\lambda_{1}(G) = (1 + o(1)) \max \{\sqrt{\Delta},\, np\},
\]
where $o(1) \to 0$ as $\max\{\sqrt{\Delta}, np\} \to \infty$.
\end{theoremnamed}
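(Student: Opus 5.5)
The plan is to prove matching lower and upper bounds on $\lambda_1(G)$ and to treat the dense and sparse regimes separately. Throughout I write $M_n:=\max\{\sqrt{\Delta},np\}$, take $M_n\to\infty$, and use that $\mathbf A_n$ is symmetric and nonnegative. Two facts about such matrices drive the argument: $\lambda_1$ equals $\max_{\|x\|=1}x'\mathbf A_n x$, and $\lambda_1$ is monotone under passing to subgraphs.

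\emph{Lower bound.}
\begin{enumerate}
\item Take $x=\iota/\sqrt n$. Then $\lambda_1\ge 2|E(G)|/n$. The edge count is $\mathrm{Bin}\big(\binom n2,p\big)$ and concentrates, so $2|E(G)|/n=(1+o(1))np$ almost surely whenever $n^2p\to\infty$.
\item Let $v$ be a vertex of degree $\Delta$. Its star $K_{1,\Delta}$ is a subgraph of $G$ with top eigenvalue $\sqrt{\Delta}$, so monotonicity gives $\lambda_1\ge\sqrt{\Delta}$.
\end{enumerate}
Together these give $\lambda_1\ge(1-o(1))M_n$.

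\emph{Upper bound, dense case ($np\gg\log n$).} Write $\mathbf A_n=p(\mathbf J-\mathbf I)+(\mathbf A_n-\mathbb E\mathbf A_n)$. The first piece has norm at most $np$. For the centered piece I would invoke a Füredi--Komlós / Feige--Ofek spectral bound, $\|\mathbf A_n-\mathbb E\mathbf A_n\|_2=O(\sqrt{np})$ almost surely, which applies in this range. This gives $\lambda_1\le np+O(\sqrt{np})=(1+o(1))np$. Here degrees concentrate, so $\Delta=(1+o(1))np$ and $np$ is the maximum.

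\emph{Upper bound, sparse case ($np=O(\log n)$).} This is the main obstacle, because $\lambda_1$ is then driven by a few vertices of unusually high degree.
\begin{enumerate}
\item Fix a threshold $\tau$ with $np\ll\tau\ll\sqrt\Delta$ and $\tau$ large relative to $\Delta^{1/2}$-scale errors. Split $V$ into the high-degree set $H=\{v:\deg v\ge\tau\}$ and the low-degree set $L=V\setminus H$.
\item Using first-moment counting, establish the following structural facts almost surely.
 \begin{enumerate}
 \item $H$ is small.
 \item No two vertices of $H$ are within distance $2$.
 \item Every small vertex set $S$ spans at most $(1+o(1))|S|$ edges, so the graph is locally tree-like.
 \end{enumerate}
\item Partition the edges into three groups: edges inside $L$, edges between $H$ and $L$, and edges inside $H$. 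The last group is empty by (b). By subadditivity of $\lambda_1$ over edge-disjoint subgraphs, it suffices to bound each group separately.
\item For the $H$--$L$ edges: by (b) this subgraph is essentially a vertex-disjoint union of stars centred at $H$. Remaining overlaps are controlled by the sparse-edge lemma (c). Its top eigenvalue is then at most $(1+o(1))\sqrt{\Delta}$.
\item For the edges inside $L$: use the bound $\lambda_1\le\max_{uv\in E}\sqrt{\deg u\,\deg v}$ (or simply the maximum degree). Combined with the edge-density lemma, this gives $O(\sqrt{\tau\cdot np}+np)=o(\sqrt\Delta)$.
\end{enumerate}

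The delicate step is choosing $\tau$, and if needed iterating the decomposition, so that every error term is $o(M_n)$. This matters most in the intermediate regime where $np$ and $\sqrt\Delta$ are of comparable order. There I would first try to interpolate the two arguments: run the dense-case decomposition on the subgraph induced by $L$, and treat only the $H$-stars separately.
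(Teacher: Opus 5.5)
The paper gives no proof of this statement. It quotes Theorem~1.1 of Krivelevich--Sudakov as a black box to justify the scaling $w_n$, so your outline has to stand on its own. Your lower bound is fine. So is the case $np\gg\log n$: there $\lambda_1\le\Delta=(1+o(1))np$ already suffices, and Feige--Ofek is not even needed. The sparse case, however, does not go through.

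First, structural claim (b) is false at the threshold you need. Take $p=c/n$. A vertex has degree at least $\tau$ with probability about $e^{-\tau\log\tau}$, and for $\tau\ll\sqrt\Delta\asymp\sqrt{\log n/\log\log n}$ this is $n^{-o(1)}$. So $|H|=n^{1-o(1)}$, and $G[H]$ alone has about $|H|^2p=n^{1-o(1)}$ edges. In general, ``no two vertices of $H$ within distance $2$'' needs roughly $\tau\log(\tau/np)>(\tfrac12+\epsilon)\log n$, i.e.\ $\tau$ above about $\Delta/2$. At that threshold $G[L]$ still contains stars with about $\Delta/2$ leaves. Edge-disjoint subadditivity then gives nothing better than about $(1+2^{-1/2})\sqrt\Delta$.

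So with $\tau=o(\sqrt\Delta)$, as Step~5 requires, Step~4 must handle a nonempty $G[H]$ and heavily overlapping stars. Appealing to (c) does not recover the constant. The generic bound $\sqrt{a\Delta}$ for the $H$--$L$ bipartite graph, with $a$ the largest number of $H$-neighbours of a vertex, loses a constant factor. What is missing is a sharp estimate for overlapping stars. One workable route has three parts. With biadjacency matrix $B$, use $\lambda_1^2=\lambda_1(BB')\le\max_{h\in H}\sum_{u\in N(h)\cap L}|N(u)\cap H|$. Then show by a first-moment count that each $h\in H$ is joined by only $o(\Delta)$ paths of length two to other vertices of $H$. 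Finally, give a separate $o(\sqrt\Delta)$ bound for $G[H]$, which is forest-like with maximum degree $o(\Delta)$.

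Second, your case split leaves two regimes without an argument.

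(i) When $\sqrt\Delta\ll np=O(\log n)$, no admissible threshold exists. For example, take $np=\log^{0.9}n$, where $\Delta\asymp\log n/\log\log n$; no $\tau$ satisfies $np\ll\tau\ll\sqrt\Delta$. The dense route is unavailable too. For $np=o(\log n)$, two far-apart near-maximal stars force $\|\mathbf A_n-\mathbb E\mathbf A_n\|_2\gtrsim\sqrt\Delta\gg\sqrt{np}$. You still have to prove $\lambda_1\le(1+o(1))np$ even though degrees do not concentrate.

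(ii) In the crossover window $np\asymp\sqrt\Delta$, i.e.\ $np\asymp\sqrt{\log n/\log\log n}$, your proposed interpolation is still additive. The leaves of the $H$-stars are bulk vertices. You therefore bound $\lambda_1$ by a piece of size about $np$ plus a piece of size about $\sqrt\Delta$, which can be twice the target. To get $(1+o(1))\max\{np,\sqrt\Delta\}$ you need one of two things:

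\begin{itemize}
\item a \emph{vertex} partition that puts the high-degree vertices and their neighbourhoods in one block and the bulk in another, so the block-diagonal part contributes a maximum rather than a sum, and the connecting edges cost only $o(\max\{np,\sqrt\Delta\})$;
\item a test-vector (Collatz--Wielandt) argument adapted to the degree profile.
\end{itemize}
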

While their paper mentions random graphs in general, the discussion is focused on binomial or ER graphs, holding for all values of link formation probability $p(n)$.

\begin{definition}[Spectral Norm]\label{def:spec_norm}
Let $A \in \mathbb{R}^{m \times n}$. The \emph{spectral norm} of $A$ is defined as
\[
\|A\|_2 = \sup_{x \neq 0} \frac{\|Ax\|_2}{\|x\|_2} = \sqrt{\lambda_{\max}(A^\top A)},
\]
where $\lambda_{\max}(A^\top A)$ denotes the largest eigenvalue of $A^\top A$, and $\|\cdot\|_2$ on vectors is the usual Euclidean norm.
\end{definition}

\begin{definition}[Frobenius Norm]\label{def:frob_norm}
Let $A \in \mathbb{R}^{m \times n}$. The \emph{Frobenius norm} of $A$ is defined as
\[
\|A\|_F = \sqrt{\sum_{i=1}^{m}\sum_{j=1}^{n} |a_{ij}|^2}
= \sqrt{\operatorname{Tr}(A^\top A)}.
\]
Equivalently, if $\sigma_1, \dots, \sigma_{\min(m,n)}$ are the singular values of $A$, then
\[
\|A\|_F = \sqrt{\sum_{i} \sigma_i^2}.
\]
\end{definition}

\begin{definition}[Cauchy--Schwarz Inequality for Matrices]\label{def:CS-matrix}
Let $\mathbb{R}^{n\times n}$ be equipped with the \emph{Frobenius inner product}
\[
\langle A, B \rangle_F := \mathrm{Tr}(A^\top B),
\]
and the induced Frobenius norm
\[
\|A\|_F := \sqrt{\langle A, A \rangle_F} = \sqrt{\mathrm{Tr}(A^\top A)}.
\]
Then, for all $A,B \in \mathbb{R}^{n\times n}$, the \emph{Cauchy--Schwarz inequality} holds:
\begin{equation}\label{eq:CS-trace}
|\langle A,B\rangle_F| \;\le\; \|A\|_F \, \|B\|_F .
\end{equation}
Moreover, since $\mathrm{Tr}(AB) = \mathrm{Tr}(A^\top B)$ for symmetric $A$, this implies the general trace bound
\[
|\mathrm{Tr}(AB)| \;\le\; \|A\|_F \, \|B\|_F .
\]
\end{definition}

\begin{definition}[Asymptotic notation (deterministic)]\label{def:asymp}
Let $f,g:\mathbb{N}\to\mathbb{R}$ with $g(n)>0$ for all sufficiently large \(n\). Then
\begin{align*}
f(n)=o\!\big(g(n)\big)
&\;\Longleftrightarrow\;
\forall \varepsilon>0\ \exists n_0\ \forall n\ge n_0:\ |f(n)|\le \varepsilon\,g(n). \\[4pt]
f(n)=O\!\big(g(n)\big)
&\;\Longleftrightarrow\;
\exists C>0,\ \exists n_0\ \forall n\ge n_0:\ |f(n)|\le C\,g(n). \\[4pt]
f(n)=\omega\!\big(g(n)\big)
&\;\Longleftrightarrow\;
g(n)=o\!\big(f(n)\big). \\[4pt]
f(n)=\Omega\!\big(g(n)\big)
&\;\Longleftrightarrow\;
g(n)=O\!\big(f(n)\big). \\[4pt]
f(n)=\Theta\!\big(g(n)\big)
&\;\Longleftrightarrow\;
f(n)=O\!\big(g(n)\big)\ \text{and}\ f(n)=\Omega\!\big(g(n)\big).
\end{align*}
\noindent\textit{Additional comparisons:}
\begin{align*}
f(n)\sim g(n)
&\;\Longleftrightarrow\;
\lim_{n\to\infty}\frac{f(n)}{g(n)}=1. \\[4pt]
f(n)\asymp g(n)
&\;\Longleftrightarrow\;
\exists c_1,c_2>0,\ \exists n_0:\ c_1 g(n)\le f(n)\le c_2 g(n)\ \ \forall n\ge n_0. \\[4pt]
f(n)\ll g(n)
&\ \text{means}\ f(n)=O\!\big(g(n)\big),
\qquad
f(n)\gg g(n)\ \text{means}\ f(n)=\Omega\!\big(g(n)\big).
\end{align*}
\end{definition}

\newpage

\section{Additional Tables and Figures}\label{SA3}

Here, we present additional results, tables and figures referenced in the main text.

\subsection{Tables}\label{SA3_tables}

We first present additional Tables from the Monte Carlo simulations that were referenced in Section \ref{S4}.

\begin{table}[!htb]
\centering
\caption{Covariance Between Endogenous Variable $(\mathbf{GY})$ and Network-based Instrument $(\mathbf G^{(2)} \mathbf X )$.}
\begin{adjustbox}{max width=\textwidth}
\begin{tabular}{ll|llllll}
\toprule
 & & \multicolumn{6}{c}{\textbf{Covariance ($\mathbf{GY}$, $\mathbf{G}^{(2)} \mathbf{X}$)}} \\
\midrule
$\beta_0$ & $n \backslash d$ & 0.25 & 0.5 & 0.75 & 1 & 2 & 5 \\
\midrule

\multicolumn{8}{l}{\textbf{Panel A: Unscaled Model}} \\

\multirow{4}{*}{0.4666}
& 250  & 0.380  & -2.101 & -10.246 & -8.244  & -43.062 & -27.252 \\
& 500  & 0.274  & -0.384 & 44.601  & -11.667 & -48.312 & -21.111 \\
& 1000 & 0.618  & 15.352 & -10.376 & -8.785  & 16.535  & -59.370 \\
& 2000 & 0.779  & 5.660  & 2.979   & -0.972  & -6.548  & -20.346 \\

\noalign{\medskip}

\multirow{4}{*}{0.95}
& 250  & -0.316 & -2.066 & -4.748  & -3.004  & -14.685 & -6.357 \\
& 500  & -0.341 & -1.220 & -2.963  & -3.773  & -2.526  & 4.635  \\
& 1000 & -0.433 & -1.312 & -3.511  & -3.617  & -0.879  & -4.838 \\
& 2000 & -0.521 & -1.711 & -2.658  & -3.966  & -17.429 & 2.318  \\

\midrule
\multicolumn{8}{l}{\textbf{Panel B: Scaled Model}} \\

\multirow{4}{*}{0.4666}
& 250  & 0.024 & 0.139 & 0.168 & 0.303 & 0.638 & 0.507 \\
& 500  & 0.019 & 0.029 & 0.118 & 0.192 & 0.399 & 0.537 \\
& 1000 & 0.038 & 0.066 & 0.154 & 0.103 & 0.770 & 0.512 \\
& 2000 & 0.043 & 0.076 & 0.097 & 0.116 & 0.442 & 0.563 \\

\noalign{\medskip}

\multirow{4}{*}{0.95}
& 250  & 0.190 & 2.669 & 0.669 & -0.657 & 39.013 & -1.942 \\
& 500  & 0.135 & 0.751 & -2.798 & 1.088 & -2.575 & -1.515 \\
& 1000 & 0.525 & -0.498 & -0.993 & -0.055 & -0.632 & -1.515 \\
& 2000 & -0.580 & 0.190 & 0.379 & -0.626 & -9.527 & -1.512 \\

\bottomrule
\end{tabular}
\end{adjustbox}
\label{tab:covariance}
\end{table}

\clearpage
\begin{landscape}
\begin{table}[p]
\centering
\scriptsize
\setlength{\tabcolsep}{2.5pt}
\renewcommand{\arraystretch}{1.05}
\caption{Scaled Model: TSLS Estimates, First-Stage Diagnostics, and Empirical Coverage Probabilities. Panel A reports TSLS estimates for specification~\eqref{struc_model} under the scaled network with scaling factor $w_n=\max(d_n, \sqrt{\Delta_n})$, together with the correlation between \(\mathbf{GY}\) and \(\mathbf G^{(2)}\mathbf X\), and the first-stage \(F\)-statistic. Panel B reports coverage probabilities for conventional \(t\)-test using homoskedastic standard errors,  \(t\)-test with \cite{kojevnikov2021limit} network-dependent standard errors, and Anderson--Rubin inference using homoskedastic standard errors.}
\label{tab:scaled_diagnostics_coverage}

\begin{adjustbox}{width=\linewidth,totalheight=0.88\textheight,keepaspectratio}
\begin{tabular}{ll|llllll|llllll|llllll}
\toprule
\multicolumn{20}{l}{\textbf{Panel A: TSLS Estimates and First-Stage Diagnostics}} \\
\midrule
 & & \multicolumn{6}{c|}{\textbf{TSLS Estimate}} 
 & \multicolumn{6}{c|}{\textbf{Correlation ($\mathbf{GY}$, $\mathbf{G}^2\mathbf{X}$)}} 
 & \multicolumn{6}{c}{\textbf{First-Stage F-statistic}} \\
\midrule
$\beta_0$ & $n \backslash d$ 
& 0.25 & 0.5 & 0.75 & 1 & 2 & 5 
& 0.25 & 0.5 & 0.75 & 1 & 2 & 5 
& 0.25 & 0.5 & 0.75 & 1 & 2 & 5 \\
\midrule

\multirow{4}{*}{0.4666} 
& 250  & 0.775 & 0.440 & 0.450 & 0.450 & 0.457 & 0.462 & 0.261 & 0.444 & 0.555 & 0.484 & 0.687 & 0.752 & 14.4 & 48.6 & 79.0 & 62.3 & 147.1 & 283.9 \\
& 500  & 1.330 & 0.429 & 0.449 & 0.456 & 0.462 & 0.465 & 0.211 & 0.263 & 0.424 & 0.488 & 0.574 & 0.743 & 15.3 & 47.1 & 66.9 & 128.8 & 211.0 & 455.6 \\
& 1000 & 0.453 & 0.466 & 0.465 & 0.464 & 0.466 & 0.467 & 0.357 & 0.335 & 0.531 & 0.491 & 0.647 & 0.758 & 124.3 & 153.5 & 325.0 & 236.0 & 591.3 & 1006.7 \\
& 2000 & 0.463 & 0.464 & 0.464 & 0.467 & 0.468 & 0.466 & 0.394 & 0.429 & 0.430 & 0.498 & 0.651 & 0.757 & 322.6 & 332.3 & 398.3 & 583.0 & 1095.800 & 2231.7 \\

\noalign{\medskip}

\multirow{4}{*}{0.95}   
& 250  & 0.943 & 0.950 & 0.946 & 0.956 & 0.950 & 0.950 & 0.352 & 0.289 & 0.273 & -0.078 & 0.122 & -0.726 & 28.98 & 19.16 & 12.17 & 2.14 & 1.28 & 225.8 \\
& 500  & 0.962 & 0.949 & 0.951 & 0.949 & 0.950 & 0.950 & 0.144 & 0.375 & -0.248 & 0.060 & -0.200 & -0.655 & 9.5 & 84.18 & 25.99 & 0.85 & 18.35 & 288.98 \\
& 1000 & 0.949 & 0.951 & 0.950 & 0.999 & 0.969 & 0.950 & 0.371 & -0.178 & -0.229 & -0.008 & -0.028 & -0.648 & 138.85 & 32.98 & 47.94 & 5.37 & 1.43 & 583.3 \\
& 2000 & 0.951 & 0.950 & 0.949 & 0.950 & 0.950 & 0.950 & -0.305 & 0.065 & 0.147 & -0.184 & -0.170 & -0.643 & 188.5 & 14.0 & 33.2 & 66.2 & 32.8 & 1166.6 \\

\midrule
\multicolumn{20}{l}{\textbf{Panel B: Coverage Probabilities}} \\
\midrule

 & & \multicolumn{6}{c|}{\textbf{\(t\)-test (Homoskedastic SE)}} 
 & \multicolumn{6}{c|}{\textbf{\(t\)-test (\cite{kojevnikov2021limit} SE)}} 
 & \multicolumn{6}{c}{\textbf{AR (Homoskedastic SE)}} \\
\midrule
$\beta_0$ & $n \backslash d$ 
& 0.25 & 0.5 & 0.75 & 1 & 2 & 5 
& 0.25 & 0.5 & 0.75 & 1 & 2 & 5
& 0.25 & 0.5 & 0.75 & 1 & 2 & 5 \\
\midrule

\multirow{4}{*}{0.4666}
& 250  & 0.986 & 0.968 & 0.939 & 0.949 & 0.957 & 0.958
        & 0.978 & 0.962 & 0.936 & 0.943 & 0.951 & 0.955
        & 0.956 & 0.961 & 0.927 & 0.940 & 0.953 & 0.958 \\
& 500  & 0.984 & 0.965 & 0.958 & 0.948 & 0.950 & 0.953
        & 0.983 & 0.942 & 0.957 & 0.945 & 0.951 & 0.949
        & 0.956 & 0.942 & 0.953 & 0.945 & 0.948 & 0.954 \\
& 1000 & 0.935 & 0.959 & 0.950 & 0.961 & 0.961 & 0.962
        & 0.931 & 0.955 & 0.949 & 0.956 & 0.960 & 0.957
        & 0.929 & 0.959 & 0.948 & 0.960 & 0.961 & 0.962 \\
& 2000 & 0.945 & 0.956 & 0.944 & 0.965 & 0.941 & 0.949
        & 0.953 & 0.958 & 0.949 & 0.961 & 0.940 & 0.950
        & 0.942 & 0.955 & 0.942 & 0.963 & 0.942 & 0.947 \\

\noalign{\medskip}

\multirow{4}{*}{0.95}
& 250  & 0.968 & 0.973 & 0.982 & 1.000 & 1.000 & 0.948
        & 0.960 & 0.995 & 0.993 & 1.000 & 1.000 & 0.946
        & 0.958 & 0.944 & 0.953 & 0.946 & 0.950 & 0.948 \\
& 500  & 0.987 & 0.970 & 0.973 & 1.000 & 0.975 & 0.948
        & 1.000 & 0.977 & 0.998 & 1.000 & 1.000 & 0.943
        & 0.931 & 0.963 & 0.953 & 0.936 & 0.948 & 0.946 \\
& 1000 & 0.956 & 0.952 & 0.962 & 0.996 & 1.000 & 0.948
        & 0.975 & 0.998 & 0.990 & 1.000 & 1.000 & 0.946
        & 0.952 & 0.935 & 0.951 & 0.949 & 0.957 & 0.946 \\
& 2000 & 0.948 & 0.991 & 0.970 & 0.958 & 0.971 & 0.946
        & 0.983 & 1.000 & 0.990 & 0.989 & 0.989 & 0.948
        & 0.947 & 0.940 & 0.950 & 0.948 & 0.961 & 0.947 \\

\bottomrule
\end{tabular}
\end{adjustbox}
\end{table}
\end{landscape}

\begin{table}[!htb]
\centering
\caption{Anderson--Rubin Test Coverage Probabilities at the 95\% Nominal Level using \cite{kojevnikov2021limit}  Standard Errors.}
\begin{adjustbox}{max width=\textwidth}
\begin{tabular}{ll|llllll}
\toprule
 & & \multicolumn{6}{c}{\makecell{\textbf{AR Test Coverage}\\\textbf{(\cite{kojevnikov2021limit} SEs)}}} \\
\midrule
$\beta_0$ & $n \backslash d$ & 0.25 & 0.5 & 0.75 & 1 & 2 & 5 \\
\midrule

\multicolumn{8}{l}{\textbf{Panel A: Unscaled Model}} \\

\multirow{4}{*}{0.4666}
& 250  & 0.882 & 0.926 & 0.921 & 0.931 & 0.948 & 0.945 \\
& 500  & 0.920 & 0.898 & 0.937 & 0.949 & 0.936 & 0.945 \\
& 1000 & 0.935 & 0.940 & 0.953 & 0.952 & 0.942 & 0.949 \\
& 2000 & 0.956 & 0.959 & 0.952 & 0.950 & 0.940 & 0.957 \\

\noalign{\medskip}

\multirow{4}{*}{0.95}
& 250  & 0.888 & 0.935 & 0.924 & 0.953 & 0.943 & 0.946 \\
& 500  & 0.925 & 0.929 & 0.936 & 0.943 & 0.945 & 0.946 \\
& 1000 & 0.936 & 0.937 & 0.956 & 0.944 & 0.943 & 0.962 \\
& 2000 & 0.947 & 0.946 & 0.943 & 0.953 & 0.960 & 0.957 \\

\midrule
\multicolumn{8}{l}{\textbf{Panel B: Scaled Model}} \\

\multirow{4}{*}{0.4666}
& 250  & 0.895 & 0.928 & 0.914 & 0.933 & 0.943 & 0.952 \\
& 500  & 0.928 & 0.918 & 0.939 & 0.935 & 0.943 & 0.948 \\
& 1000 & 0.921 & 0.951 & 0.945 & 0.956 & 0.959 & 0.959 \\
& 2000 & 0.945 & 0.956 & 0.946 & 0.961 & 0.942 & 0.948 \\

\noalign{\medskip}

\multirow{4}{*}{0.95}
& 250  & 0.878 & 0.914 & 0.941 & 0.931 & 0.943 & 0.943 \\
& 500  & 0.906 & 0.931 & 0.944 & 0.936 & 0.944 & 0.942 \\
& 1000 & 0.935 & 0.932 & 0.950 & 0.949 & 0.956 & 0.947 \\
& 2000 & 0.943 & 0.933 & 0.947 & 0.948 & 0.957 & 0.950 \\

\bottomrule
\end{tabular}
\end{adjustbox}
\label{tab:ar-kojev-cov}
\end{table}

\begin{table}[!htb]
\centering
\caption{Average Confidence Interval Length for 95\% Coverage of $\beta$ 
(Homoskedastic Standard Errors and \citealp{kojevnikov2021limit} Standard Errors).}
\begin{adjustbox}{max width=\textwidth}
\begin{tabular}{ll|llllll|llllll}
\toprule
 & 
 & \multicolumn{6}{c|}{\textbf{CI Length (Standard Homoskedastic SE)}} 
 & \multicolumn{6}{c}{\textbf{CI Length (\cite{kojevnikov2021limit} SE)}} \\
\midrule
$\beta_0$ & $n \backslash d$ 
& 0.25 & 0.5 & 0.75 & 1 & 2 & 5
& 0.25 & 0.5 & 0.75 & 1 & 2 & 5 \\
\midrule

\multicolumn{14}{l}{\textbf{Panel A: Unscaled Model}} \\

\multirow{4}{*}{0.4666} 
& 250  & 0.528 & 117   & 0.058 & 0.080 & 0.061 & 56 
       & 0.538 & 305   & 0.058 & 0.081 & 0.059 & 61 \\
& 500  & 0.561 & 245   & 0.010 & 0.086 & 1.384 & 54 
       & 0.680 & 778   & 0.011 & 0.092 & 1.274 & 51 \\
& 1000 & 0.135 & 0.147 & 0.022 & 0.033 & 163   & 0.166
       & 0.133 & 0.418 & 0.023 & 0.033 & 169   & 0.159 \\
& 2000 & 0.072 & 0.021 & 3.696 & 5.270 & 0.039 & 111
       & 0.071 & 0.024 & 7.207 & 7.636 & 0.040 & 103 \\

\noalign{\medskip}

\multirow{4}{*}{0.95} 
& 250  & 0.284 & 0.108 & 0.084 & 0.215 & 269   & 30
       & 0.291 & 0.102 & 0.081 & 0.208 & 242   & 28 \\
& 500  & 0.143 & 0.150 & 0.110 & 0.106 & 252   & 64
       & 0.135 & 0.137 & 0.106 & 0.105 & 246   & 59 \\
& 1000 & 0.121 & 0.100 & 0.055 & 0.072 & 790250 & 896132
       & 0.116 & 0.098 & 0.054 & 0.071 & 763355 & 857568 \\
& 2000 & 0.066 & 0.040 & 0.046 & 0.050 & 309   & 14
       & 0.065 & 0.040 & 0.045 & 0.051 & 271   & 14 \\

\midrule
\multicolumn{14}{l}{\textbf{Panel B: Scaled Model}} \\

\multirow{4}{*}{0.4666} 
& 250  & 72.89  & 0.91 & 0.70 & 0.60 & 0.36 & 0.35
       & 96.21  & 0.90 & 0.69 & 0.59 & 0.35 & 0.34 \\
& 500  & 921.26 & 1.43 & 0.76 & 0.47 & 0.28 & 0.24
       & 1975.95& 1.39 & 0.75 & 0.46 & 0.28 & 0.24 \\
& 1000 & 0.77   & 0.50 & 0.34 & 0.44 & 0.15 & 0.16
       & 0.77   & 0.50 & 0.34 & 0.44 & 0.15 & 0.16 \\
& 2000 & 0.46   & 0.39 & 0.31 & 0.28 & 0.15 & 0.11
       & 0.46   & 0.38 & 0.31 & 0.28 & 0.14 & 0.11 \\

\noalign{\medskip}

\multirow{4}{*}{0.95} 
& 250  & 0.43 & 0.53 & 0.31 & 1.43  & 0.01 & 0.08
       & 0.54 & 1.26 & 0.50 & 1.96  & 0.01 & 0.08 \\
& 500  & 9.28 & 0.06 & 0.03 & 0.23  & 0.04 & 0.08
       & 16.59& 0.06 & 0.04 & 0.28  & 0.04 & 0.08 \\
& 1000 & 0.05 & 0.07 & 0.05 & 13.81 & 66.93& 0.05
       & 0.05 & 0.08 & 0.05 & 36.22 & 95.16& 0.05 \\
& 2000 & 0.03 & 23.54& 0.08 & 0.05  & 0.01 & 0.04
       & 0.03 & 90.99& 0.09 & 0.05  & 0.01 & 0.04 \\

\bottomrule
\end{tabular}
\end{adjustbox}
\label{tab:ci_length_TSLS}
\end{table}

\begin{table}[!htb]
\centering
\caption{Average Length and Percentage of Infinite 95\% Anderson--Rubin Confidence Intervals for \(\beta\), Using Homoskedastic Standard Errors. The length is computed only among finite confidence intervals, while the second panel reports the percentage of simulations in which the interval is infinite.}
\begin{adjustbox}{max width=\textwidth}
\begin{tabular}{ll|llllll|llllll}
\toprule
 & 
 & \multicolumn{6}{c|}{\textbf{CI Length (AR Homoskedastic, finite)}} 
 & \multicolumn{6}{c}{\textbf{\% CI Infinite (AR Homoskedastic)}} \\
\midrule
$\beta_0$ & $n \backslash d$ 
& 0.25 & 0.5 & 0.75 & 1 & 2 & 5 
& 0.25 & 0.5 & 0.75 & 1 & 2 & 5 \\
\midrule

\multicolumn{14}{l}{\textbf{Panel A: Unscaled Model}} \\

\multirow{4}{*}{0.4666}
& 250  & 0.641 & 0.482 & 0.064 & 0.160 & 0.135 & 0.458
       & 0.010 & 0.606 & 0.001 & 0.424 & 0.870 & 0.976 \\
& 500  & 0.585 & 0.686 & 0.011 & 0.266 & 0.593 & 0.293
       & 0.058 & 0.512 & 0.003 & 0.907 & 0.954 & 0.927 \\
& 1000 & 0.136 & 0.012 & 0.023 & 0.036 & 0.136 & 0.112
       & 0.000 & 0.005 & 0.000 & 0.000 & 0.957 & 0.445 \\
& 2000 & 0.072 & 0.022 & 0.168 & 0.692 & 0.520 & 0.216
       & 0.000 & 0.002 & 0.820 & 0.898 & 0.889 & 0.999 \\

\noalign{\medskip}

\multirow{4}{*}{0.95}
& 250  & 1.725     & 0.133 & 0.099 & 0.570 & 0.820 & NA
       & 0.341 & 0.000 & 0.000 & 0.136 & 0.674 & 1.000 \\
& 500  & 0.184 & 0.243 & 0.129 & 0.140 & 0.786 & 1.236
       & 0.000 & 0.009 & 0.000 & 0.001 & 0.864 & 0.999 \\
& 1000 & 0.141 & 0.118 & 0.057 & 0.077 & 0.812 & 88.7
       & 0.000 & 0.000 & 0.000 & 0.000 & 0.925 & 0.989 \\
& 2000 & 0.070 & 0.041 & 0.047 & 0.172 & 0.256 & 0.403
       & 0.000 & 0.000 & 0.000 & 0.014 & 0.999 & 0.996 \\

\midrule
\multicolumn{14}{l}{\textbf{Panel B: Scaled Model}} \\

\multirow{4}{*}{0.4666}
& 250  & 5.063 & 1.021 & 0.728 & 0.635 & 0.365 & 0.349
       & 0.149 & 0.000 & 0.000 & 0.000 & 0.000 & 0.000 \\
& 500  & 4.110 & 1.641 & 0.789 & 0.476 & 0.285 & 0.243
       & 0.157 & 0.004 & 0.000 & 0.000 & 0.000 & 0.000 \\
& 1000 & 0.791 & 0.511 & 0.347 & 0.450 & 0.150 & 0.164
       & 0.000 & 0.000 & 0.000 & 0.000 & 0.000 & 0.000 \\
& 2000 & 0.465 & 0.389 & 0.315 & 0.281 & 0.146 & 0.107
       & 0.000 & 0.000 & 0.000 & 0.000 & 0.000 & 0.000 \\

\noalign{\medskip}

\multirow{4}{*}{0.95}
& 250  & 0.366 & 0.072 & 0.310 & 0.808 & NA    & 0.086
       & 0.014 & 0.105 & 0.100 & 0.856 & 1.000 & 0.000 \\
& 500  & 0.574 & 0.063 & 0.056 & 0.244 & 0.047 & 0.078
       & 0.235 & 0.000 & 0.006 & 0.981 & 0.000 & 0.000 \\
& 1000 & 0.054 & 0.079 & 0.052 & 0.604 & 0.361 & 0.050
       & 0.000 & 0.003 & 0.000 & 0.557 & 0.897 & 0.000 \\
& 2000 & 0.032 & 0.178 & 0.090 & 0.048 & 0.008 & 0.037
       & 0.000 & 0.468 & 0.000 & 0.000 & 0.000 & 0.000 \\

\bottomrule
\end{tabular}
\end{adjustbox}
\label{tab:ci_length_ar}
\end{table}

\begin{table}[!htb]
\centering
\caption{Average Length and Percentage of Infinite 95\% Anderson--Rubin Confidence Intervals for \(\beta\), Using \cite{kojevnikov2021limit} Standard Errors. The length is computed only among finite confidence intervals, while the second panel reports the percentage of simulations in which the interval is infinite.}
\begin{adjustbox}{max width=\textwidth}
\begin{tabular}{ll|llllll|llllll}
\toprule
 & 
 & \multicolumn{6}{c|}{\textbf{CI Length (AR \cite{kojevnikov2021limit}, finite)}} 
 & \multicolumn{6}{c}{\textbf{\% CI Infinite (AR \cite{kojevnikov2021limit})}} \\
\midrule
$\beta_0$ & $n \backslash d$ 
& 0.25 & 0.5 & 0.75 & 1 & 2 & 5 
& 0.25 & 0.5 & 0.75 & 1 & 2 & 5 \\
\midrule

\multicolumn{14}{l}{\textbf{Panel A: Unscaled Model}} \\

\multirow{4}{*}{0.4666}
& 250  & 1.573     & 0.580 & 0.165 & 0.405 & 0.271 & 1.039
       & 0.274 & 0.991 & 0.037 & 0.605 & 0.871 & 0.981 \\
& 500  & 1.839     & 0.564 & NA    & 4.194     & 2.805     & 0.563
       & 0.313 & 0.995 & 1.000 & 0.992 & 0.969 & 0.907 \\
& 1000 & 0.144 & 0.074 & 0.027 & 0.042 & 0.185 & 0.150
       & 0.000 & 0.658 & 0.000 & 0.002 & 0.967 & 0.426 \\
& 2000 & 0.073 & 0.070 & 0.755 & 0.311 & 0.208 & 0.245
       & 0.000 & 0.456 & 0.997 & 0.994 & 0.910 & 0.998 \\

\noalign{\medskip}

\multirow{4}{*}{0.95}
& 250  & 4.387     & 0.116 & 0.092 & 0.489 & 0.597 & NA
       & 0.997 & 0.000 & 0.000 & 0.094 & 0.717 & 1.000 \\
& 500  & 0.174 & 0.169 & 0.115 & 0.837 & 0.827 & 1.438
       & 0.000 & 0.000 & 0.000 & 0.008 & 0.845 & 0.999 \\
& 1000 & 0.134 & 0.108 & 0.056 & 0.074 & 1.086     & 4.773
       & 0.000 & 0.000 & 0.000 & 0.000 & 0.926 & 0.983 \\
& 2000 & 0.068 & 0.041 & 0.046 & 0.100 & 0.201 & 2.138
       & 0.000 & 0.000 & 0.000 & 0.165 & 0.999 & 0.994 \\

\midrule
\multicolumn{14}{l}{\textbf{Panel B: Scaled Model}} \\

\multirow{4}{*}{0.4666}
& 250  & 11.957 & 1.556 & 0.793 & 0.653 & 0.359 & 0.342
       & 0.567  & 0.019 & 0.001 & 0.000 & 0.000 & 0.000 \\
& 500  & 5.847  & 1.507 & 0.804 & 0.480 & 0.282 & 0.240
       & 0.477  & 0.013 & 0.000 & 0.000 & 0.000 & 0.000 \\
& 1000 & 0.914  & 0.521 & 0.347 & 0.452 & 0.149 & 0.164
       & 0.001  & 0.000 & 0.000 & 0.000 & 0.000 & 0.000 \\
& 2000 & 0.479  & 0.394 & 0.317 & 0.280 & 0.145 & 0.107
       & 0.000  & 0.000 & 0.000 & 0.000 & 0.000 & 0.000 \\

\noalign{\medskip}

\multirow{4}{*}{0.95}
& 250  & 1.634 & 0.160 & 0.726 & 2.058 & NA    & 0.084
       & 0.255 & 0.710 & 0.769 & 0.999 & 1.000 & 0.000 \\
& 500  & 1.437 & 0.083 & 1.120 & NA    & 0.280 & 0.077
       & 0.846 & 0.003 & 0.966 & 1.000 & 0.316 & 0.000 \\
& 1000 & 0.064 & 0.281 & 0.082 & 0.673 & 0.988 & 0.050
       & 0.001 & 0.180 & 0.009 & 0.983 & 0.979 & 0.000 \\
& 2000 & 0.060 & 0.233 & 0.157 & 0.072 & 0.010 & 0.037
       & 0.028 & 0.948 & 0.040 & 0.001 & 0.002 & 0.000 \\

\bottomrule
\end{tabular}
\end{adjustbox}
\label{tab:ci_length_ar_koj}
\end{table}

\clearpage

\subsection{Figures}\label{SA3_figures}

Here, we present figures referenced in the numerical exercise in Section \ref{S3} and the Monte Carlo simulations in Section \ref{S4}.

\begin{figure}[!ht]
          \centering
    \captionsetup[subfigure]{justification=centering}
    \begin{subfigure}[b]{0.9\textwidth}
        \centering
        \includegraphics[width=\textwidth]{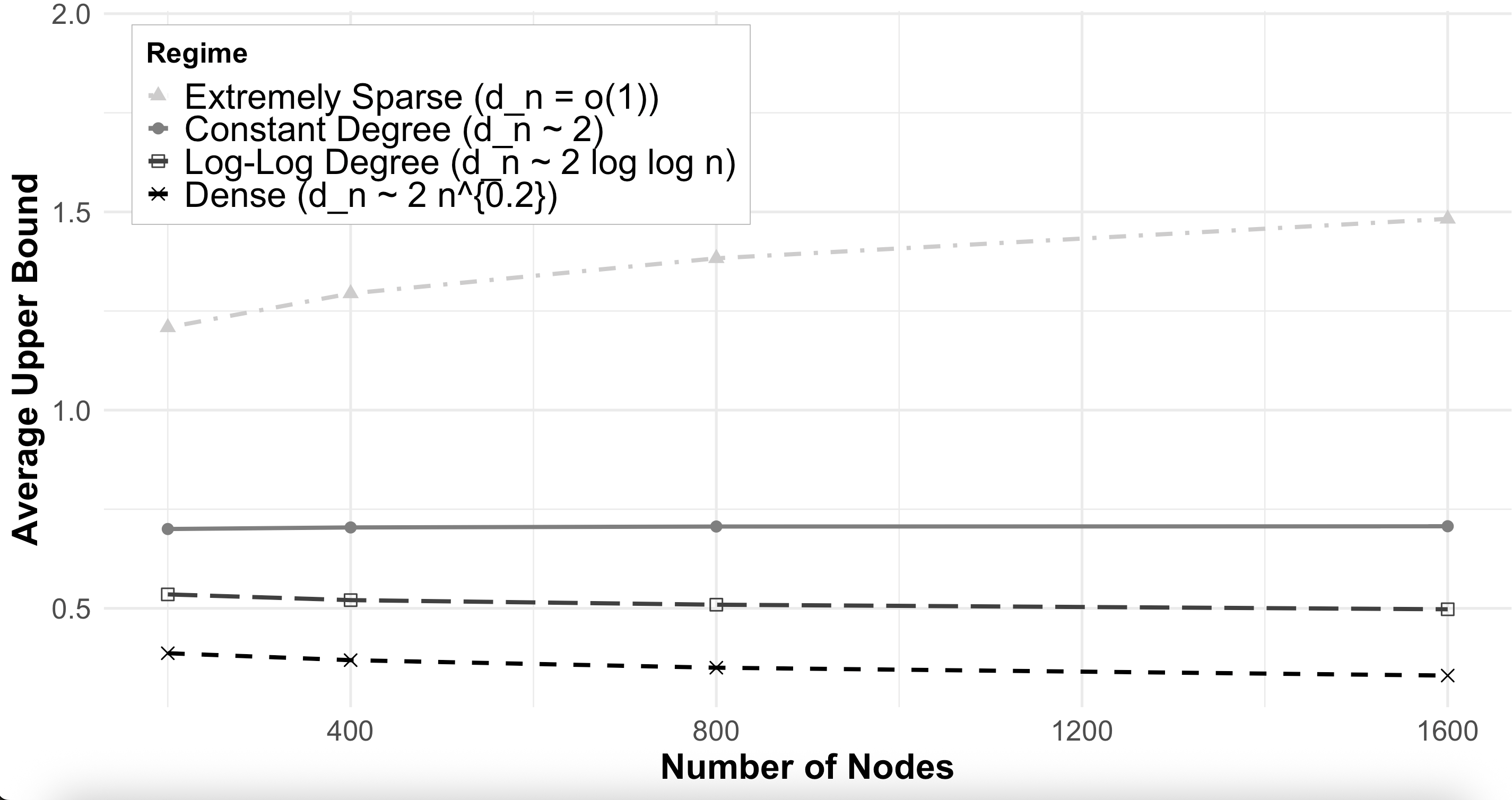}
        \caption{Upper Bound: Unscaled Model}
        \label{fig:upper_non-norm}
    \end{subfigure}
    \hfill
    \begin{subfigure}[b]{0.9\textwidth}
        \centering
        \includegraphics[width=\textwidth]{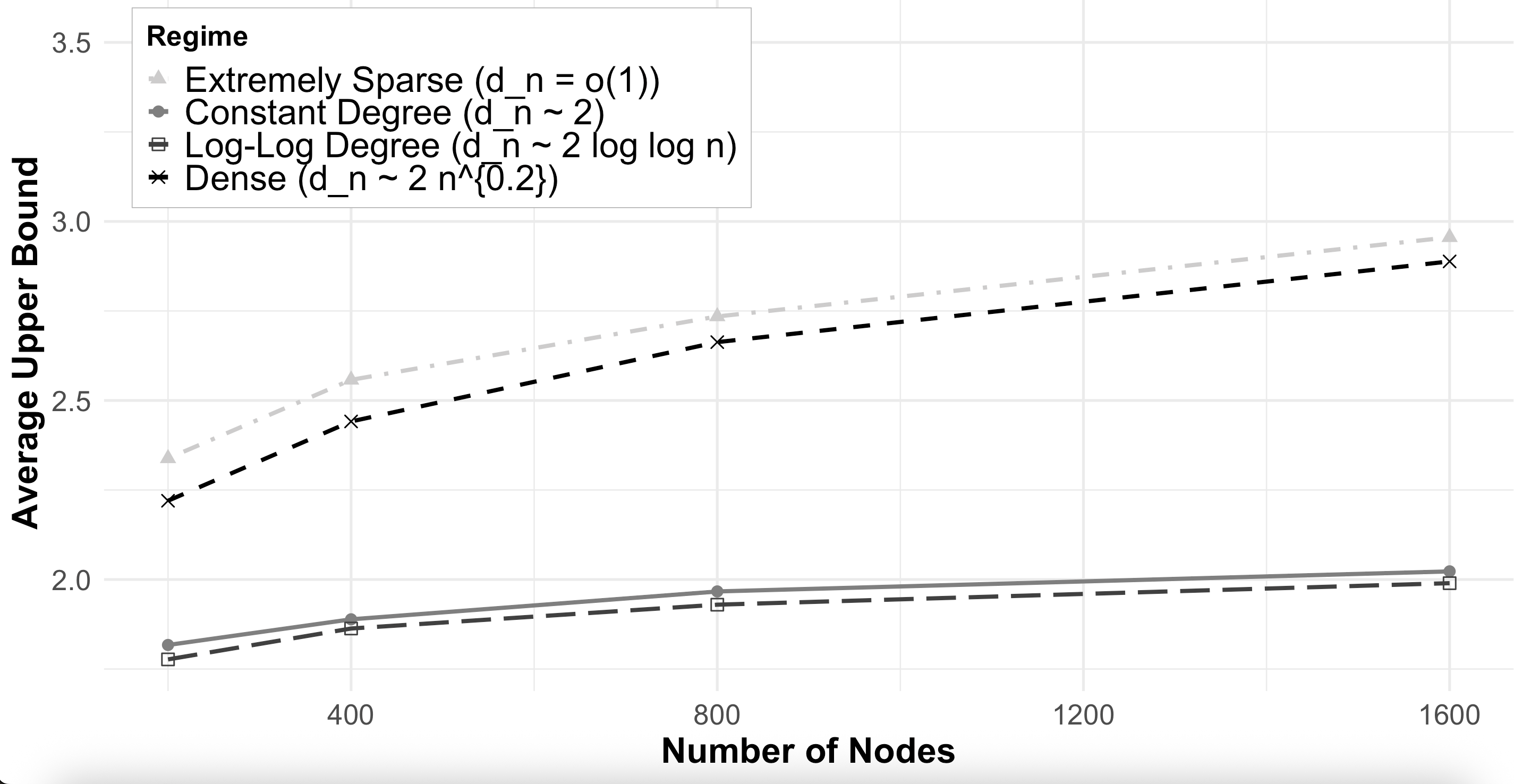}
        \caption{Upper Bound: Scaled Model}
        \label{fig:upper_norm}
        \end{subfigure}
    \caption{Numerical Exercise Section \ref{S3}: Growth Rates of Upper bounds on the variance-normalized covariance for the unscaled and scaled adjacency matrix. Please note the different y-axis scales, chosen for ease of presentation.}
    \label{fig:upper-bounds}
\end{figure}

\begin{figure}[!ht]
          \centering
    \captionsetup[subfigure]{justification=centering}
    \begin{subfigure}[b]{0.8\textwidth}
        \centering
        \includegraphics[width=\textwidth]{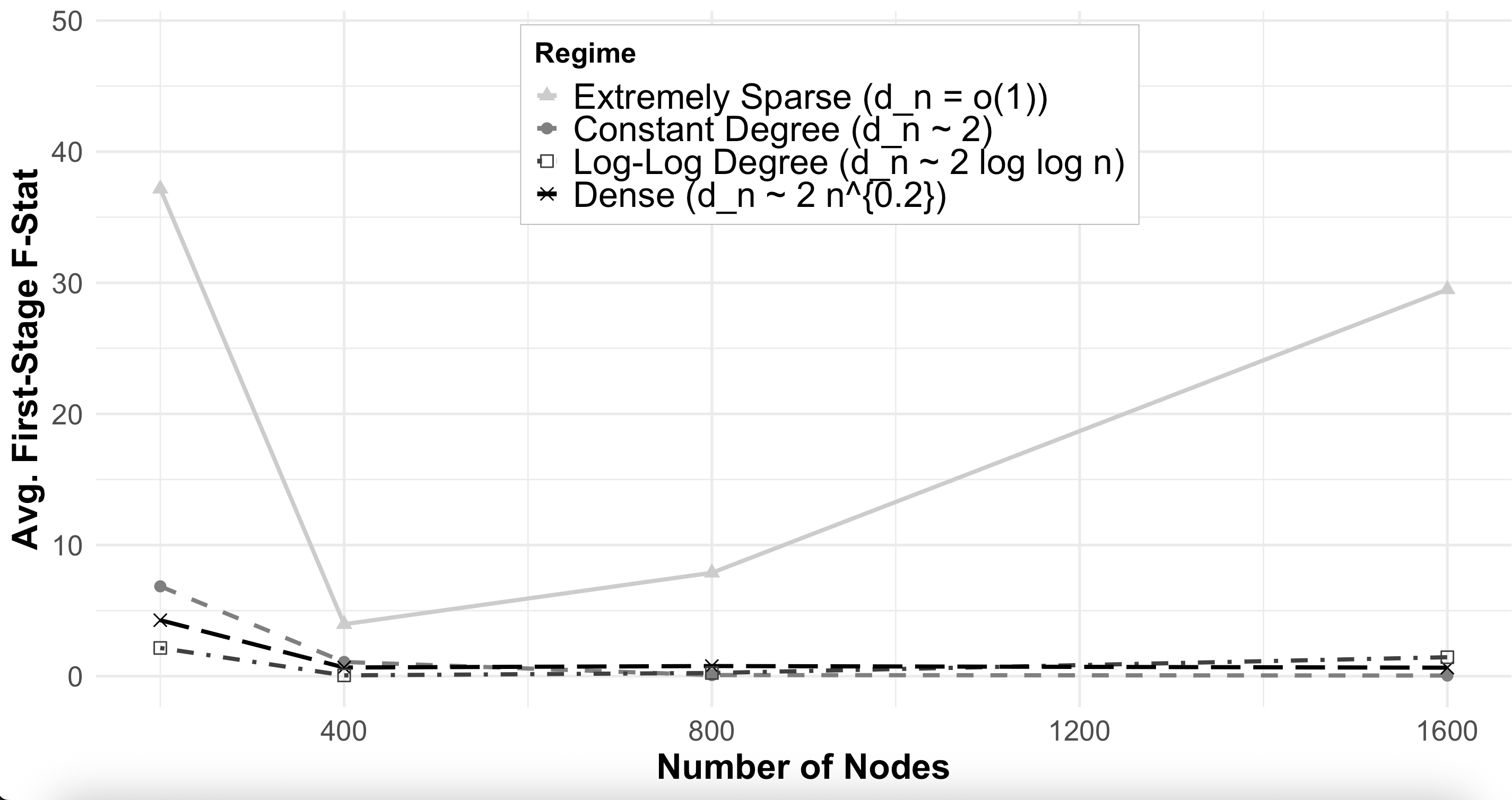}
        \caption{F-Statistic: Unscaled Model}
        \label{fig:Fstat_non-norm}
    \end{subfigure}
    \hfill
    \begin{subfigure}[b]{0.8\textwidth}
        \centering
        \includegraphics[width=\textwidth]{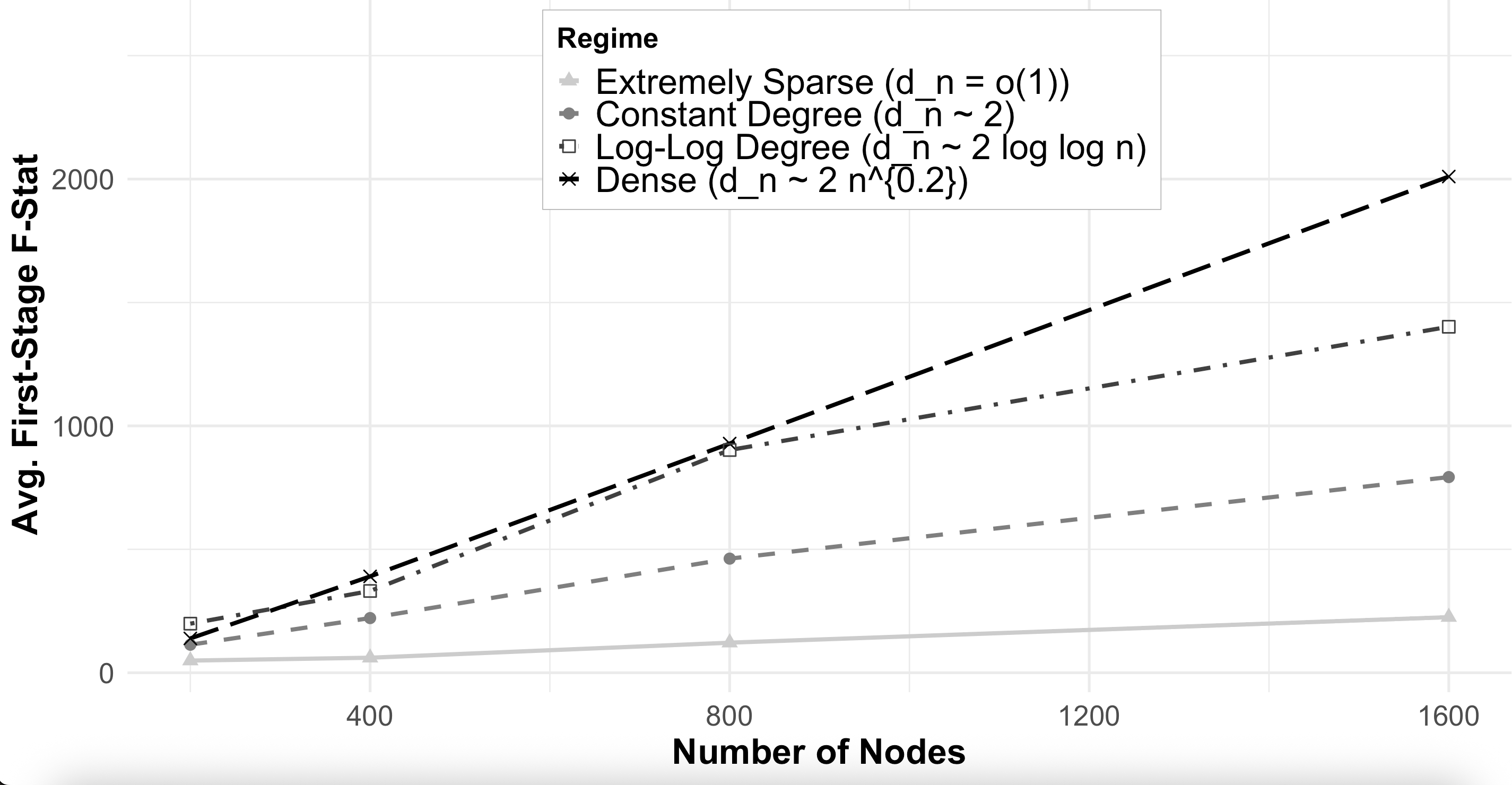}
        \caption{F-statistic: Scaled Model}
        \label{fig:Fstat_norm}
        \end{subfigure}
    \caption{Monte Carlo Simulations Section \ref{S4}: First-stage F-statistic for the unscaled and scaled adjacency matrix.}
    \label{fig:Fstat}
\end{figure}

\begin{figure}[!ht]
          \centering
    \captionsetup[subfigure]{justification=centering}
    \begin{subfigure}[b]{0.9\textwidth}
        \centering
        \includegraphics[width=\textwidth]{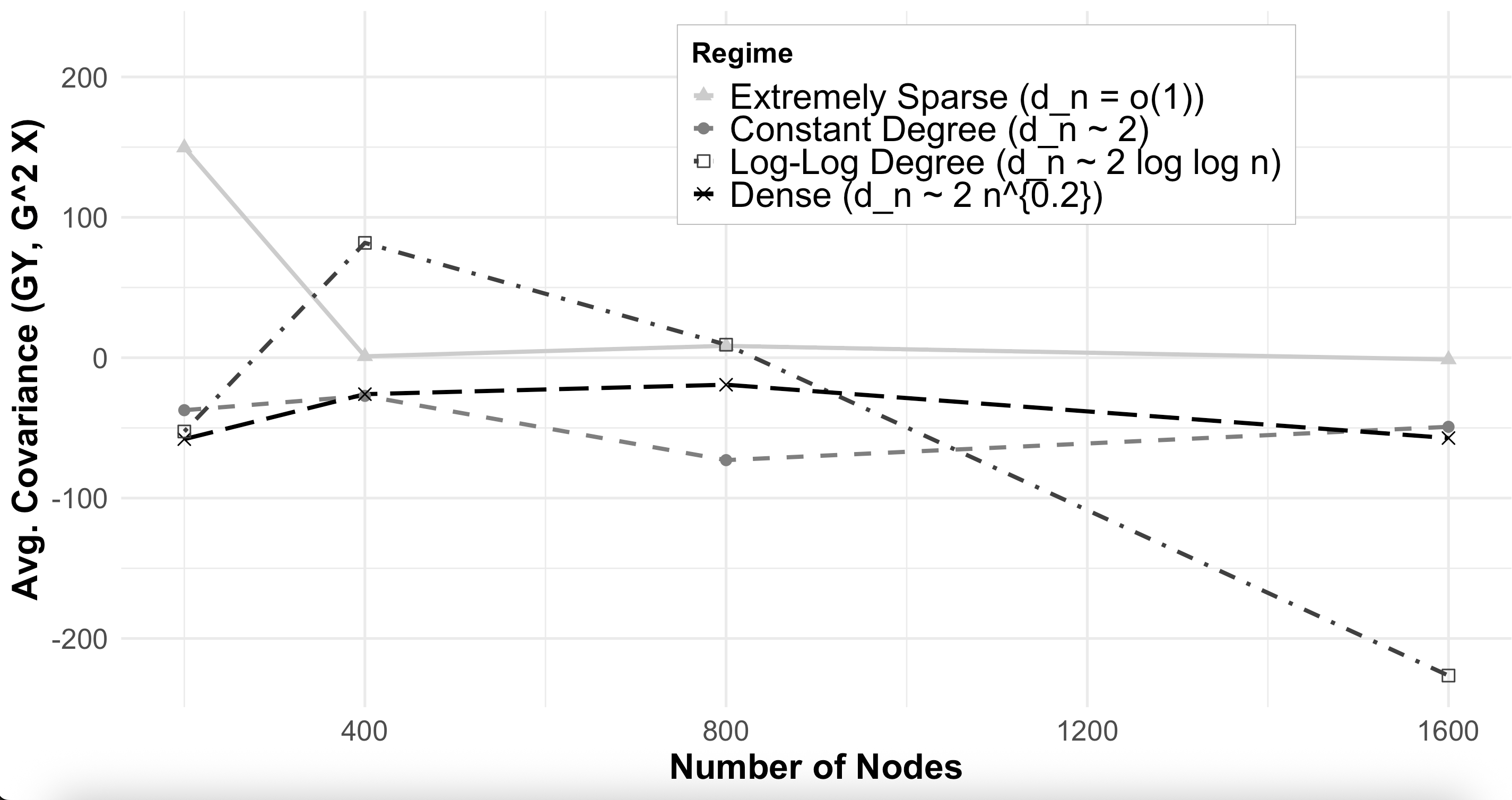}
        \caption{First-stage Relevance: Unscaled Model}
        \label{fig:Cov_non-norm}
    \end{subfigure}
    \hfill
    \begin{subfigure}[b]{0.9\textwidth}
        \centering
        \includegraphics[width=\textwidth]{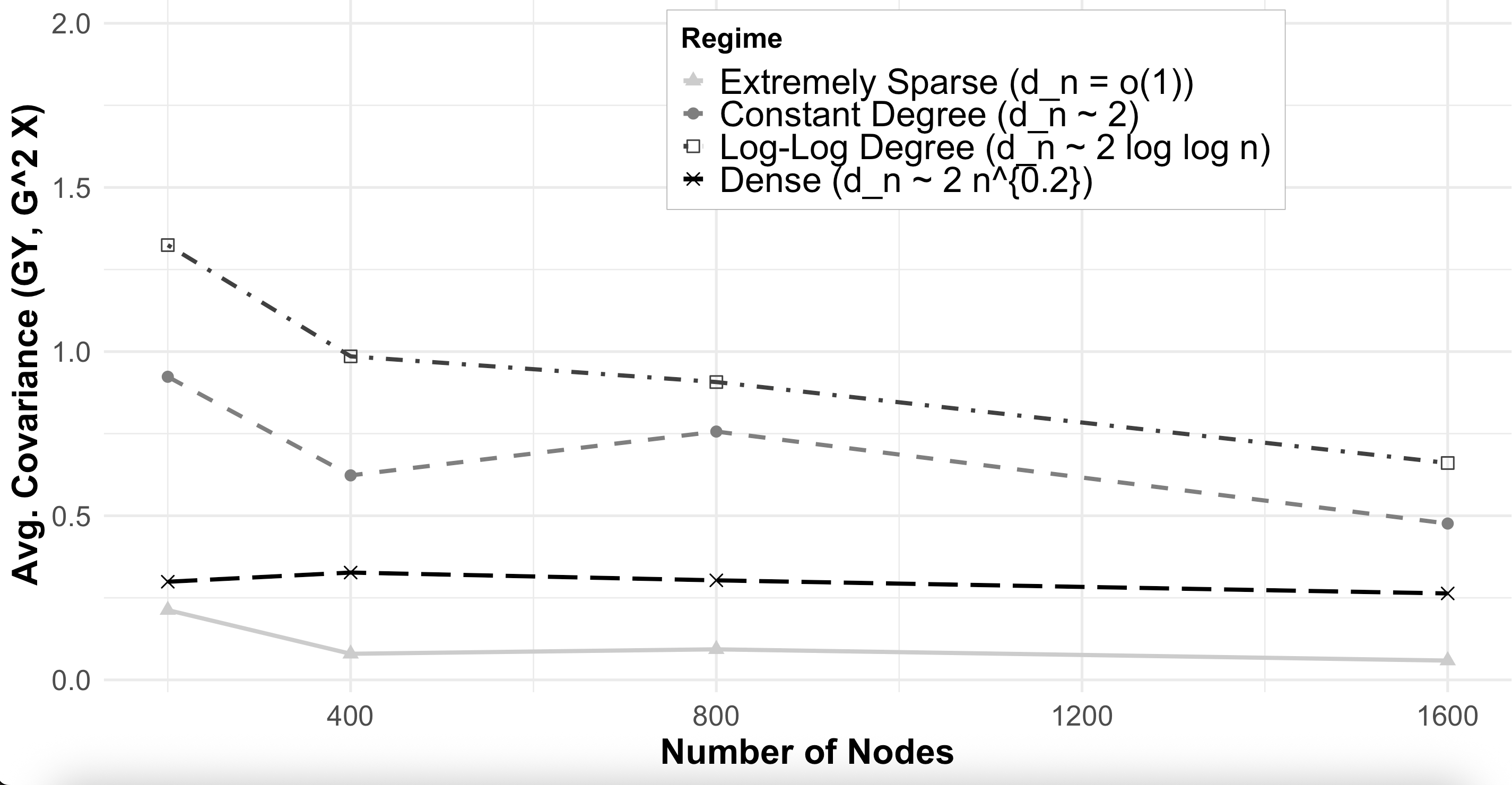}
        \caption{First-stage Relevance: Scaled Model}
        \label{fig:Cov_norm}
        \end{subfigure}
    \caption{Monte Carlo Simulations Section \ref{S4}: Covariance between $\mathbf{GY}$ and $\mathbf{G^{(2)}X}$ for the unscaled and scaled adjacency matrix.}
        \label{fig:Cov}
\end{figure}

\begin{figure}[!ht]
          \centering
    \captionsetup[subfigure]{justification=centering}
    \begin{subfigure}[b]{0.9\textwidth}
        \centering
        \includegraphics[width=\textwidth]{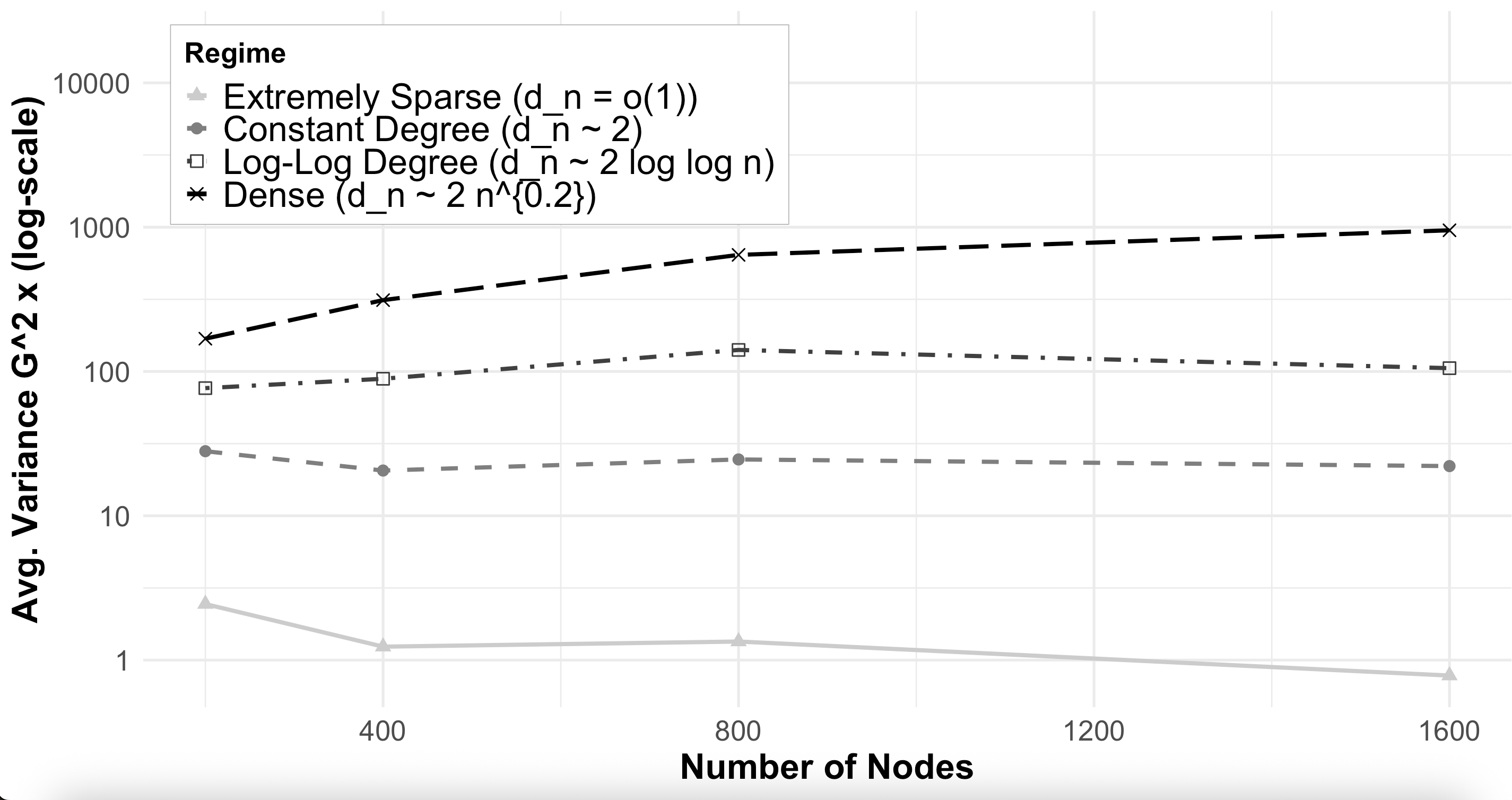}
        \caption{Variance of Instrument: Unscaled Model}
        \label{fig:Var_non-norm}
    \end{subfigure}
    \hfill
    \begin{subfigure}[b]{0.9\textwidth}
        \centering
        \includegraphics[width=\textwidth]{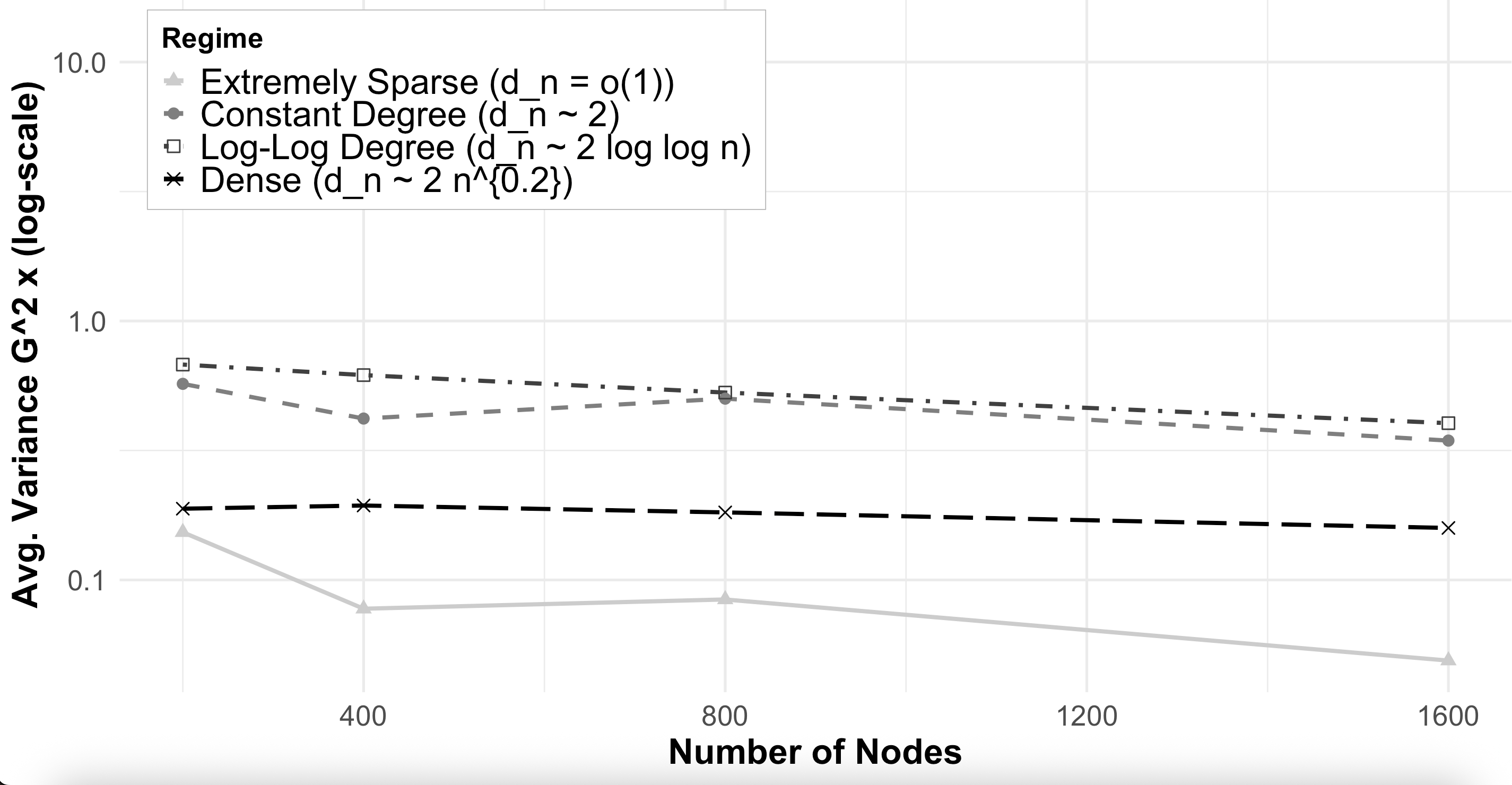}
        \caption{Variance of Instrument: Scaled Model}
        \label{fig:Var_norm}
        \end{subfigure}
    \caption{Monte Carlo Simulations Section \ref{S4}: Variance of instrument $\mathbf{G^{(2)}X}$ for the unscaled and scaled adjacency matrix.}
    \label{fig:Var}
\end{figure}

\begin{figure}[ht]
    \centering
    \captionsetup[subfigure]{justification=centering}
    \begin{subfigure}[b]{0.49\textwidth}
    \centering
    \includegraphics[width=\linewidth]{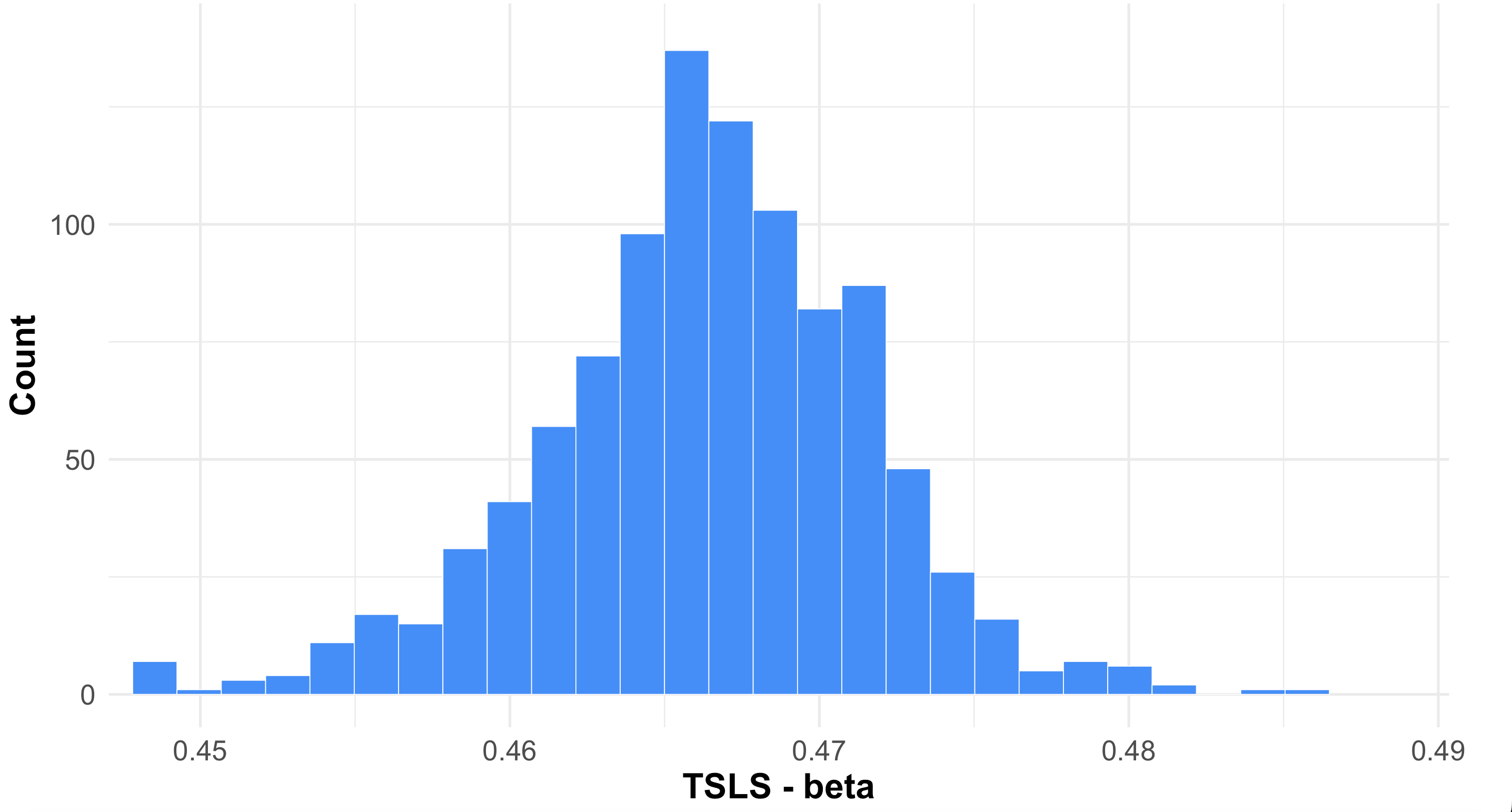}
    \caption{Strong First-Stage ($d=0.5$)}
    \label{fig:cov_beta_strong_un}    \end{subfigure}
    \hfill
    \begin{subfigure}[b]{0.49\textwidth}
  \centering
    \includegraphics[width=\linewidth]{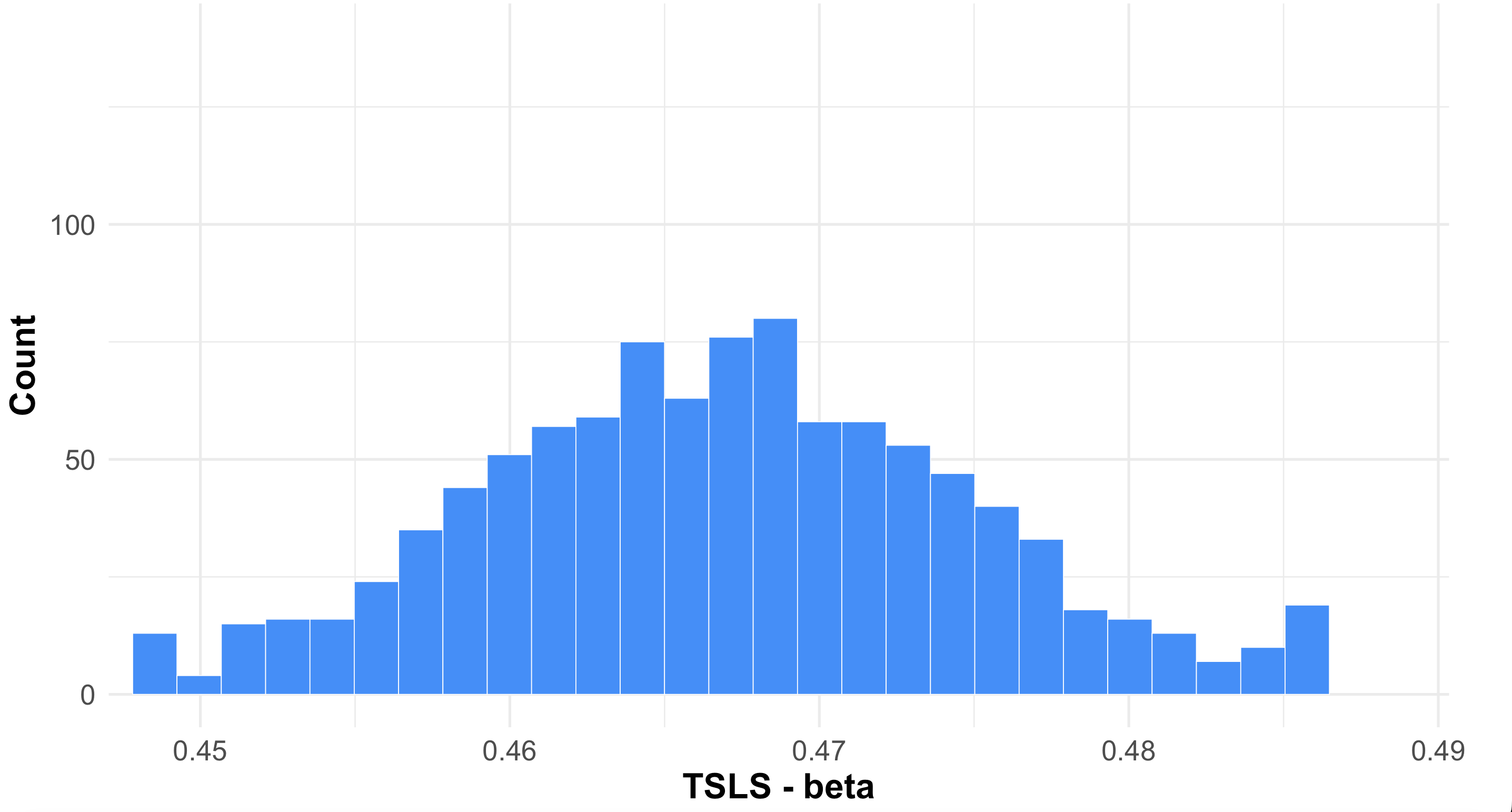}
    \caption{Weak First-Stage ($d=2$)}
    \label{fig:cov_beta_weak_un}
    \end{subfigure}
    \caption{Monte Carlo Simulations Section \ref{S4}: Strong versus Weak First-Stage (Unscaled Model): TSLS $\beta$ estimates ($\beta_0 = 0.4666$).}
    \label{fig:cov_beta_un}
\end{figure}

\newpage

\begin{figure}[ht]
    \centering
    \captionsetup[subfigure]{justification=centering}
    \begin{subfigure}[b]{0.49\textwidth}
    \centering
    \includegraphics[width=\linewidth]{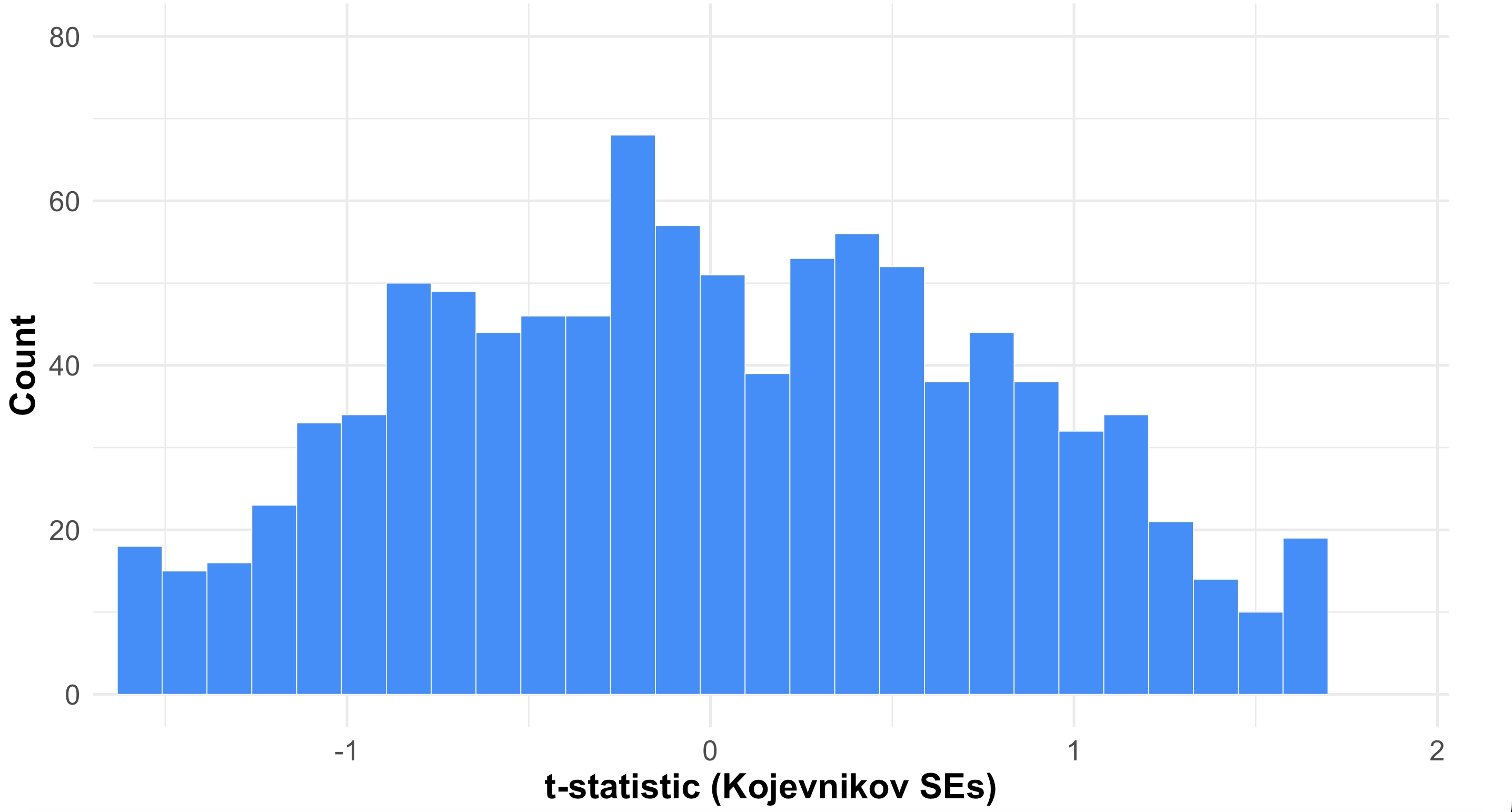}
    \caption{Strong First-Stage ($d=0.5$)}
    \label{fig:cov_t-stat_strong_un}    \end{subfigure}
    \hfill
    \begin{subfigure}[b]{0.49\textwidth}
  \centering
    \includegraphics[width=\linewidth]{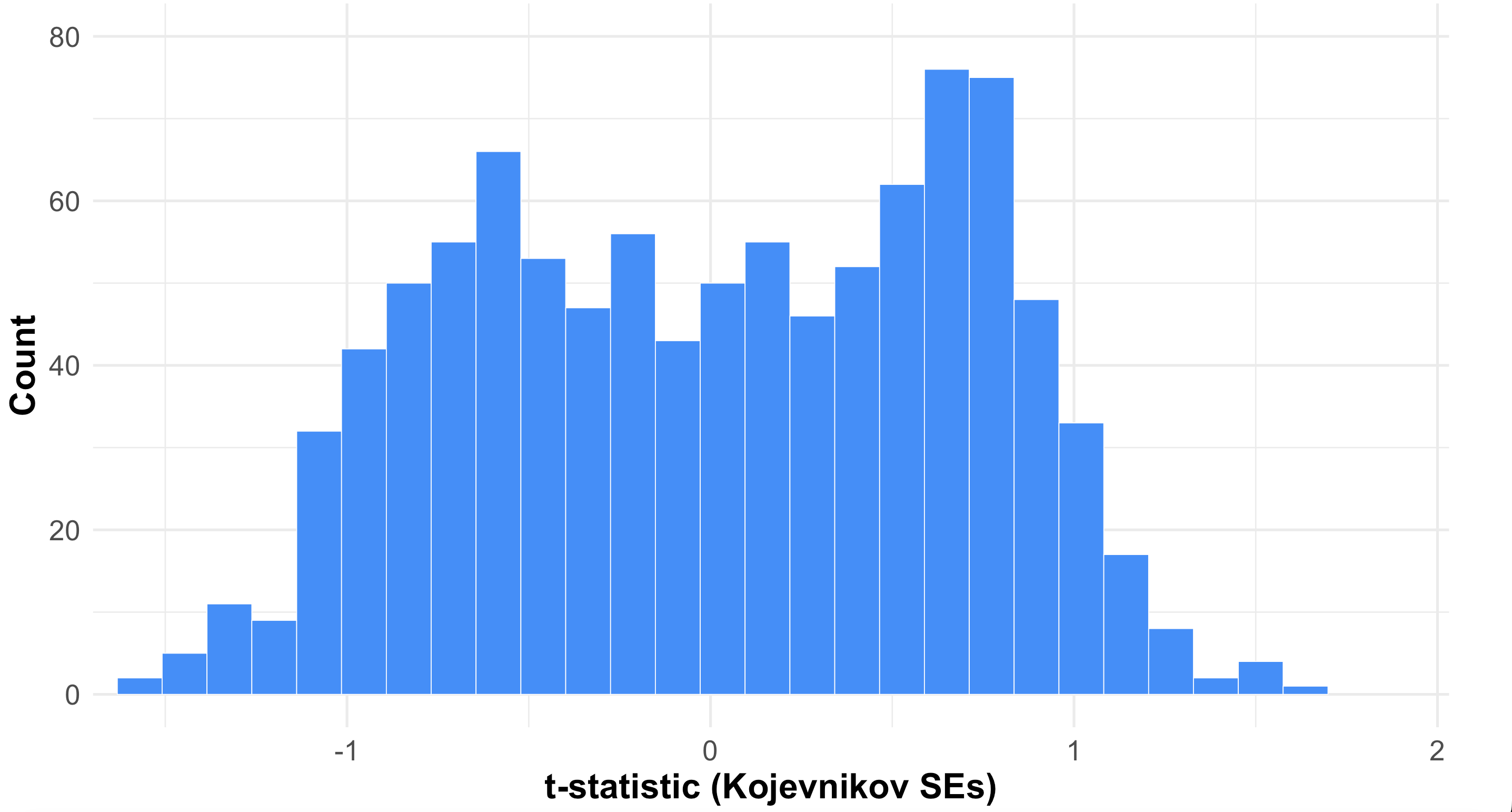}
    \caption{Weak First-Stage ($d=2$)}
    \label{fig:cov_t-stat_weak_un}
    \end{subfigure}
    \caption{Monte Carlo Simulations Section \ref{S4}: Strong versus Weak First-Stage (Unscaled Model): t-Statistic for TSLS $\beta$ estimates with standard errors from \cite{kojevnikov2021limit} and $\beta_0 = 0.4666$.}
    \label{fig:cov_t-stat_un}
\end{figure}

\begin{figure}[ht]
    \centering
    \captionsetup[subfigure]{justification=centering}
    \begin{subfigure}[b]{0.49\textwidth}
    \centering
    \includegraphics[width=\linewidth]{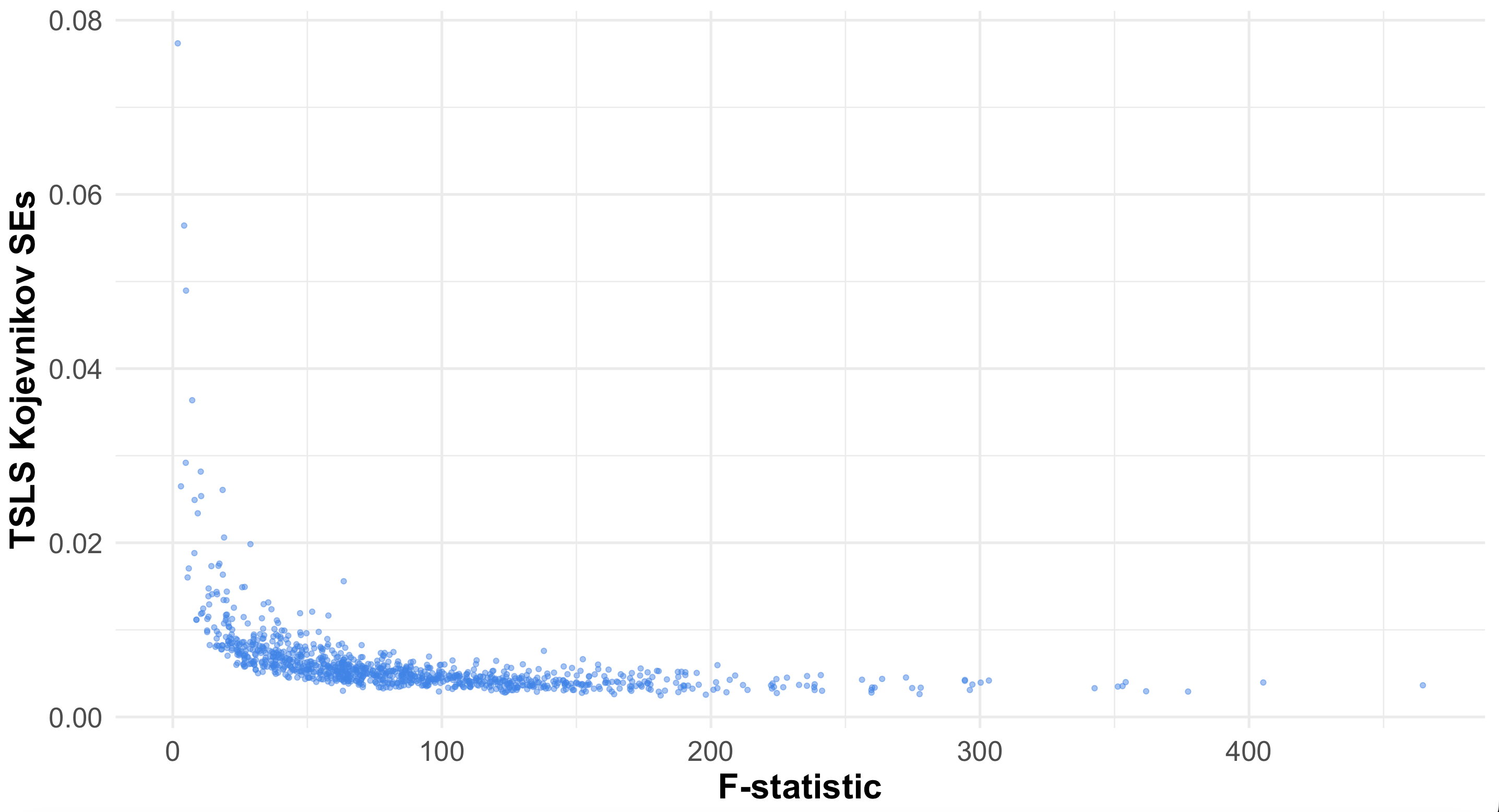}
    \caption{Strong First-Stage ($d=0.5$)}
    \label{fig:cov_Fstat_strong_un}    \end{subfigure}
    \hfill
    \begin{subfigure}[b]{0.49\textwidth}
  \centering
    \includegraphics[width=\linewidth]{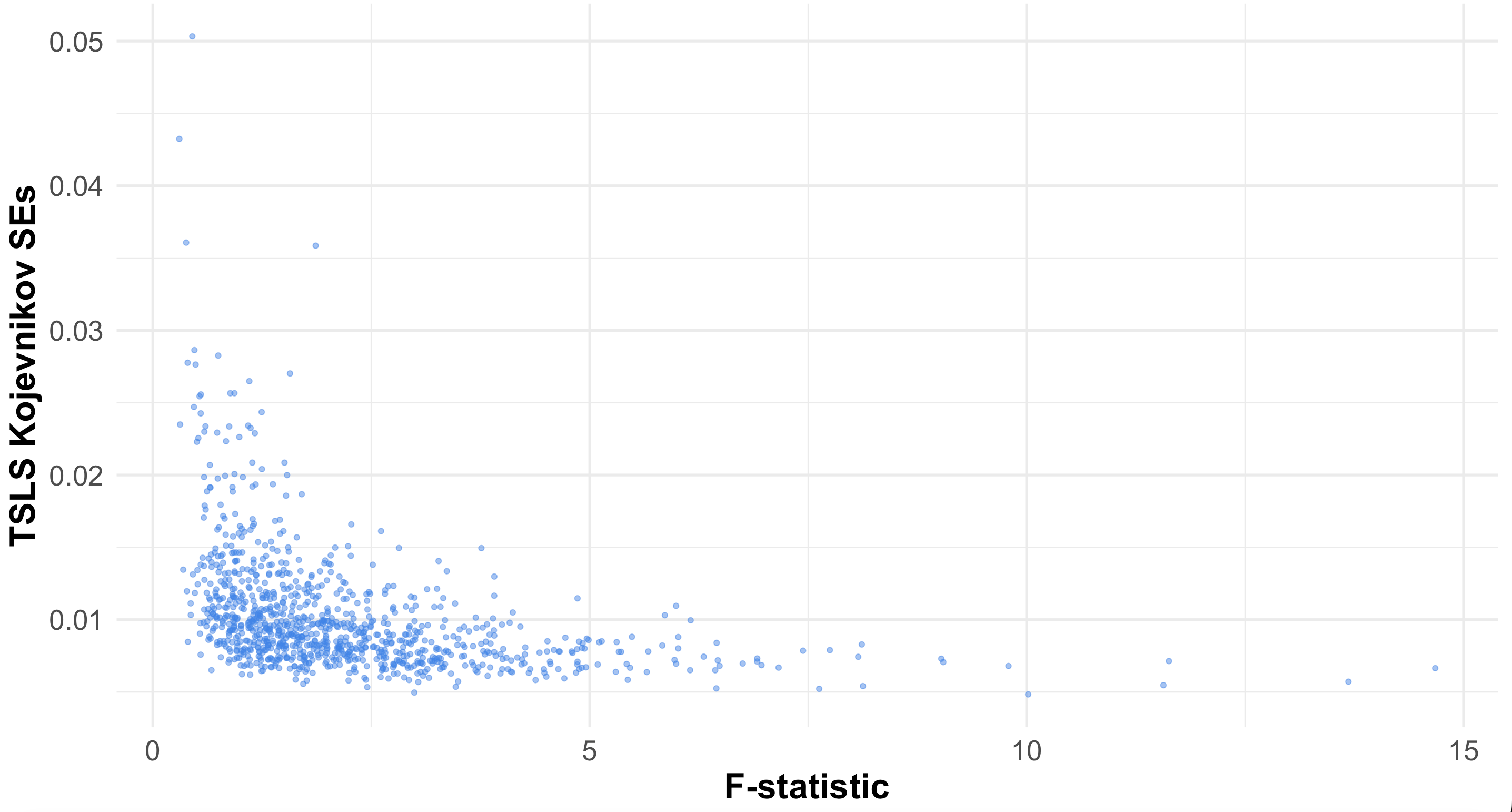}
    \caption{Weak First-Stage ($d=2$)}
    \label{fig:cov_Fstat_weak}
    \end{subfigure}
    \caption{Monte Carlo Simulations Section \ref{S4}: Strong versus Weak First-Stage (Unscaled Model): first-stage F-statistic versus the TSLS Kojevnikov SEs ($\beta_0 = 0.4666$).}
    \label{fig:cov_Fstat_un}
\end{figure}

\end{document}